\PassOptionsToPackage{unicode}{hyperref}
\PassOptionsToPackage{hyphens}{url}
\PassOptionsToPackage{table,dvipsnames,svgnames,x11names}{xcolor}
\documentclass[12pt]{article}
\usepackage{caption}

\usepackage[T1]{fontenc}
\usepackage[utf8]{inputenc}
\usepackage{textcomp}
\usepackage{lmodern}
\usepackage{amsmath,amssymb,amsfonts,amsthm,mathtools}
\IfFileExists{nccmath.sty}{\usepackage{nccmath}}{}
\IfFileExists{bbm.sty}{\usepackage{bbm}}{}

\usepackage{xcolor}
\usepackage{graphicx}
\usepackage{booktabs}
\IfFileExists{nicefrac.sty}{\usepackage{nicefrac}}{}

\usepackage{microtype}
\usepackage{url}
\usepackage{longtable,array}
\usepackage{multirow}
\usepackage{subcaption}
\usepackage{afterpage}
\usepackage[hang,flushmargin]{footmisc}
\usepackage{float}
\usepackage{enumitem}
\IfFileExists{ulem.sty}{\usepackage[normalem]{ulem}}{}

\usepackage{pdflscape}
\IfFileExists{placeins.sty}{\usepackage{placeins}}{}
\providecommand{\FloatBarrier}{}
\usepackage{algorithm}
\usepackage{algorithmic}
\usepackage[authoryear,round]{natbib}
\IfFileExists{xurl.sty}{\usepackage{xurl}}{}
\usepackage{hyperref}
\IfFileExists{cleveref.sty}{%
\usepackage[capitalize,noabbrev]{cleveref}
\crefname{paragraph}{Appendix}{Appendices}
\Crefname{paragraph}{Appendix}{Appendices}
\crefname{subparagraph}{Appendix}{Appendices}
\Crefname{subparagraph}{Appendix}{Appendices}
}{}
\providecommand{\cref}[1]{\autoref{#1}}
\providecommand{\Cref}[1]{\autoref{#1}}

\makeatletter
\def\maxwidth{\ifdim\Gin@nat@width>\linewidth\linewidth\else\Gin@nat@width\fi}
\def\maxheight{\ifdim\Gin@nat@height>\textheight\textheight\else\Gin@nat@height\fi}
\makeatother
\setkeys{Gin}{width=\maxwidth,height=\maxheight,keepaspectratio}

\theoremstyle{plain}
\newtheorem{theorem}{Theorem}
\newtheorem{proposition}[theorem]{Proposition}
\newtheorem{lemma}[theorem]{Lemma}

\theoremstyle{definition}
\newtheorem{definition}[theorem]{Definition}

\theoremstyle{remark}
\newtheorem{remark}[theorem]{Remark}

\newcommand{\N}{\mathbb{N}}

\DeclareMathOperator{\E}{\mathbb{E}}

\allowdisplaybreaks

\makeatletter
\newcommand{\printfnsymbol}[1]{\textsuperscript{\@fnsymbol{#1}}}
\makeatother

\renewcommand{\epsilon}{\varepsilon}

\newcommand{\src}{\mathrm{src}}

\newcommand{\Mult}{\mathrm{Mult}}
\newcommand{\RDC}{\mathrm{RDC}}

\hypersetup{
  pdftitle={Statistical attribute alignment for black-box generative AI via output post-processing},
  pdfauthor={Kevin Jiang; Morgane Austern; Edgar Dobriban; Jason M. Klusowski},
  colorlinks=true,
  linkcolor={black},
  filecolor={Maroon},
  citecolor={Black},
  urlcolor={Black}
}

\title{Statistical attribute alignment for black-box
 generative AI via output post-processing}
\author{Kevin Jiang \qquad Morgane Austern\\[0.5ex]
Edgar Dobriban \qquad Jason M. Klusowski}
\date{September 25, 2026}

\begin{document}
\maketitle

\begin{abstract}
Generative AI systems are increasingly used, but aligning their outputs with user requirements poses a continuing challenge. 
Here, we aim to ensure that the distribution of an attribute of an
AI-generated output aligns with a user-specified target. 
This is motivated by examples such as fairness, where we want to ensure that a protected attribute (e.g., gender, race, or age categories) follows a desired distribution, and synthetic data generation, where we want the generated data to be representative of a target distribution.
We study the practically important black-box access setting, where a user can repeatedly
query a generative AI model.
The goal is to return $m\ge 1$ 
outputs
whose joint attribute distribution is as close as possible to this target.
For both exact and 
approximate alignment, we develop algorithms 
that minimize the expected number of queries to the generator, and we further demonstrate their optimality as the number of requested outputs $m \rightarrow \infty$.
Experiments on text-to-image generation 
and geocoded persona generation tasks show that our post-processing algorithms 
improve statistical attribute alignment, complementing prompting-based interventions.
\end{abstract}

\section{Introduction}\label{sec:intro}

In recent years, 
generative artificial intelligence (genAI) 
models have seen increasing adoption 
across various domains such as text and image generation, code generation, and decision planning \citep{bommasani2021opportunities, brown2020language, ramesh2022hierarchical}. 
An important challenge is to align their outputs with the requirements of a particular application.
Alignment encompasses several objectives, including following instructions, reflecting human preferences, and satisfying safety or value constraints; see the recent surveys \cite{ji2025alignment,pan2025trainingfree}.
Influential approaches include reinforcement learning from human preferences \citep{christiano2017deep}, its application to instruction-following language models \citep{ouyang2022training}, constitutional AI based on principle-guided model feedback \citep{bai2022constitutional}, and direct preference optimization \citep{rafailov2023direct}.
 
However, despite great progress, there are a variety of challenges that remain in ensuring alignment.
 For instance, 
despite their impressive capabilities,
models are known to learn 
historical and representational 
biases present in their training data, 
leading to systematic disparities along protected attributes such as race, gender, and age 
\citep{bolukbasi2016man, buolamwini2018gender, bender2021dangers}. 
These biases 
can raise 
concerns, particularly 
where generative models are deployed in
public-facing settings \citep{barocas2023fairness, mehrabi2021survey}
or used to generate synthetic data for training downstream models and estimators \citep{wyllie2024fairness}.

In this work, we study one particular challenge within the broader alignment landscape, which we term 
\emph{statistical attribute alignment} (SAA): ensuring that the distribution of designated categorical 
attributes in generated outputs agrees with a user-specified target.
The attributes may be selected covariates or response labels, and the target may describe a desired population or sampling design.
Even a model aligned to instructions or preferences may generate attribute distributions that differ from those required by a particular user.

Fairness and synthetic data generation provide two motivating examples.
In a 
use case aiming to ensure fairness, 
the attributes are categorical covariates designated as protected, such as gender, race, or age categories, and the target specifies their desired representation \citep{barocas2023fairness}.
Matching this target addresses a distributional aspect of fairness.
As we will discuss later, the choice of protected attributes and the target is application-dependent.

Generative models also provide a potential source of data for statistical analysis, model training, and evaluation,
especially when collecting or annotating real observations is costly.
Synthetic data have a long history in statistics, including the release of synthetic microdata through multiple imputation \citep{rubin1993disclosure,raghunathan2003multiple}.
Modern generative models have expanded their use in data augmentation; see \citet{chen2024comprehensive,mumuni2024survey} for surveys and \citet{abdel2026harnessing} for a recent review.
AI-generated data may not match the desired distribution, in which case
 additional methods are required to correct such biases, see e.g.,  \cite{angelopoulos2023prediction,bashari2025synthetic,bashari2025general}, etc.
Our focus is on constructing the synthetic sample itself with a specified joint attribute distribution.
For example, a synthetic dataset for auditing a face recognition system should reflect the intended audited population.

A variety of methods for aligning AI models have been developed. 
These methods may intervene on the training data distribution, training and generation procedure, prompts, or outputs.
Their requirements vary with the available model access.
Methods with access to model internals can modify the architecture, training procedure, or generation dynamics
\citep[see for instance][]{xu2018fairgan, yu2020inclusive, choi2020fair, rajabi2022tabfairgan, teo2024fairtl,teo2024fairqueue, friedrich2025auditing},
while
\textit{black-box} and  training-free methods (see e.g.,  \citet{pan2025trainingfree}) 
only require that a user can sample from the model without requiring additional information such as model weights.
Black-box access
can be further divided based on the place of 
intervention, with
\textit{input-side} methods adjusting natural-language prompts
\citep{zhao2021ethical,bansal2022well,neumann2025position}
and 
\textit{output-side} or \textit{post-processing}
algorithms operating on the generated outputs, for example via filtering and runtime monitoring \citep{cano2021fairness, cheng2025runtime}. 

Here we focus on this black-box model access setting. 
This access mode is particularly important
because
many frontier models are distributed primarily through hosted web interfaces and APIs rather than as downloadable model weights.
They may be unavailable for local deployment, and serving models at this scale can require substantial computational resources.\footnote{White- and black-box methods can be combined and used in tandem \cite[see, e.g.,][]{schick2021self, cohen2025black, azam2025plug, gallegos2025self}.}
Within this setting, we select $m \ge 1$ generated outputs so that their joint attribute distribution aligns with a user-specified target.\footnote{We target the joint law of the returned attributes, rather than only their marginal proportions.
For a product target, this retains the sampling variation and dependence structure of independently sampled outputs.
This distinction is useful both for representation across repeated requests and for synthetic datasets used to compute statistics involving several observations, as illustrated by the pairwise audit task in \cref{sec:exp-pairwise-audit}.}
The attributes are categorical and may be covariates
(e.g., protected attributes such as gender or race)
or response labels
(e.g., whether a tumor is benign or malignant).
This approach also allows practitioners to adapt the attribute distribution of a model previously aligned to a different target, without retraining or fine-tuning the model.

Moreover, under black-box access, we do not assume knowledge of the \emph{source rates}, or rates at which the attributes are generated.
For example, when asking a text-to-image model to ``Generate an image of a doctor'',
one typically does not know the true rates at which an image of a male or female doctor is generated. 
These source rates may be different than what a practitioner or user wants (e.g., a uniform distribution over genders).

When the outputs of a black-box genAI method do not match well with the target distribution, we aim to post-process the outputs to bring them closer.
Since the source rates are unknown with black-box access, we will primarily be interested in developing \emph{source-rate-free}
post-processing methods which do not assume knowledge of these rates.
Our strategy will be to query the model multiple times and carefully sample a subset of outputs so that the resulting probability law of the returned attributes is close to a user-specified target distribution.
{\citet{block2023sample} consider a similar approach to obtaining
a single sample whose law approximates a target distribution---however, their rejection sampling 
construction requires access
to the likelihood ratio, which
depends on the unknown source rates.
Additional related work is discussed in \cref{app:arw}.
}

We begin by defining our strictest notion of statistical attribute alignment, called \emph{universal exactness}.
This 
requires the law of the attributes returned by a post-processing algorithm to exactly coincide with the target distribution.
We then show that a coupon collector type algorithm, called \emph{random demand coupon collector} (RDC), is source-rate-free, universally exact and minimizes the expected number of calls to the model among universally exact source-rate-free algorithms, up to first-order terms as the number of requested outputs $m \rightarrow \infty$. 

Universally exact algorithms often require many model calls, especially when there are attributes that occur rarely (see \cref{thm:universal-exact-m-cost-lower} for a formal statement).
Consequently, we relax universal exactness by allowing approximate statistical attribute alignment. 
In particular, we introduce
\emph{statistical alignment discrepancy} as a statistical distance between the attribute law of the
returned outputs and the target distribution.
We then show that a variant of the RDC algorithm, called \emph{thresholded anytime RDC (TA-RDC)}, is source-rate-free and can achieve any user-specified level of statistical alignment discrepancy. 
When the target distribution is a product law and discrepancy is measured by forward KL divergence, we additionally show 
TA-RDC is first-order optimal: 
among source-rate-free algorithms with the same control on the sampling discrepancy, it 
minimizes the number of expected model calls up to first-order terms as the number of requested outputs $m \rightarrow \infty$. 

Synthetic data augmentation also motivates the \emph{count-conditional} alignment criterion underlying our approximate algorithms.
A practitioner using synthetic data for augmentation often selects examples from a single generated dataset, whose attribute counts constrain the samples that can be returned.
The attribute law
underlying statistical alignment discrepancy averages over possible realizations of these counts, so good overall alignment can conceal poor alignment for individual datasets.
We therefore introduce a count-conditional alignment criterion that measures discrepancy given the observed attribute counts, and develop an algorithm that minimizes it for every realized count vector. 
This criterion both provides an interpretable objective for synthetic dataset curation and an upper bound on the statistical alignment discrepancy
(\cref{rem:count-cond-discrep-synth-data-aug,sec:count-cond-discrep-synth-data-aug}).

\Cref{fig:general-methodology-illustrative-figure} gives a high-level view of our black-box
post-processing
approach. 
The user specifies a desired distribution over attributes and
then queries a black-box generator to obtain a candidate batch of outputs. 
Since
a finite batch of samples
may not contain the target attributes in the desired
amounts, 
our procedures select from the available candidates so that the returned
outputs match the target distribution as closely as possible.

\begin{figure}[tbp]
    \centering
    \includegraphics[width=\linewidth]{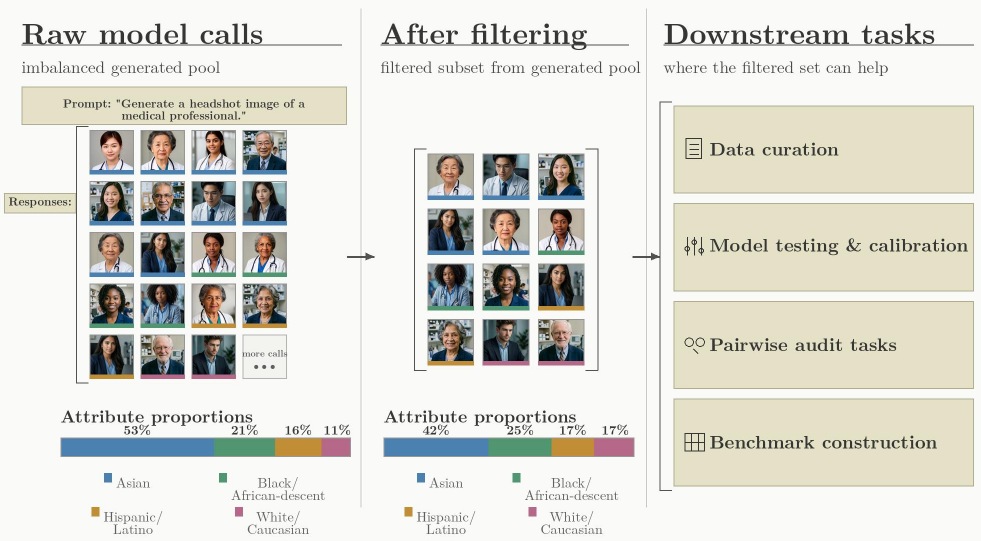}
    \caption{\footnotesize 
    High-level illustration of statistical attribute alignment via black-box output post-processing, using demographic attributes of generated headshots as an example. 
    A user 
    queries a black-box generator to form an annotated candidate batch.
    A black-box post-processing method then filters the
    candidate batch 
    by selecting 
    outputs that match a user-specified
    target distribution (e.g., uniform)
    as closely as possible, 
    subject to what was observed in
    the batch.}
    \label{fig:general-methodology-illustrative-figure}
\end{figure}

We illustrate our methods in a variety of generative AI tasks, ranging from text-to-image generation with multimodal AI models (\texttt{HiDream-O1-Image}, \texttt{Qwen-Image-2512}, \texttt{Flux-2-dev}) to text generation with a frontier language model  (GPT-5.5).
We argue that input-side interventions alone need not solve the problem.
For example, demographic hard prompts, which directly request a protected attribute, can fail, because the models do not follow instructions and the realized attribute may differ from the requested one. 
As post-processing algorithms, our methods, such as RDC and TA-RDC, 
can still substantially improve alignment with the target attribute distribution in this setting.
We further illustrate the use of synthetic datasets in a pairwise audit for face recognition systems. In this setting, matching a target joint law over attributes 
can yield more accurate estimates of downstream quantities, such as the false-match rate, than matching marginal proportions alone.

\section{Problem setup}\label{sec:problem-setup}

For 
a prompt $x\in\mathcal{X}$, 
a generative model $P$
induces a conditional distribution
$P(\cdot \mid x)$ over outputs $y \in \mathcal{Y}$.
 For instance, $x$ could be a textual prompt such as ``Generate an image of a doctor,''
 while $Y$ could be the generated image.
 We assume only black-box sampling access to the generative model (such as through the web or APIs), 
meaning that for any $x$, 
we can sample $Y\sim P(\cdot \mid x)$, but not directly access the internals of the model $P$.

An \textit{annotator} 
$\phi : \mathcal{Y} \to \mathcal{A}$ then maps outputs to a discrete 
set of attribute classes $\mathcal{A}$.
 These may be covariate or response categories in a synthetic data application.
 In the fairness setting, they correspond to protected attributes; for instance, in human images they may describe perceived race, gender, or age.
For notational convenience, identify $\mathcal{A}$ with $[k]=\{1,\dots,k\}$ after relabeling.

Our algorithms aim to align the joint distribution over attributes with a user-specified target for any given prompt $x$
by repeatedly querying the model. 
 When prompting the model and annotating the outputs,
 the attributes follow a certain distribution, which we refer to as the \textit{source rates}. 
 Before querying the model, these are unknown to the users. 
 They involve the randomness induced by sampling
 $Y\sim P(\cdot\mid x)$ from the model, and are defined as 
\begin{equation}\label{eq:src-rates}
    p_{i} \coloneqq p_{i}(x) \coloneqq \Pr_{Y\sim P(\cdot\mid x)}\bigl(\phi(Y)=i\bigr),
    \quad i\in\mathcal{A}.
\end{equation}
We collect them into the vector $p \coloneqq p(x) \coloneqq(p_{i})_{i\in\mathcal{A}}\in\Delta^{|\mathcal{A}|-1}$.
 These probabilities may differ from the target population or sampling design.
 In fairness applications, they might reflect undesirable historical biases learned by the model $P$.
 For instance, some races, genders, or ages might be unduly overrepresented in the generated images. 

Let the \textit{target joint law} $\nu \in \Delta(\mathcal{A}^m)$ be a probability law over sequences of attributes.
In the important special case where $\nu$ is a product law, let the
\textit{target rates} be
$q\coloneq(q_{i})_{i\in\mathcal{A}}\in\Delta^{|\mathcal{A}|-1}$ (e.g., the uniform distribution $q_{i} = 1/k, ~\forall i\in[k]$)
so that $\nu = q^{\otimes m}$.\footnote{The setting where
we are interested in aligning the distribution over a subset of attributes
$\mathcal{A}'\subseteq \mathcal{A}$ 
is  
subsumed by putting zero probability mass on the indices $\mathcal{A}\setminus \mathcal{A}'$ associated with $q$.
In particular, if the generative model has a nonzero chance of outputting an attribute that does not belong to $\mathcal{A}'$,
we can append a category named $\{\texttt{other}\}$ to $\mathcal{A}'$ so that $\mathcal{A}'\subset \mathcal{A}'\sqcup\{\texttt{other}\} =: \mathcal{A}$
and
put zero probability mass on its associated index in $q$.\label{ftnote:restricted-supp}} 
These probabilities are set by the user according to the application, for example to reflect a target population or a desired demographic composition. 
They could also depend on the input $x$, but we will typically only consider the simpler case when they do not.
A prototypical example might be to require that genders appear with equal probability when generating images of a particular occupation, such as doctors. 
In synthetic data generation, $q$ may instead encode the prevalence of response attributes or covariate strata in a specified population.
The specific way to set these probabilities is highly context-dependent \citep[see e.g.,][]{choi2020fair,de2023diffusionworldviewer}.

Our goal is to develop algorithms that apply regardless of the specific user choice of these probabilities, and we do not take a position about what specific choices are more or less desirable.
In addition, we are aiming
to preserve the distribution \emph{given the attributes}.
For example, when we output an image of a female physician, we want this image to follow the \emph{same} distribution as when it is generated from the original model (given the original output being a female physician).
This preserves within-attribute variation while changing the attribute distribution. 

The class of black-box post-processing algorithms we consider involves
two forms of sampling:
model calls, i.e., 
sampling from the conditional distribution $P(\cdot \mid x)$, 
and 
filtering among the generated outputs, involving additional randomness.
 Sampling from the model is usually more expensive, because it may require paying a fee to the model provider. 
 To clearly distinguish these two forms of sampling, we introduce the following terminology. 

\begin{definition}[M- and E-sampling]\label{def:M-E-sampling}
    For fixed $x$, 
    a model sample or \emph{M-sample} is a 
    query to the generative model which produces an output $Y\sim P(\cdot\mid x)$, 
    and hence an induced attribute $A=\phi(Y)$.
    An exogenous sample or \emph{E-sample} is any randomization that does not require querying $P$ and is independent of future M-samples, conditional on the current algorithmic history; e.g., drawing a uniform random variable $U\sim\mathrm{Unif}[0,1]$ or a random index $I\in\mathcal{A}$ as $I\sim Q$.
\end{definition}

A black-box post-processing algorithm may interweave M-sampling and E-sampling.
More formally, we 
consider the following class of algorithms. 

\begin{definition}[Black-box post-processing algorithms]\label{def:black-box-post-processing-algs}
Fix the number $m \ge 1$ of outputs  to produce.
\textit{Black-box post-processing algorithms} $\textrm{Alg}_m :\mathcal{P} \times \mathcal{X} \rightarrow \mathcal{Y}^m$---whose totality we denote by $\mathfrak{A}_m$---  take as input a generative model $P$ accessible 
through sampling.
For any input query $x$, 
such an algorithm sequentially queries the generative model to observe outcomes \(Y_1,\dots,Y_N \overset{\text{i.i.d.}}{\sim} P(\cdot \mid x)\) until a finite random number $N$
of M-samples has been obtained, where $N \geq m$ a.s. and $N$ is a stopping time with respect to the generated samples and any pre-output E-sampling. 
This produces annotated attributes 
\(A_t \coloneqq \phi(Y_t)\in\mathcal{A},~1\le t\le N \) with law 
\( A_1 \ldots A_N \overset{\text{i.i.d.}}{\sim} \textrm{Categorical}(p)\)\footnote{We discuss the setting with a noisy annotator in \cref{sec:discussion}.}.
 The algorithm may then perform arbitrary computations, including 
any number of E-samples, to  
 return a sample $(\widetilde Y_1,\dots,\widetilde Y_m)$ of size $m$ chosen without replacement from the observed M-samples, with associated annotated attributes $\widetilde A_j \coloneqq \phi(\widetilde Y_j), 1\le j\le m$.
 
 If the algorithm $\textrm{Alg}_m$ 
 does not use knowledge about the source rates $p$, then we further call it a \emph{source-rate-free} black-box post-processing algorithm, and denote the totality of such algorithms as $\mathfrak{A}_{m}^o \subset \mathfrak{A}_m$. 
\end{definition}

The post-processing algorithm $\textrm{Alg}_m \in \mathfrak{A}_m$ 
induces a distribution over the annotations of its final outputs
    \begin{align}\label{eq:output-label-law}
         P_{\textrm{Alg}_m}(\widetilde A = \widetilde a \mid x)
         \coloneqq
         \Pr(\widetilde A_1=a_1,\ldots,\widetilde A_m=a_m\mid x)
\end{align}
for $(a_1,\ldots,a_m)\in\mathcal A^m$. 
We call this distribution the \emph{output attribute law} of $\textrm{Alg}_m$.
The conditional probability integrates over every algorithmic random variable not determined by the input prompt $x$,
including any M-sampling to obtain outcomes $Y_1,\ldots,Y_N$
as well as any E-sampling performed before and after the outcomes are realized.

In practice,
M-sampling acts as the computational bottleneck.  
We thus measure the efficiency of these post-processing algorithms through the expected number of calls to the black-box model, which we call the \emph{M-cost}. 
Although the source rates are assumed to be unknown, the M-cost itself may depend on the source and target rates.
\begin{definition}[M-cost]\label{def:m-cost}
    The (source-rate pointwise) \emph{M-cost} of $\textrm{Alg}_m \in \mathfrak{A}_m$ is the expected number of calls to the model, defined as
    \(
        \mathcal{C}_m^{\textrm{Alg}_m}(p) \equiv \mathcal{C}_m^{\textrm{Alg}_m} (p,q) \coloneqq \mathbb{E}[N]
    \) computed over all problem randomness.
\end{definition}

{\bf Goal and target joint law.}
Since we want to handle black-box generative models which can only be accessed through sampling,
we do \emph{not} assume the source rates $p$ are known. 
Thus,
our primary goal is to find source-rate-free black-box post-processing algorithms $\textrm{Alg}_m\in\mathfrak{A}_m^o$
whose output attribute law (cf.~\cref{eq:output-label-law}) either approximates or is exactly
a user-specified target joint law $\nu$ over the space of attribute sequences $\mathcal{A}^m = [k]^m$.
For example, this may be
the law of an i.i.d.~sample of size $m$ from $q$, so that
$\nu = q^{\otimes m}$.

Among these algorithms that attain the target law exactly or 
for a specified level of approximation 
(we later specify how to quantify approximation error in \cref{sec:approx-sampling-fairness}), 
we further seek to characterize those that minimize the M-cost.

\section{Exact statistical attribute alignment}
\label{sec:exact-sampling}

We begin by defining the 
class of black-box post-processing algorithms that attain the target law exactly for arbitrary source rates $p \in \Delta^{k-1}$. 
Throughout this section, assume the support condition $\operatorname{supp}(\nu)\subseteq(\operatorname{supp}(p))^m$ unless otherwise explicitly stated.

\begin{definition}[Universal exactness]
\label{def:universal-exactness}
A post-processing algorithm $\textrm{Alg}_m \in \mathfrak{A}_m$
is 
\emph{universally exact} for the target joint law if its output attribute law 
$P_{\textrm{Alg}_m}(\widetilde{A} = \widetilde a \mid x)
=\nu(\widetilde a)$ (cf.~\cref{eq:output-label-law})
for every source rate $p$ satisfying $\operatorname{supp}(\nu)\subseteq(\operatorname{supp}(p))^m$.
Let $\mathfrak A_m(\nu) \subset \mathfrak{A}_m$ 
denote the class of universally exact algorithms with finite M-cost, and 
$\mathfrak A_m^o(\nu) \subset \mathfrak A_m(\nu)$ the subclass of those algorithms that are source-rate-free.
\end{definition}

For $t\ge0$, define the \emph{source count process} $\vec C(t)$
and its \emph{feasible support} $\Omega_t$ by
\begin{equation}\label{eq:feasible-support}
    C_i(t)=\sum_{s=1}^t1\{A_s=i\},\qquad
    \Omega_t=\Bigl\{\vec a\in[k]^m:
    \#\{j\in[m]:a_j=i\}\le C_i(t),\forall i\in[k]\Bigr\}.
\end{equation}
A black-box post-processing algorithm selects $m$ outputs from the first
$t\ge m$ queries and must thus return an attribute sequence in
$\Omega_t\equiv\Omega_m(\vec C(t))$.

The random demand coupon collector (RDC) algorithm
is a source-rate-free algorithm that 
first samples a target attribute sequence from the target joint law and then M-samples until the target sequence enters the feasible support. In other words, it repeatedly M-samples until the stopping time
\( T_{\nu}^\star
    \coloneq \inf\{t\ge m: \vec A^\star \in \Omega_t\} \); see \cref{alg:rdc}. 

\begin{algorithm}[tbp]
\caption{\footnotesize \textsc{Random Demand Coupon Collector} (RDC)}
\label{alg:rdc}
\footnotesize
\begin{algorithmic}[1]
\REQUIRE Output size $m$, target joint law $\nu$ on $[k]^m$,
annotator $\phi$.
\ENSURE Returned outputs $(\widetilde Y_1,\ldots,\widetilde Y_m)$.

\STATE E-sample $\vec A^\star\sim \nu$.
\STATE Query M-samples and construct the feasible support $\Omega_t$ as in \cref{eq:feasible-support}
until the stopping time
\[
    T_{\nu}^\star
    \coloneq \inf\{t\ge m: \vec A^\star \in \Omega_t\}.
\]
\STATE \textbf{return} outputs
$(\widetilde Y_1,\ldots,\widetilde Y_m)$
with attribute sequence $\vec A^\star$, selected uniformly without
replacement within each attribute from $(Y_1,\ldots,Y_{T_{\nu}^\star})$.
\end{algorithmic}
\end{algorithm}

\begin{theorem}[Exact source-rate-free unconditional sampling]
\label{thm:rdc-exact-main}
RDC is a source-rate-free black-box post-processing algorithm that terminates almost surely and is universally exact.
\end{theorem}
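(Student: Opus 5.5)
The plan is to check the three claims in \cref{thm:rdc-exact-main} in turn: that RDC is a source-rate-free algorithm in $\mathfrak{A}_m^o$, that $T_\nu^\star<\infty$ almost surely, and that $P_{\textrm{RDC}}(\widetilde A=\widetilde a\mid x)=\nu(\widetilde a)$.

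\emph{Membership in $\mathfrak{A}_m^o$.} The only inputs RDC uses are $m$, $\nu$, $\phi$, the E-sample $\vec A^\star$, and the observed attributes. In particular, it never uses $p$. The E-sample $\vec A^\star\sim\nu$ is drawn before any model query, so it is independent of the M-samples $Y_1,Y_2,\dots$, as \cref{def:M-E-sampling} requires. The event $\{T_\nu^\star\le t\}=\{\vec A^\star\in\Omega_t\}$ is a function of $\vec A^\star$ and $A_1,\dots,A_t$ only, so $T_\nu^\star$ is a stopping time with respect to the generated samples together with the pre-output E-sampling, and $T_\nu^\star\ge m$ by construction. At time $T_\nu^\star$ we have $\vec A^\star\in\Omega_{T_\nu^\star}$. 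This means that for every $i$, the count $C_i(T_\nu^\star)$ is at least the number of coordinates of $\vec A^\star$ equal to $i$, so the without-replacement selection within each attribute in step~3 is well defined. Since each selected output $\widetilde Y_j$ has $\phi(\widetilde Y_j)=A^\star_j$, the returned attribute sequence is exactly $\vec A^\star$.

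\emph{Almost sure termination.} I will condition on $\vec A^\star=\vec a$ with $\nu(\vec a)>0$. The support condition $\operatorname{supp}(\nu)\subseteq(\operatorname{supp}(p))^m$ then gives $p_{a_j}>0$ for every $j$. Let $d_i=\#\{j:a_j=i\}$. The stopping time is $T=\max\bigl(m,\max_{i:d_i>0}\tau_i\bigr)$, where $\tau_i$ is the time of the $d_i$-th occurrence of attribute $i$ among the i.i.d.\ draws $A_1,A_2,\dots$. Each $\tau_i$ has a negative binomial law, so it is a.s.\ finite with $\mathbb E\tau_i=d_i/p_i<\infty$. A maximum of finitely many a.s.\ finite variables is a.s.\ finite, and the bound $\mathbb E[T\mid\vec A^\star=\vec a]\le m+\sum_i d_i/p_i$ shows the conditional expectation is finite as well. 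Integrating over $\vec a$, a finite mixture since $\mathcal A^m$ is finite, gives a.s.\ termination. It also gives finite M-cost, which places RDC in $\mathfrak A_m^o(\nu)$.

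\emph{Universal exactness.} This step is immediate once it is set up correctly: by the previous two steps, $\widetilde A=\vec A^\star$ almost surely, so $P_{\textrm{RDC}}(\widetilde A=\widetilde a\mid x)=\Pr(\vec A^\star=\widetilde a)=\nu(\widetilde a)$ for every $p$ satisfying the support condition. The only subtle point, and the one I expect to need the most care, is making sure the termination event does not bias the law of $\vec A^\star$. This is resolved by independence: for every $\vec a$ in the support, termination holds with conditional probability one, so conditioning on termination leaves the law of $\vec A^\star$ equal to $\nu$. As a complementary remark, the within-attribute uniform selection together with the i.i.d.\ structure of the $Y_t$ means that, given $\widetilde A$, the returned outputs follow $P(\cdot\mid x,\phi(Y)=a_j)$. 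This is not needed for the theorem but supports the paper's aim of preserving the distribution given the attributes.
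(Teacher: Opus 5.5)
Your proposal is correct and follows essentially the same route as the paper. Like the paper, you reduce $T_\nu^\star$ to the per-attribute demand times $\inf\{t: C_i(t)\ge L_i\}$, use the support condition so that every demanded attribute has $p_i>0$ and is therefore completed a.s.\ in finite time, and observe that the returned attribute sequence is exactly $\vec A^\star\sim\nu$. Your extra verifications are sound details that the paper's proof leaves implicit: the stopping-time property, the well-definedness of the selection, and the finite conditional mean bound $m+\sum_i d_i/p_i$, which gives finite M-cost and hence membership in $\mathfrak A_m^o(\nu)$.
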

The proof is given in \cref{prf:rdc-exact-main}. RDC is source-rate-free as it does not use knowledge of the source rates $p$, and it is 
universally exact by construction of its random stopping time $T_\nu^\star$.

Denote 
the pointwise M-cost of the RDC algorithm
$\mathcal C_m^{\RDC}(p,\nu)=\mathbb E_p[T_{\nu}^\star]$
and the pointwise optimal universal cost for source-rate-free algorithms as
$   \mathcal C_m^{\mathrm{univ}}(p,\nu)
    =\inf_{\mathrm{Alg}_m\in\mathfrak A_m^o(\nu)}\mathbb E_p[N].
$
By definition, 
$ 
    C_m^{\mathrm{univ}}(p,\nu)
    \le
\mathcal C_m^{\RDC}(p,\nu)
$.
In the setting with a single returned output $m=1$, it turns out that RDC is optimal. 
\begin{theorem}[Exact optimality for one returned output]
\label{thm:rdc-m1-optimal}
For $m=1$, 
\(\mathcal C_1^{\mathrm{univ}}(p,\nu)
    =\mathcal C_1^{\RDC}(p,\nu)
    =\sum_{i\in \textrm{supp}(\nu)}{\nu_i}/{p_i}\).
Thus, for $m=1$, RDC is pointwise optimal 
at every source rate satisfying $\operatorname{supp}(\nu)\subseteq(\operatorname{supp}(p))^m$.
\end{theorem}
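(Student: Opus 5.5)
The argument has two halves: compute the RDC cost exactly, then show that every universally exact source-rate-free algorithm pays at least that much at each $p$.

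\textbf{RDC cost.} For $m=1$, RDC first draws $A^\star\sim\nu$. It then M-samples until the first $t$ with $A_t=A^\star$. Conditional on $A^\star=i$, the stopping time is geometric with success probability $p_i>0$; the support condition guarantees $p_i>0$ for every $i\in\operatorname{supp}(\nu)$. Averaging over $A^\star$ gives $\mathcal C_1^{\RDC}(p,\nu)=\sum_{i\in\operatorname{supp}(\nu)}\nu_i/p_i$.

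\textbf{Lower bound.} Fix $\mathrm{Alg}_1\in\mathfrak A_1^o(\nu)$ and a source rate $p$ with $\operatorname{supp}(\nu)\subseteq\operatorname{supp}(p)$. I will attach to the run an event of the form ``label $i$ is returned using the first observation of label $i$'' and count samples through a Wald-type identity.

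\emph{Step 1: a conditional-exchangeability identity.} Call each M-sample an ``$i$-trial'', a Bernoulli$(p_i)$ event. Because the algorithm's decisions are adapted and $N$ is a stopping time, Wald's identity gives $\mathbb E[C_i(N)]=p_i\,\mathbb E[N]$ for each $i$. This requires $\mathbb E[N]<\infty$, which holds because the algorithm is in $\mathfrak A_1^o(\nu)$ and so has finite M-cost.

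\emph{Step 2: a returned label must have been observed.} The algorithm outputs label $i$ only if $C_i(N)\ge1$. Hence $\nu_i=\Pr(\widetilde A=i)\le\Pr(C_i(N)\ge1)\le\mathbb E[C_i(N)]=p_i\,\mathbb E[N]$. This yields $\mathbb E[N]\ge\max_i\nu_i/p_i$, which is weaker than the claim. The sum has to come from the fact that only one output is returned.

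\emph{Step 3: sharpen to the sum.} Counting every observation of label $i$ is too generous. I will instead count only the observations of $i$ that could lead to returning $i$, and show these contribute disjointly.

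The key intermediate claim is that, for a source-rate-free universally exact algorithm, returning label $i$ requires ``$\nu_i/p_i$ worth'' of samples. I plan to prove it by varying the source rate. Since the algorithm is source-rate-free, its decision rule is a fixed function of the observed sequence and the E-randomness, and it must be exact for every $p'$. Take $p'$ concentrating mass on label $i$ alone, or mixing $i$ with one other label. Exactness at such $p'$ forces the algorithm, before returning $i$, to reject observations of $i$ with specific probabilities; this mirrors the classical fact that a Bernoulli factory or exact sampler from an unknown coin cannot do better than rejection.

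Concretely, I will write $\mathbb E[N]=\sum_{t\ge1}\Pr(N\ge t)$. For each $t$, the event $\{N\ge t\}$ is determined by the history up to $t-1$. Let $\pi_t(i)$ be the probability that the algorithm eventually returns the sample drawn at time $t$ and that this sample has label $i$. Then $\sum_t\pi_t(i)=\nu_i$. Also $\pi_t(i)\le p_i\Pr(N\ge t)$, because the label of the $t$-th sample is independent of $\{N\ge t\}$. If the returned sample is the $t$-th one, the events are disjoint across $i$ for fixed $t$, giving $\sum_i\pi_t(i)\le\Pr(N\ge t)\cdot\sum_i p_i\cdot(\ldots)$. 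This alone does not close the argument.

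The cleaner route is to reweight by $1/p_i$ and define $w_t=\sum_i\pi_t(i)/p_i$. Disjointness gives $\sum_i\pi_t(i)/p_i\le\Pr(N\ge t)$ provided the return decision for sample $t$ is made with probability at most $\Pr(N\ge t)$ across labels in a label-balanced way. Summing over $t$ would then give $\sum_i\nu_i/p_i\le\mathbb E[N]$.

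\textbf{Main obstacle.} The difficulty is justifying the reweighted inequality $\sum_i\pi_t(i)/p_i\le\Pr(N\ge t)$. It fails for algorithms that know $p$ and would select labels by rejection, so source-rate-freeness and universal exactness must enter here. My first attempt is to show that, for a source-rate-free exact algorithm, the conditional probability of returning sample $t$ given its label is $i$ and given the past cannot depend on $i$ in a way correlated with $p$, and that exactness for all $p$ forces this probability to be proportional to $\nu_i$ divided by a quantity independent of $p$. If that fails, I will fall back on a likelihood-ratio/polynomial-identity argument. For finite-horizon truncations, $\Pr(\widetilde A=i)$ is a polynomial-type function of $p$ identically equal to $\nu_i$. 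Differentiating in $p$ (a Cramér–Rao-like argument) should give a lower bound on $\mathbb E[N]$ whose optimum at each $p$ is $\sum_i\nu_i/p_i$.
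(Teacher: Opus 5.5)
Your RDC cost computation is correct and matches the paper. Step~2 is also valid, though weaker than needed. The gap is in Step~3: the per-time inequality $\sum_i \pi_t(i)/p_i \le \Pr(N\ge t)$ that you plan to justify is false for universally exact source-rate-free algorithms, so neither of your routes to it can work.

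Take $k=2$, $\nu=(1/2,1/2)$, and the following algorithm. It observes $A_1$ and keeps querying until the first time $T$ with $A_T\ne A_1$. If $T=2$ it returns $Y_1$; otherwise it returns $Y_1$ or $Y_T$ with probability $1/2$ each. Then $\Pr_p(\widetilde A=1)=p_1p_2+p_1^2/2+p_2^2/2=1/2$ for every interior $p$, so the algorithm lies in $\mathfrak A_1^o(\nu)$. Yet $\pi_1(1)/p_1+\pi_1(2)/p_2=(p_2+p_1/2)+(p_1+p_2/2)=3/2>1=\Pr(N\ge1)$.

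The decision to return sample $t$ may be made after later samples are seen, so credit cannot be allocated time by time. Moreover, the probability of returning sample $t$ given its label genuinely depends on $p$ (here it is $p_2+p_1/2$ for label $1$), which contradicts your ``first attempt''. Your fallback also fails as stated: a truncated algorithm is not exact, so $\Pr(\widetilde A=i)$ is not identically $\nu_i$ for truncations.

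The missing idea is to condition on the returned label instead of the time index. The paper proves the conditional Wald identity $\mathbb E_p[C_i(N)\mid\widetilde A]=p_i\,\mathbb E_p[N\mid\widetilde A]$ in three steps:
\begin{itemize}
\item Because the algorithm is source-rate-free, changing measure to any $p'$ with the same support reweights a run by $\prod_j(p'_j/p_j)^{C_j(N)}$. Exactness at $p'$ then gives $\mathbb E_p\bigl[\prod_j(p'_j/p_j)^{C_j(N)}\bigm|\widetilde A=a\bigr]=1$.
\item Jensen shows that $p'\mapsto\sum_j\mathbb E_p[C_j(N)\mid\widetilde A=a]\log(p'_j/p_j)$ is maximized at $p'=p$.
\item Stationarity on the simplex forces $\mathbb E_p[C_j(N)\mid\widetilde A=a]/p_j$ to be constant in $j$, hence equal to $\mathbb E_p[N\mid\widetilde A=a]$.
\end{itemize}
This is the score identity your ``differentiate in $p$'' idea gestures at, but applied on the event $\{\widetilde A=a\}$ of the untruncated algorithm. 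With it, $C_i(N)\ge1$ on $\{\widetilde A=i\}$ gives $\mathbb E_p[N\mid\widetilde A=i]\ge1/p_i$. Averaging over $\widetilde A\sim\nu$ then yields $\mathbb E_p[N]\ge\sum_{i}\nu_i/p_i$.
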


For general $m\ge1$, RDC is, in general, not optimal, but satisfies the following bound instead.
For $j\in[m]$ and $i\in[k]$, let \(\nu_j(i)\coloneqq \nu\{\vec a\in[k]^m:a_j=i\}\)
be the probability that the $j$th coordinate of an attribute sequence $\vec A \sim \nu$
belongs to attribute $i$.
\begin{theorem}[Universal source-rate-free lower bound and RDC upper bound]
\label{thm:rdc-witness-and-upper}
We have
\begin{equation}\label{eq:rdc-main-sandwich}
    \mathcal C_m^{\RDC}(p,\nu)
    \le
    \mathcal{C}_m^{\mathrm{univ}}(p,\nu)
    +\left(
        \sum_{i\in \operatorname{supp}(p)} \left( \sum_{j=1}^m \nu_j(i) \right) \cdot  \frac{1-p_i}{p_i^2}
    \right)^{1/2}.
\end{equation}
\end{theorem}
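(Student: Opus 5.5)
The plan is to sandwich both quantities around the common ``fluid'' benchmark
\[
  \Lambda(p,\nu) \coloneqq \mathbb E_{\vec A\sim\nu}\Bigl[\max_{i\in\operatorname{supp}(p)} D_i/p_i\Bigr],
  \qquad D_i \coloneqq \sum_{j=1}^m 1\{A_j=i\}.
\]
Here $\mathbb E[D_i]=\sum_{j=1}^m \nu_j(i)$. The goal is $\mathcal C_m^{\mathrm{univ}}(p,\nu)\ge\Lambda(p,\nu)$ and $\mathcal C_m^{\RDC}(p,\nu)\le \Lambda(p,\nu)+\bigl(\sum_i \mathbb E[D_i](1-p_i)/p_i^2\bigr)^{1/2}$; together these give \cref{eq:rdc-main-sandwich}.

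\textbf{Upper bound for RDC.} For each attribute $i$ let $T_i(n)$ be the time of the $n$-th M-sample with attribute $i$, with $T_i(0)=0$. By construction of $T_\nu^\star$ in \cref{alg:rdc}, $T_\nu^\star=\max\{m,\max_i T_i(D_i^\star)\}$, where $D^\star$ is the count vector of $\vec A^\star$, and $\max_i T_i(D_i^\star)\ge\sum_i D_i^\star= m$ anyway. The key point is that $D^\star$ is independent of the M-sample stream, because it is E-sampled first. Conditionally on $D^\star=d$, $T_i(d_i)$ is a sum of $d_i$ i.i.d.\ Geometric$(p_i)$ variables, with mean $d_i/p_i$ and variance $d_i(1-p_i)/p_i^2$. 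Write $T_i(d_i)=d_i/p_i+\xi_i$. Then
\[
  \max_i T_i(d_i)\le \max_i d_i/p_i+\max_i(\xi_i)_+,
\]
and I bound $\mathbb E\max_i(\xi_i)_+\le \mathbb E\bigl(\sum_i \xi_i^2\bigr)^{1/2}\le\bigl(\sum_i d_i(1-p_i)/p_i^2\bigr)^{1/2}$ by Jensen. Averaging over $d\sim D^\star$ and applying Jensen once more to move the expectation inside the square root gives the claimed upper bound with weights $\mathbb E[D_i]=\sum_j\nu_j(i)$.

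\textbf{Lower bound for universally exact algorithms.} Take any $\mathrm{Alg}_m\in\mathfrak A_m^o(\nu)$, with returned count vector $\widetilde D$. Exactness gives $\widetilde D\overset{d}{=}D$, so $\mathbb E\max_i\widetilde D_i/p_i=\Lambda$. Feasibility gives $C_i(N)\ge\widetilde D_i$ for every $i$. For the analysis only (this is allowed to depend on $p$), fix a selector $\iota(d)\in\arg\max_i d_i/p_i$. Then
\[
  N\ge \frac{C_{\iota(\widetilde D)}(N)}{p_{\iota(\widetilde D)}}-\Bigl(\frac{C_{\iota(\widetilde D)}(N)}{p_{\iota(\widetilde D)}}-N\Bigr),
\]
and it remains to show that the correction term has nonpositive expectation. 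For a fixed index $i$ this follows from Wald's identity, since $C_i(t)-p_it$ is a martingale and $N$ is a stopping time with $\mathbb E N<\infty$. The difficulty is that $\iota(\widetilde D)$ is chosen after seeing the samples. I would try to remove this adaptivity in the following order.
\begin{enumerate}
  \item Decompose $\sum_i 1\{\iota(\widetilde D)=i\}\bigl(C_i(N)-p_iN\bigr)$ and exploit source-rate-freeness. Since the algorithm cannot use $p$, the selection of $\widetilde D$ is a $p$-independent function of the observed history, while exactness must hold simultaneously for all $p$.
  \item Failing that, control the adaptive correction by $\mathbb E\max_i\sup_{t\le N}(C_i(t)-p_it)_+/p_i$ via Doob's $L^2$ inequality. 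This would produce an error of the same form, up to constants, as the $\sqrt{\cdot}$ term; it might then be necessary to put the fluctuation slack entirely on the lower-bound side and use the exact identity $\mathbb E T_\nu^\star=\mathbb E\max_i T_i(D_i^\star)$ on the RDC side.
\end{enumerate}

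\textbf{Main obstacle.} I expect the main obstacle to be this lower-bound step: justifying $\mathcal C^{\mathrm{univ}}_m\ge\Lambda$ (or $\Lambda$ minus a fluctuation term) when the returned demand vector is correlated with the sample path. The RDC upper bound is routine once independence of $\vec A^\star$ from the M-samples is used.
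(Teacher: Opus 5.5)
\textbf{Verdict: genuine gap in the lower bound.} Your RDC upper bound is correct and matches the paper's argument: independence of the E-sampled demand from the source stream, negative-binomial means and variances, $\max_i(\xi_i)_+\le(\sum_i\xi_i^2)^{1/2}$, and Jensen twice. The gap is the lower bound $\mathcal C_m^{\mathrm{univ}}(p,\nu)\ge\Lambda(p,\nu)$. You correctly identify it as the crux, but neither of your two routes establishes it.

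\textbf{Why this step needs more than single-$p$ exactness.} $\Lambda$ is not a lower bound for algorithms that use $p$. For $\nu=q^{\otimes m}$, Oracle RS is universally exact with cost $m\max_i q_i/p_i$, which is typically strictly below $\Lambda$ by Jensen. Nor is $\Lambda$ a lower bound for algorithms that are exact at a single $p$: direct sampling at $p=q$ costs $m$, typically below $\Lambda$. So the proof must use exactness at every $p'$ near $p$ together with the $p$-independence of the algorithm.

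\textbf{The missing idea is a change of measure.} Fix $\vec a$ with $\nu(\vec a)>0$ and $p'$ with $\operatorname{supp}(p')=\operatorname{supp}(p)$. Because the stopping rule, the E-sampling and the selection rule do not depend on the source rates, the likelihood ratio of the stopped history under $p'$ relative to $p$ is $\prod_i(p_i'/p_i)^{C_i(N)}$. Hence $\Pr_{p'}(\widetilde A=\vec a)=\Pr_p(\widetilde A=\vec a)=\nu(\vec a)$ becomes $\mathbb E_p\bigl[\prod_i(p_i'/p_i)^{C_i(N)}\mid\widetilde A=\vec a\bigr]=1$ for all such $p'$. Jensen's inequality then shows that $p'\mapsto\sum_i\mathbb E_p[C_i(N)\mid\widetilde A=\vec a]\log p_i'$ is maximized over this face of the simplex at $p'=p$. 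Stationarity gives the conditional Wald identity $\mathbb E_p[C_i(N)\mid\widetilde A]=p_i\,\mathbb E_p[N\mid\widetilde A]$; the conditional means are finite since $\mathbb E_p[N]<\infty$. Feasibility, $C_i(N)\ge\#\{j:a_j=i\}$ on $\{\widetilde A=\vec a\}$, then yields $\mathbb E_p[N\mid\widetilde A=\vec a]\ge\max_i\#\{j:a_j=i\}/p_i$. Averaging over $\widetilde A\sim\nu$ gives exactly $\Lambda$. In your notation, the correction term in your item 1 has expectation zero because $\iota(\widetilde D)$ is a function of $\widetilde A$. This is the precise content behind your remark that exactness must hold for all $p$ simultaneously.

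\textbf{Why item 2 cannot replace this.} Bounding the adaptive correction by $\mathbb E\max_i(C_i(N)-p_iN)_+/p_i$ and using Wald's second identity gives only $\mathbb E_p[N]\ge\Lambda-\bigl(\mathbb E_p[N]\sum_i(1-p_i)/p_i\bigr)^{1/2}$. Going through Doob's maximal inequality adds an extra factor $2$. That argument uses exactness at the single rate $p$ only, and by the direct-sampling example no argument of that type can reach $\Lambda$. Combined with your RDC bound, it produces a second, differently shaped $O(\sqrt m)$ error term. That suffices for first-order optimality, but not for \eqref{eq:rdc-main-sandwich} with its stated additive term.

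Shifting the slack to the lower-bound side does not help either. Since $\mathcal C_m^{\RDC}(p,\nu)\ge\Lambda$ by Jensen, any slack $s$ in the lower bound would have to be absorbed by proving the sharper estimate $\mathcal C_m^{\RDC}(p,\nu)-\Lambda\le\bigl(\sum_i\mathbb E[D_i](1-p_i)/p_i^2\bigr)^{1/2}-s$. Your proposal provides no mechanism for that.
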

The proofs of \cref{thm:rdc-m1-optimal,thm:rdc-witness-and-upper} center around a version of Wald's identity 
that conditions on the returned attribute sequence of a universally exact algorithm; see \cref{lem:conditional-wald-main} in \cref{prf:rdc-m1-optimal}.
Since \(\sum_{i\in \operatorname{supp}(p)} \left( \sum_{j=1}^m \nu_j(i) \right) = m\), the additive term in
\cref{eq:rdc-main-sandwich} is $O(\sqrt{m})$. 
Every post-processing algorithm requires at least $m$ model queries,
so dividing both sides of the 
inequality by $m$ implies \({\mathcal C_m^{\mathrm{univ}}(p,\nu)}/
    {\mathcal C_m^{\RDC}(p,\nu)} \to 1\) as $m\to\infty$.
Thus, RDC is first-order optimal in the sense of minimizing the M-cost among all source-rate-free universally exact algorithms as the number of returned outputs $m \rightarrow \infty$.

In \cref{app:unconditional-count-tree-lp}, 
we show 
that minimizing the M-cost 
over all universally exact source-rate-free
algorithms
is equivalent to an infinite-dimensional linear program.
This program is difficult to solve for a number of reasons.
For universal exactness, the stopping time of a post-processing algorithm can not have a deterministic upper bound $N \le n$ for some $n < \infty$ 
(cf.~\cref{thm:no-capped-budget-uni-exactness}). 
Consequently, the stopping time has unbounded support, so the primal program has infinitely many variables, with their number growing combinatorially with the number of M-samples.
Furthermore, standard constrained Markov decision process methods can reduce programs with a finite number of constraints to finite occupation-measure programs  \citep[see, e.g.,][]{altman2021constrained}. However, 
universal exactness
imposes a continuum of globally coupled constraints, and thus these methods do not directly apply.

The RDC algorithm, with its first-order optimality guarantees, 
serves as an intuitive yet appealing alternative to solving the universal exactness problem exactly.
Nonetheless, the stopping time of the RDC algorithm, and any universally exact algorithm may be large when one or more source rates are rare.

\begin{theorem}[Universal expected M-cost lower bound]
\label{thm:universal-exact-m-cost-lower}
Fix $m\ge1$ 
and assume the support condition $\textrm{supp}(\nu) \subseteq(\textrm{supp}(p))^m$. 
Then,
every
universally exact 
algorithm
$\textrm{Alg}_m\in\mathfrak A_m(\nu)$ satisfies
\[
    \mathcal C_m^{\textrm{Alg}_m}(p,q)
    \ge
    \max_{i \in \operatorname{supp}(p)}\frac{\sum_{j=1}^m \nu_j(i)}{p_i}.
\]
\end{theorem}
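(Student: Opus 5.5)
Fix $i\in\operatorname{supp}(p)$ and let $\mathrm{Alg}_m\in\mathfrak A_m(\nu)$ be universally exact with finite M-cost, so $\mathbb E_p[N]<\infty$. The plan is to compare two expressions for the expected number of returned outputs carrying attribute $i$. On the output side, exactness gives
\[
\mathbb E\Bigl[\#\{j:\widetilde A_j=i\}\Bigr]=\sum_{j=1}^m \Pr(\widetilde A_j=i)=\sum_{j=1}^m \nu_j(i),
\]
because the output attribute law is exactly $\nu$ and $\nu_j(i)$ is the $j$th marginal probability of attribute $i$. On the sampling side, the algorithm selects without replacement from the M-samples $Y_1,\dots,Y_N$, so the returned count of attribute $i$ is at most $C_i(N)=\sum_{t=1}^N 1\{A_t=i\}$; equivalently, the returned sequence lies in $\Omega_N$.

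The key step is Wald's identity for $C_i(N)$. The variables $1\{A_t=i\}$ are i.i.d.\ Bernoulli$(p_i)$. The stopping time $N$ is adapted to the filtration generated by the M-samples together with the pre-output E-samples. By \cref{def:M-E-sampling}, each E-sample is independent of future M-samples given the history, so $A_{t}$ is independent of $\mathcal F_{t-1}$ and hence $\{N\ge t\}\in\mathcal F_{t-1}$ is independent of $A_t$. Combined with $\mathbb E[N]<\infty$, Wald's identity yields $\mathbb E[C_i(N)]=p_i\,\mathbb E[N]$. One can also write this directly as $\sum_t \Pr(N\ge t)\,p_i$, using Tonelli's theorem on nonnegative terms.

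Putting the two sides together gives
\[
\sum_{j=1}^m\nu_j(i)\le \mathbb E[C_i(N)]=p_i\,\mathcal C_m^{\mathrm{Alg}_m}(p,q),
\]
and dividing by $p_i>0$ and maximizing over $i\in\operatorname{supp}(p)$ proves the claim. Universal exactness is used only at the single rate $p$, which is allowed by the support condition, and source-rate-freeness is not needed. This matches the statement, which covers all of $\mathfrak A_m(\nu)$.

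The main point of care is the measurability behind Wald's identity: we must check that E-sampling interleaved before the stopping time does not create dependence between $\{N\ge t\}$ and $A_t$. This follows from the independence clause in \cref{def:M-E-sampling}, which can also be invoked through the conditional Wald lemma, \cref{lem:conditional-wald-main}. Post-output E-sampling only affects which outputs are selected, not $N$, so it does not matter here.
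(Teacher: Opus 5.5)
Your proof is correct and is essentially the paper's argument. The paper likewise combines $\mathbb E_p[C_i(N)]=p_i\mathbb E_p[N]$ (obtained by optional stopping of the martingale $C_i(t)-p_it$ at $N\wedge n$ plus monotone convergence, i.e.\ Wald's identity) with the feasibility bound $\sum_j 1\{\widetilde A_j=i\}\le C_i(N)$ and with exactness of the marginals $\Pr_p(\widetilde A_j=i)=\nu_j(i)$ at the single rate $p$. Your explicit use of the filtration that includes pre-output E-samples is, if anything, more careful than the paper's appeal to the source-count filtration. However, drop the aside about \cref{lem:conditional-wald-main}: that lemma is stated only for source-rate-free algorithms, because its change-of-measure step needs the algorithm's behavior not to depend on $p$, so it does not cover all of $\mathfrak A_m(\nu)$ (e.g.\ the oracle rejection sampler).
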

For the proof, see \cref{prf:universal-exact-m-cost-lower}.
In particular,
no universally exact source-rate-free procedure has M-cost
uniformly bounded over all interior source rates $p \in \textrm{int}(\Delta^{k-1})$.
In order to obtain
controlled query complexity, we therefore also consider approximate target
sampling.

Finally, when the support condition \(\operatorname{supp}(\nu)\subseteq(\operatorname{supp}(p))^m\) does not hold, 
universal exactness is impossible, as
we can never M-sample an attribute sequence in 
$\operatorname{supp}(\nu) \backslash (\operatorname{supp}(p))^m$. 
These observations motivate us to consider approximating rather than exactly attaining the target joint law.

\section{Approximate statistical attribute alignment}\label{sec:approx-sampling-fairness}

Following the discussion at the end of \cref{sec:exact-sampling}, we now consider approximate sampling that allows deviations between the output attribute law and the target joint law.
More formally,
we will measure the quality of approximation via 
$f$-divergences.
\begin{definition}[$f$-divergences]\label{def:f-div}
    Let $f:[0,\infty)\to\mathbb{R}$ be a convex function with $f(1)=0$.
    For two probability distributions $P$ and $Q$ on a finite or countable space
    $\mathcal X$, the forward $f$-divergence from $P$ to $Q$ is
    \(
        D_f(P\|Q)=\sum_{x:Q(x)>0} Q(x) f\left(\frac{P(x)}{Q(x)}\right)
        + f_\infty P\{x:Q(x)=0\},
    \)
    where $f_\infty\coloneqq\lim_{t\to\infty}f(t)/t$ and the expression is
    interpreted in the usual extended-real sense.\footnote{In the absolutely continuous
    case $P\ll Q$, this reduces to
    \(
        D_f(P\|Q)
        =\mathbb E_Q\left[f\left(\frac{P}{Q}\right)\right]
        =\sum_{x\in\mathcal X}Q(x)f\left(\frac{P(x)}{Q(x)}\right).
    \)}
    If $f$ is strictly convex on $[0,\infty)$, then the $f$-divergence is called
    a \textit{strictly convex} $f$-divergence. If $D_f(P\|Q)=0$ holds if and only
    if $P=Q$, then the $f$-divergence is said to be \textit{separable}.
\end{definition}
 
We use the conventions
\(D_{\mathrm{KL}}(P\|Q)=\sum_x P(x)\log(P(x)/Q(x))\)
and
\(\mathrm{TV}(P,Q)=\tfrac12\sum_x|P(x)-Q(x)|\).
Forward KL divergence
is strictly convex and separable, whereas TV is convex and separable but not strictly convex. 

\begin{definition}[Unconditional statistical alignment discrepancy]
\label{def:uncond-sampling-fairness-discrepancy}
For fixed source rates $p\in\Delta^{k-1}$, 
the \emph{unconditional $f$-statistical alignment discrepancy} of $\textrm{Alg}_m$ relative to $\nu$ is
\begin{equation}\label{eq:unconditional-joint law-main}
    \Delta_f^{\mathrm{un}}(p) \equiv \Delta_f^{\mathrm{un}}(\textrm{Alg}_m;\nu;p)
    \coloneqq
    D_f\bigl(P_{\textrm{Alg}_m}(\widetilde{A}\mid x)\|\nu\bigr)
\end{equation}
where $P_{\textrm{Alg}_m}(\widetilde{A}\mid x)$ is the output attribute law in \cref{eq:output-label-law} that may depend on the source rates $p$ and target joint law $\nu$.
\end{definition}

The algorithm has $\varepsilon$-approximate unconditional statistical alignment discrepancy at $p$ 
if $\Delta_f^{\mathrm{un}}(p)\le\varepsilon$, 
and perfect unconditional discrepancy if $\Delta_f^{\mathrm{un}}(p)=0$. 
For a separable $f$-divergence, perfect unconditional discrepancy is equivalent to $\mathcal \Pr(\widetilde{ A}\mid x)=\nu$;
if this holds for every source
rate satisfying $\operatorname{supp}(\nu)\subseteq(\operatorname{supp}(p))^m$,
the algorithm is universally exact (cf.~\cref{def:universal-exactness}).

\begin{definition}[Universal $\varepsilon$-approximate sampling]
\label{def:universal-approximate-sampling}
Let $D_f$ be a forward $f$-divergence and let $\varepsilon\ge0$.
A post-processing algorithm $\textrm{Alg}_m\in\mathfrak{A}_m$ is
\emph{universally $(f,\varepsilon)$-approximate} for the target law $\nu$
if its output attribute law satisfies
\begin{equation}\label{eq:universal-approximate-sampling}
    D_f\!\left(
      P_{\textrm{Alg}_m}(\widetilde A\mid x)\middle\|\nu
    \right)
    =\Delta_f^{\mathrm{un}}(\textrm{Alg}_m;\nu;p)
    \le\varepsilon
\end{equation}
for every $p\in\operatorname{int}(\Delta^{k-1})$.  If additionally
$\textrm{Alg}_m\in\mathfrak{A}_m^o$, it is a
\emph{universally $(f,\varepsilon)$-approximate source-rate-free algorithm}.
\end{definition}

For a separable $f$-divergence, requiring that an algorithm is universally $(f,\varepsilon)$-approximate for
$\varepsilon=0$ is equivalent to 
requiring its universal exactness on the interior source domain.
Within the class of algorithms satisfying universal approximate alignment, a natural objective is to determine the minimum achievable M-cost
and to characterize its minimizers.

\begin{definition}[Universal source-rate-free M-cost]\label{def:uni-source-rate-free-M-cost}
    For fixed source rates $p \in\operatorname{int}(\Delta^{k-1})$,
    define the \emph{universal source-rate-free M-cost} as
    \begin{align}\label{eq:uni-source-rate-free-M-cost}
        \mathcal{C}_{m,\varepsilon}^{\mathrm{univ},f}(p;\nu)
    \coloneqq
    \inf_{\substack{\textrm{Alg}_m\in\mathfrak{A}_m^o:\\
    \Delta_f^{\mathrm{un}}(\textrm{Alg}_m;\nu;p')\le\varepsilon
    \quad\forall p'\in\operatorname{int}(\Delta^{k-1})}}
    \mathcal{C}_m^{\textrm{Alg}_m}(p).
    \end{align}
\end{definition}

Analogous to universal exactness,
the minimization in \cref{eq:uni-source-rate-free-M-cost} can be formulated as an infinite dimensional program.
This optimization is difficult to solve for reasons similar to those in the exact alignment setting (cf.~\cref{app:unconditional-count-tree-lp}), with the additional complication that
the approximate sampling constraints need not be linear for general \(f\)-divergences.

Furthermore, natural attempts to bypass this optimization using tractable
heuristic procedures --such as approximate rejection sampling based on
estimated source rates 
--
appear
to yield subpar performance; 
see
\cref{app:alg-approx-sampling-uncond-discrep}.
Therefore, we consider the following count-conditional proxy as an upper bound for the unconditional discrepancy.

\subsection{Count-conditional sampling}
\label{sec:count-conditional-target-sampling}
For a realized  
batch $\mathcal{B}_N \coloneq (Y_1 ,\ldots, Y_N)$ of $N$ observed outcomes \(Y_1,\dots,Y_N \overset{\text{i.i.d.}}{\sim} P(\cdot \mid x)\), 
the post-processing algorithm $\textrm{Alg}_m$ induces a distribution over the annotations of its chosen outputs
    \begin{align*}%
         P_{\textrm{Alg}_m}^{\mathcal{B}_N}(\widetilde A = \widetilde a)
         \coloneqq
        \Pr(\widetilde A_1=a_1,\ldots,\widetilde A_m=a_m\mid x,\mathcal B_N)
    \end{align*}
for $(a_1,\ldots,a_m)\in\mathcal A^m$.
In addition to conditioning on the input prompt $x$, 
this probability further conditions on the realized batch $\mathcal{B}_N$ and
integrates over any E-sampling performed before and after the batch is realized.

Now, for the random batch $\mathcal{B}_N$, define the empirical attribute count vector 
\(\vec C=(C_1,\ldots,C_k)\)
with coordinates \(C_i\coloneqq\sum_{t=1}^N1\{A_t=i\}, i\in[k].\)
Then, for every realized count vector $\vec c$ with positive probability under the source rates, 
let
\begin{equation}\label{eq:count-cond-output-label-law}
    P^c_{\textrm{Alg}_m}(\widetilde A = \widetilde a)
    \coloneqq
    \Pr(\widetilde A_1=a_1,\ldots,\widetilde A_m=a_m \mid x,\vec C = \vec c)
\end{equation}
be the algorithm's output attribute law after conditioning on the realized attribute counts.
Analogous to the output attribute law of \cref{eq:output-label-law}, we call this distribution the \emph{count-conditional output attribute law} of $\textrm{Alg}_m$.

For a fixed number of model queries $N=n$, 
the count vector $\vec C$ is minimal sufficient for $p$ under the i.i.d. sampling assumption $A_1,\ldots,A_n \overset{\text{i.i.d.}}{\sim} \textrm{Categorical}(p)$ (cf.~\cref{def:black-box-post-processing-algs}). 
Hence, when it comes to inference about the source rates, there is no loss of information when conditioning on the counts; for completeness, we include a formal theorem and proof of this statement in \cref{prf:count-minimal-sufficient}.
Therefore, it is reasonable to condition on the count vector $\vec C$ and aim to construct the best possible approximation of the target distribution given this vector of counts. 

\begin{definition}[Count-conditional statistical alignment discrepancy]
\label{def:count-cond-sampling-fairness-discrepancy}
For fixed source rates $p\in\Delta^{k-1}$, 
the \emph{count-conditional $f$-statistical alignment discrepancy} of $\textrm{Alg}_m$ relative to $\nu$ is
\begin{equation}\label{eq:count-conditional-sampling-distortion}
    \Delta_f^{\vec C}(p)\equiv 
    \Delta_f^{\vec C}(\textrm{Alg}_m;\nu;p)
    \coloneqq
    \mathbb E_{\vec C}\left[
        D_f\left(P_{\textrm{Alg}_m}^{\vec C}\|\nu\right)
    \right].
\end{equation}
\end{definition}
The algorithm has $\varepsilon$-approximate count-conditional statistical alignment discrepancy at $p$ if $\Delta_f^C(p)\le\varepsilon$, and perfect count-conditional target sampling if $\Delta_f^C(p)=0$. 

\begin{proposition}[Relation between the unconditional and count-conditional criteria]
\label{prop:criterion-decomposition}
Fix source rates $p\in\Delta^{k-1}$.
Then, for every forward $f$-divergence,
\begin{equation}
\label{eq:criterion-order}
    \Delta_f^{\mathrm{un}}(\textrm{Alg}_m;\nu;p)
    \leq
    \Delta_f^{\vec C}(\textrm{Alg}_m;\nu;p).
\end{equation}
\end{proposition}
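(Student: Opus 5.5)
The plan is to write the unconditional output attribute law as a mixture of the count-conditional laws and then apply joint convexity of $f$-divergences (equivalently, Jensen's inequality for the convex perspective map). By the law of total probability, for every $\widetilde a\in\mathcal A^m$,
\[
P_{\textrm{Alg}_m}(\widetilde A=\widetilde a\mid x)
=\sum_{\vec c}\Pr(\vec C=\vec c)\,P^{\vec c}_{\textrm{Alg}_m}(\widetilde A=\widetilde a),
\]
where the sum runs over count vectors with positive probability under $p$. This set is countable: $N$ is a.s. finite, so $\vec C$ takes values in $\N^k$. The law of total probability applies because $\{\vec C=\vec c\}$ partitions the sample space up to a null set. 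In words, $P_{\textrm{Alg}_m}(\widetilde A\mid x)=\mathbb E_{\vec C}[P^{\vec C}_{\textrm{Alg}_m}]$ is a convex combination of probability vectors on the finite set $[k]^m$.

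Next I show that $P\mapsto D_f(P\|\nu)$ is convex for fixed $\nu$, including countable convex combinations. On the coordinates with $\nu(x)>0$, the map $P\mapsto \nu(x)f(P(x)/\nu(x))$ is convex because $f$ is convex. On the coordinates with $\nu(x)=0$, the term $f_\infty P(x)$ is linear in $P$. If $f_\infty=+\infty$, I use the convention $\infty\cdot 0=0$ and observe that both sides are then $+\infty$ unless every mixture component puts no mass there.

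The finite-mixture case is immediate. For countable mixtures, I apply Jensen's inequality coordinatewise. Since $\mathcal A^m$ is finite, each coordinate $P(x)$ is a bounded real random variable under the mixing law of $\vec C$. A convex $f$ on $[0,\infty)$ that is finite on $[0,1]$ satisfies $f(\mathbb E Z)\le \mathbb E f(Z)$ for bounded nonnegative $Z$. Here $Z=P^{\vec C}(x)/\nu(x)$ takes values in $[0,1/\nu(x)]$, and $f$ is bounded below on that compact interval, so the expectation is well defined in $(-\infty,\infty]$. Summing the coordinatewise inequalities over $x$ and interchanging the finite sum with the expectation gives
\[
D_f\bigl(\mathbb E_{\vec C}[P^{\vec C}_{\textrm{Alg}_m}]\,\big\|\,\nu\bigr)\le \mathbb E_{\vec C}\bigl[D_f(P^{\vec C}_{\textrm{Alg}_m}\|\nu)\bigr],
\]
which is exactly \cref{eq:criterion-order} by \cref{def:uncond-sampling-fairness-discrepancy,def:count-cond-sampling-fairness-discrepancy}.

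I expect the main obstacle to be technical rather than conceptual: justifying Jensen's inequality for a countable mixture when $f$ may be unbounded near $0$ or take the value $f_\infty=+\infty$. I would handle this with the coordinatewise argument above, which uses finiteness of $\mathcal A^m$ and lower boundedness of $f$ on compacts, treating the $\nu(x)=0$ coordinates separately via linearity. A second point to check is that the count-conditional law in \cref{eq:count-cond-output-label-law} integrates over all E-sampling and over the batch given the counts. The mixture identity needs only this, and it holds by the tower property.
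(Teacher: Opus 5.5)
Your proposal is correct and follows the paper's argument. The paper's proof is the single remark that the unconditional output attribute law is the $\vec C$-mixture of the count-conditional laws, so Jensen's inequality and the convexity of $P\mapsto D_f(P\|\nu)$ give the result. Your coordinatewise handling of countable mixtures and of the $f_\infty$ term on $\{\nu=0\}$ supplies details the paper leaves implicit. One small correction: the Jensen step needs $f$ finite on $[0,1/\nu(x)]$, not only on $[0,1]$. This holds automatically, since the paper takes $f:[0,\infty)\to\mathbb{R}$.
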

This follows directly from Jensen's inequality and the convexity of $f$-divergences.
Our strategy will be to 
find algorithms 
that directly
minimize the count-conditional objective.
By the above result, these algorithms will yield universally approximate methods under
the unconditional discrepancy
and thus yield upper bounds on the universal
source-rate-free M-cost (cf.~\cref{def:uni-source-rate-free-M-cost}).

\begin{remark}[Count-conditional discrepancy and synthetic data augmentation]\label{rem:count-cond-discrep-synth-data-aug}
As discussed in the introduction, 
practitioners often select synthetic examples from a single generated dataset, whose observed attribute counts constrain the samples that can be returned.
The statistical alignment discrepancy in \cref{def:uncond-sampling-fairness-discrepancy} averages the output attribute law over the randomness of the count vectors, and then measures its distance from the target joint law.
The count-conditional discrepancy instead measures this distance separately for each count vector before averaging.

Controlling the count-conditional objective also yields provable guarantees for analyses involving the augmented data.
Under a shared attribute-conditional model (see, e.g., the label shift assumption of \citep{lipton2018detecting}), 
the count-conditional discrepancy under TV controls the difference in bounded
downstream criteria
between augmentation with the selected synthetic examples and augmentation with ideal target samples.
This is formalized in \cref{sec:count-cond-discrep-synth-data-aug}.
\end{remark}

\subsection{Anytime RDC}\label{sec:anytime-rdc}

To obtain universally approximate algorithms, 
we consider two different stopping time rules that truncate the RDC stopping time, either by stopping when a specified level of precision
or a capped budget 
is reached.
For intermediate times up to and at these truncated RDC stopping times, we use the following 
exponential clock race construction to define 
post-processing algorithms that minimize the count-conditional discrepancy \citep[see, e.g.,][and \cref{app:arw} for further discussion]{gillespie1976general}.
Since model queries are performed sequentially, it is useful to  define the newly feasible support \(\mathcal D_t \coloneqq\Omega_t\setminus\Omega_{t-1}\) at time $t \ge m$ 
with the convention \( \Omega_{m-1}\coloneqq\varnothing\).

\begin{definition}[Anytime random demand coupon collector (A-RDC)]
\label{def:anytime-rdc}
Fix a target joint law \(\nu\) on \([k]^m\) and
E-sample an attribute sequence $\vec A^\star \sim \nu$.
Let the RDC completion time be
\(T_\nu^\star
\coloneqq
\inf\{t\geq m:\vec A^\star\in\Omega_t\}\) (cf.~\cref{alg:rdc}).
Now, for each \(m\leq t<T_\nu^\star\) such that 
\(\nu(\mathcal D_t)>0\), 
independently sample an exponential clock time
\( G_t\sim\operatorname{Exp}\bigl(\nu(\mathcal D_t)\bigr)\)
and attribute sequence
\(
\vec U_t\sim\nu(\cdot\mid\mathcal D_t)\).
If \(t<T_\nu^\star\) and \(\nu(\Omega_t)>0\), define the current clock winner as
\begin{equation}
\label{eq:anytime-rdc-snapshot}
\widehat{A}_t
\coloneqq
\vec U_{\tau_t},
\qquad
\tau_t \coloneqq \underset{\substack{m \leq s \leq t :~ \nu(\mathcal{D}_s) > 0}}{\arg\min} G_s;
\end{equation}
otherwise, 
if \(\nu(\Omega_t)=0\), choose any sequence in the nonempty feasible set \(\Omega_t\). 
At time \(T_\nu^\star\), stop querying and return \(\vec A^\star\).
\end{definition}

\cref{alg:anytime-rdc} gives pseudo-code for the algorithm and 
\cref{fig:a-rdc-illustration} illustrates its evolution.
A-RDC generalizes RDC by returning feasible attribute
sequences before the completion time $T_\nu^\star$.
Under the support condition \(\operatorname{supp}(\nu)\subseteq(\operatorname{supp}(p))^m\) 
of \cref{thm:rdc-exact-main}, this
time is almost surely finite, and both algorithms achieve zero unconditional
discrepancy.

We now show that A-RDC traces an optimal frontier 
under the count-conditional objective at all times $t\ge m$. 
Let $\mathfrak{A}_{m,t} \subset \mathfrak{A}_m$ denote the class of post-processing algorithms whose stopping time is deterministically upper bounded by $N \le t$ for some $t \ge m$. 
Define, for any convex generator $f$ with $f(1)=0$ and the usual finite-space extended conventions,
\begin{equation}\label{eq:psi-f}
    \Psi_f(\alpha)
    \coloneqq
    \begin{cases}
        \alpha f(1/\alpha)+(1-\alpha)f(0), & \alpha\in(0,1],\\
        \displaystyle\lim_{\beta\downarrow 0}
        \bigl\{\beta f(1/\beta)+(1-\beta)f(0)\bigr\}, & \alpha=0.
    \end{cases}
\end{equation}
For standard choices of $f$-divergences (cf.~\cref{def:f-div}),
$\Psi_{\mathrm{KL}}(\alpha)=-\log\alpha$,
$\Psi_{\mathrm{TV}}(\alpha)=1-\alpha$,
with the usual conventions $-\log 0=+\infty$ and $0^{-1}=+\infty$.

\begin{theorem}[Optimal count-conditional frontier]
\label{thm:opt-count-cond-f-curve}
Fix a prompt $x\in\mathcal X$, source rates $p\in \Delta^{k-1}$, and a target joint law $\nu$ on $\mathcal A^m$. 
For every $t\ge m$, 
let $\vec C(t)$ 
be the source count process
at time $t$ (cf.~\cref{eq:feasible-support})
and write
\(
    \alpha_t
    \coloneq \nu(\Omega_m(\vec C(t))).
\)
On the event $\alpha_t>0$, the count-conditional output attribute law (cf.~\cref{eq:count-cond-output-label-law})
of A-RDC satisfies
\begin{equation}\label{eq:a-rdc-count-cond-output-label-law}
    P_{\textrm{A-RDC}}^{\vec C}\left(\widehat{A}_t \mid x \right)
    =\nu\left(\cdot\mid\Omega_m(\vec C(t))\right).
\end{equation}

Furthermore,
for every forward $f$-divergence,
\begin{equation}\label{eq:count-conditional-optimal-lb}
    \inf_{\mathrm{Alg}_m\in\mathfrak A_{m,t}}
    \Delta_f^C(\mathrm{Alg}_m;\nu;p) =
    \mathbb E\left[
        \Psi_f\left(\nu(\Omega_m(\vec C(t)))\right)
    \right],
\end{equation}
and
A-RDC achieves this minimum for each $t \ge m$ simultaneously.

\end{theorem}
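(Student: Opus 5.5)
We prove the two parts separately: first we identify the law of the A-RDC snapshot given the counts, and then we prove a matching pointwise lower bound for every algorithm in $\mathfrak A_{m,t}$.

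\textbf{Step 1: the A-RDC snapshot law.} Fix $t\ge m$ and condition on the M-sample history $A_1,\dots,A_t$, which determines $\Omega_s$ and $\mathcal D_s$ for all $s\le t$. The sets $\mathcal D_m,\dots,\mathcal D_t$ are disjoint and their union is $\Omega_t$. The event $\{T_\nu^\star\le t\}$ is exactly $\{\vec A^\star\in\Omega_t\}$.

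We therefore introduce an auxiliary index $s\le t$ with $\vec A^\star\in\mathcal D_s$, together with a clock $G_s$. Given that $\vec A^\star\in\Omega_t$, the index $s$ is chosen with probability $\nu(\mathcal D_s)/\alpha_t$, and $\vec A^\star\mid s\sim\nu(\cdot\mid\mathcal D_s)$.

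The exponential race gives the same thing. The winner index $\tau_t$ has $\Pr(\tau_t=s)=\nu(\mathcal D_s)/\sum_{r\le t}\nu(\mathcal D_r)=\nu(\mathcal D_s)/\alpha_t$, and the winning sequence satisfies $\vec U_{\tau_t}\mid \tau_t=s\sim\nu(\cdot\mid\mathcal D_s)$. Mixing over $s$ yields $\nu(\cdot\mid\Omega_t)$ in either case.

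The subtle point is the case split at $T^\star_\nu$. The snapshot is defined as $\widehat A_t=\vec A^\star$ when $T^\star_\nu\le t$ and as the clock winner otherwise. The clocks for $s<T^\star_\nu$ are defined, and the conditional law of $\vec A^\star$ given $\{\vec A^\star\notin\Omega_t\}$ is independent of the clocks. So
\[
P(\widehat A_t=\vec a\mid\text{history})=\nu(\vec a)+(1-\alpha_t)\,\nu(\vec a\mid\Omega_t)=\nu(\vec a\mid\Omega_t)\qquad\text{for } \vec a\in\Omega_t.
\]
This needs a careful coupling check: the clocks are generated only for $s<T^\star_\nu$. To handle it we would instead define clocks for all $s\le t$, with the index $T^\star$ itself playing the role of the one "already won". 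An equivalent cleaner route is a time-reversal or consistency argument: by induction on $t$, show that $\widehat A_t\mid\text{history}\sim\nu(\cdot\mid\Omega_t)$, using the memorylessness of the minimum of exponentials when the new set $\mathcal D_{t+1}$ is added. Since the result depends on the history only through $\vec C(t)$, conditioning down to $\vec C(t)$ gives \eqref{eq:a-rdc-count-cond-output-label-law}.

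\textbf{Step 2: the lower bound.} Any $\mathrm{Alg}_m\in\mathfrak A_{m,t}$ must output a sequence in $\Omega_m(\vec C(N))\subseteq\Omega_m(\vec C(t))$. Hence, after conditioning on the counts at time $t$ (or on the full batch and then averaging), its conditional output law $P$ is supported on $\Omega:=\Omega_m(\vec C(t))$. Here we use that the lower-bound functional depends only on the support. We need
\[
\min_{P:\,\operatorname{supp}P\subseteq\Omega}D_f(P\|\nu)=\Psi_f(\nu(\Omega)).
\]
The lower bound follows from the data-processing inequality applied to the two-cell partition $\{\Omega,\Omega^c\}$:
\[
D_f(P\|\nu)\ge \alpha f(1/\alpha)+(1-\alpha)f(0)=\Psi_f(\alpha).
\]
Equality holds for $P=\nu(\cdot\mid\Omega)$, because the likelihood ratio is then constant on each cell. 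When $\alpha=0$, the value $\Psi_f(0)$ comes from the $f_\infty$ convention, and the bound still holds with the limit definition.

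Conditioning on $\vec C(t)$ versus $\vec C(N)$ is handled by convexity (Jensen), in the same way as Proposition~\ref{prop:criterion-decomposition}. Taking expectations over $\vec C(t)$ gives ``$\ge$'' in \eqref{eq:count-conditional-optimal-lb}.

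\textbf{Step 3: attainment.} A-RDC truncated at $t$ achieves equality, by Step 1 and the equality case in Step 2. Since a single construction does this for every $t$, the minimum is attained for all $t$ simultaneously.

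We expect the main obstacle to be the rigorous Step 1. There the stopped value $\vec A^\star$ must be merged consistently with the race over clocks, so the induction via exponential memorylessness is where most of the care goes.
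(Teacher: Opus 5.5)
Your proof follows the paper's route. In Step 1 you get the same decomposition $\Pr(\widehat A_t=\vec a\mid\text{history})=\nu(\vec a)+(1-\alpha_t)\nu(\vec a)/\alpha_t$, from the race law and the independence of $\vec A^\star$ from the clocks. In Step 2 you use the same data-processing bound on the partition $\{\Omega_t,\Omega_t^c\}$, with equality at $\nu(\cdot\mid\Omega_t)$.

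Your coupling worry in Step 1 needs neither induction nor memorylessness. Given the source path, realize independent $(G_s,\vec U_s)$ for every $s\ge m$ with $\nu(\mathcal D_s)>0$, independently of $\vec A^\star$. A-RDC only reads those with $s<T_\nu^\star$. So on $\{\vec A^\star\notin\Omega_t\}$ the winner over $s\le t$ is independent of $\vec A^\star$, and it has law $\nu(\cdot\mid\Omega_t)$ by the race property alone.

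The step that fails as written is your reconciliation of $\vec C(N)$ with $\vec C(t)$ ``by convexity (Jensen)''. Jensen shows that coarser conditioning can only \emph{lower} the averaged divergence; that is exactly what \cref{prop:criterion-decomposition} uses. It therefore cannot carry a lower bound from one conditioning to another. Moreover, $\sigma(\vec C(N))$ and $\sigma(\vec C(t))$ are not nested, and given $\vec C(N)$ the set $\Omega_m(\vec C(t))$ is still random. So ``$P$ is supported on $\Omega_m(\vec C(t))$'' says nothing about $P^{\vec C(N)}$.

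The missing ingredient is monotonicity of $\Psi_f$. By your own Step 2, $\Psi_f(\nu(\Omega))=\min_{\mu(\Omega)=1}D_f(\mu\|\nu)$ for nonempty $\Omega$, and enlarging $\Omega$ only enlarges the feasible set. Hence $\Omega_m(\vec C(N))\subseteq\Omega_m(\vec C(t))$ gives
$D_f(P^{\vec C(N)}\|\nu)\ge\Psi_f(\nu(\Omega_m(\vec C(N))))\ge\Psi_f(\alpha_t)$ pointwise, and averaging gives the lower bound. The paper's proof simply conditions on $\vec C(t)$ and never addresses this point, so with the fix your Step 2 is the more careful of the two.

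For consistency under this literal reading, the attaining procedure must query exactly $t$ times and output $\widehat A_t$. If it stops at $T_\nu^\star\wedge t$ and you condition on $\vec C(T_\nu^\star\wedge t)$, Step 1 no longer applies. For example, take $m=1$, $k=2$ and $\nu_1,\nu_2>0$. The realized count $(1,n-1)$ with $3\le n<t$ forces the returned attribute to be $1$, whereas $\nu(\cdot\mid\Omega_1((1,n-1)))=\nu$.
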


\begin{algorithm}[tbp]
\caption{\textsc{Anytime RDC algorithms}}
\label{alg:anytime-rdc}
\footnotesize
\begin{algorithmic}[1]
\REQUIRE Output size $m$, target joint law $\nu$ on $[k]^m$,
annotator $\phi$, and variant: A-RDC, 
thresholded A-RDC (TA-RDC) 
with $f$-divergence
and tolerance $\varepsilon\ge0$, 
or capped A-RDC (CA-RDC) with cap $n\ge m$.
\ENSURE Returned outputs 
$(\widetilde Y_1,\ldots,\widetilde Y_m)$.

\STATE \textbf{Stopping module:} set
\[
    \sigma\coloneqq
    \begin{cases}
        +\infty, & \text{A-RDC},\\
        \tau_{f,\varepsilon}, & \text{TA-RDC},\\
        n, & \text{CA-RDC},
    \end{cases}
\]
where $\tau_{f,\varepsilon}$ is defined in
\cref{def:thresholded-anytime-rdc}.

\STATE E-sample $\vec A^\star\sim\nu$ and initialize
$\Omega_{m-1}\leftarrow\varnothing$,
$G_{\min}\leftarrow+\infty$, and
$\vec A^{\mathrm{cur}}\leftarrow\varnothing$.

\FOR{$t=1,2,\ldots$}
    \STATE Query one M-sample $Y_t$ and set
    $B_t\leftarrow\phi(Y_t)$.
    \IF{$t\ge m$}
        \STATE Set
        \begin{equation}
        \label{eq:alg-anytime-rdc-newly-feasible}
        \begin{aligned}
            \Omega_t
            &\leftarrow
            \bigl\{(B_{s_1},\ldots,B_{s_m}):
            s_1,\ldots,s_m\in[t]\text{ distinct}\bigr\},\\
            \mathcal D_t
            &\leftarrow\Omega_t\setminus\Omega_{t-1}.
        \end{aligned}
        \end{equation}

        \STATE \textbf{if} $\vec A^\star\in\Omega_t$ \textbf{then}
        set $\vec A^{\mathrm{cur}}\leftarrow\vec A^\star$,
        $N\leftarrow t$, and \textbf{break}.

        \STATE Set
        \begin{equation}
        \label{eq:alg-anytime-rdc-newly-feasible-mass}
            \delta_t\leftarrow\nu(\mathcal D_t).
        \end{equation}

        \IF{$\delta_t>0$}
            \STATE Independently E-sample
            \begin{equation}
            \label{eq:alg-anytime-rdc-incremental-draw}
                G_t\sim\operatorname{Exp}(\delta_t),
                \qquad
                \vec U_t\sim\nu(\,\cdot\mid\mathcal D_t).
            \end{equation}
            \STATE \textbf{if} $G_t<G_{\min}$ \textbf{then}
            set $G_{\min}\leftarrow G_t$ and
            $\vec A^{\mathrm{cur}}\leftarrow\vec U_t$.
        \ENDIF

        \STATE \textbf{if} $\vec A^{\mathrm{cur}}=\varnothing$ \textbf{then}
        choose any $\vec A^{\mathrm{cur}}\in\Omega_t$.

        \STATE \textbf{if} $t=\sigma$ \textbf{then}
        set $N\leftarrow t$ and \textbf{break}.
    \ENDIF
\ENDFOR

\STATE \textbf{return} outputs
$(\widetilde Y_1,\ldots,\widetilde Y_m)$
with attribute sequence $\vec A^{\mathrm{cur}}$, selected uniformly without
replacement within each attribute from $(Y_1,\ldots,Y_N)$.

\end{algorithmic}
\end{algorithm}
\begin{figure}
    \centering
    \includegraphics[width=0.90\linewidth]{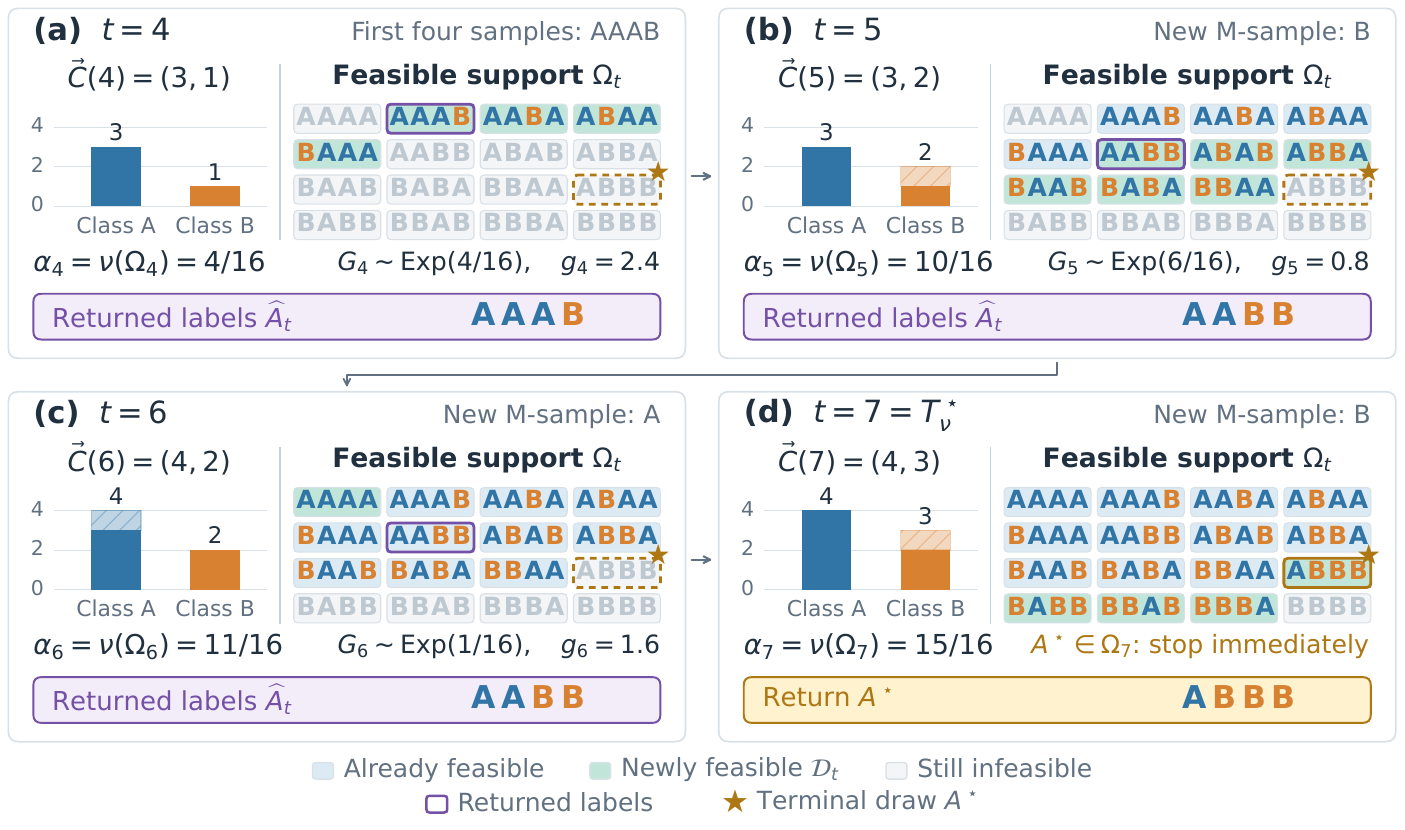}
    \caption{Illustration of the A-RDC algorithm of \cref{def:anytime-rdc}.
    There are $k=2$ attributes, $m=4$ outputs, and the target joint law is $\nu = q^{\otimes m}$ for uniform $q = (1/2,1/2)$.
    $\vec C(t)$, the source-count process of \cref{eq:feasible-support}, records the histogram of the M-sampled attributes.
    At time $t \ge m$,
    if the realized clock time $g_t < g_{\min} \coloneq \min_{1\le s <t}g_s$, we uniformly sample $\widehat A_t$ 
    from the newly feasible attribute sequences (in green). 
    Otherwise, we keep the returned attributes from time $t-1$, with the convention that $g_{\min} = +\infty$ for $t=1$.
    Once the terminal draw $A^\star \sim \nu$
    enters the feasible support, the A-RDC algorithm stops and returns $\vec A^\star$.
    }
    \label{fig:a-rdc-illustration}
\end{figure}

{\bf Thresholded anytime RDC.} 
We now consider a universally approximate 
algorithm that truncates the RDC stopping time with a count-conditional certificate.
This certificate controls the count-conditional discrepancy as a proxy for the unconditional objective.

\begin{definition}[Thresholded anytime RDC (TA-RDC)]
\label{def:thresholded-anytime-rdc}
For a forward $f$-divergence and tolerance
\(\varepsilon\ge0\), 
define the certification stopping time \(\tau_{f,\varepsilon}
    \coloneqq
    \inf\bigl\{t\ge m:\Psi_f(\alpha_t)\le\varepsilon\bigr\}\).
Then, \emph{TA-RDC} runs the A-RDC procedure \cref{def:anytime-rdc} 
but now
stops at \(N_{f,\varepsilon}^{\textrm{TA-RDC}}
    \coloneqq
    T_\nu^\star\wedge\tau_{f,\varepsilon}\)
and returns output attributes 
\(\widehat{\vec A}_{N_{f,\varepsilon}^{\textrm{TA-RDC}}}\) (cf.~\cref{eq:anytime-rdc-snapshot}); see \cref{alg:anytime-rdc}.
\end{definition}

\begin{theorem}[Finite-sample exact-or-certified optimality]
\label{thm:thresholded-anytime-rdc}
For every forward $f$-divergence and every \(\varepsilon\ge0\), the same
TA-RDC rule is source-rate-free and universally
$(f,\varepsilon)$-approximate (cf.~\cref{def:universal-approximate-sampling}). 
More precisely, it terminates
almost surely for every source rate $p \in \operatorname{int}(\Delta^{k-1})$ and satisfies
\begin{equation}\label{eq:thresholded-anytime-rdc-unconditional-bound}
    \Delta_f^{\mathrm{un}}
    \left(
        \mathrm{TA\text{-}RDC};\nu;p
    \right)
    \le
    \mathbb E_p\!\left[\Psi_f(\alpha_{\tau_{f,\varepsilon}})\right]
    \le\varepsilon.
\end{equation}
\end{theorem}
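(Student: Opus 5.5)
The plan is to realize TA-RDC on a single probability space, together with an auxiliary ``frozen'' A-RDC output. We then reduce the unconditional bound to the count-conditional identity \cref{eq:a-rdc-count-cond-output-label-law} of \cref{thm:opt-count-cond-f-curve}, evaluated at the random time $\tau_{f,\varepsilon}$.

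\textbf{Source-rate-freeness and termination.} Every quantity the rule uses is a function of $\nu$, the observed counts, and E-samples, and never of $p$. These quantities are $\vec A^\star$, $\Omega_t$, $\mathcal D_t$, $\alpha_t=\nu(\Omega_m(\vec C(t)))$, $G_t$, $\vec U_t$, and $\tau_{f,\varepsilon}$. For termination, fix $p\in\operatorname{int}(\Delta^{k-1})$. Every count $C_i(t)\to\infty$ almost surely, so there is an a.s.\ finite time after which $\Omega_t=[k]^m$. At that time $\alpha_t=1$ and $\Psi_f(1)=f(1)=0\le\varepsilon$. Hence $\tau_{f,\varepsilon}<\infty$ a.s., and so $N=T_\nu^\star\wedge\tau_{f,\varepsilon}<\infty$ a.s.

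\textbf{Coupling with an infinite attribute sequence.} Let $\mathcal F_\infty$ be the $\sigma$-field of the whole i.i.d.\ attribute sequence $A_1,A_2,\ldots$. The E-samples $\vec A^\star$, $(G_t)$, and $(\vec U_t)$ are independent of $\mathcal F_\infty$, given the sets $\mathcal D_t$ they depend on. For each $t\ge m$, define $O_t\coloneqq\vec A^\star$ if $T_\nu^\star\le t$, and $O_t\coloneqq\widehat A_t$ otherwise. Note that $\tau\coloneqq\tau_{f,\varepsilon}$ is $\mathcal F_\infty$-measurable, since it depends only on counts. One checks that the TA-RDC output equals $O_\tau$. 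If $T_\nu^\star\le\tau$, the rule stops at $T_\nu^\star$ and returns $\vec A^\star=O_\tau$. Otherwise it stops at $\tau$ and returns $\widehat A_\tau=O_\tau$.

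\textbf{Conditional law of $O_t$ given $\mathcal F_\infty$.} The key step is to show that, for each fixed $t$ with $\alpha_t>0$, $\Pr(O_t\in\cdot\mid\mathcal F_\infty)=\nu(\cdot\mid\Omega_t)$. This is the content of \cref{thm:opt-count-cond-f-curve}, re-derived conditionally on $\mathcal F_\infty$ rather than on $\vec C(t)$. Since $\vec A^\star\sim\nu$ is independent of the attributes, on $\{\vec A^\star\in\Omega_t\}$ it has law $\nu(\cdot\mid\Omega_t)$, and this event has probability $\alpha_t$. Its complement has probability $1-\alpha_t$, and on it $O_t$ is the exponential-race winner. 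That winner is independent of $\vec A^\star$. By the standard race property, the winning index $s$ is chosen with probability $\nu(\mathcal D_s)/\alpha_t$, and then $\vec U_s\sim\nu(\cdot\mid\mathcal D_s)$. Since $\Omega_t$ is the disjoint union of the $\mathcal D_s$ for $s\le t$, the winner also has law $\nu(\cdot\mid\Omega_t)$. Mixing the two cases gives $\nu(\cdot\mid\Omega_t)$.

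Because $\tau$ is $\mathcal F_\infty$-measurable and the E-randomness is independent of $\mathcal F_\infty$, we may substitute $t=\tau$. This gives $\Pr(O_\tau\in\cdot\mid\mathcal F_\infty)=\nu(\cdot\mid\Omega_\tau)$ whenever $\alpha_\tau>0$. I expect this substitution to be the main obstacle. The issue is that $\{T_\nu^\star>\tau\}$ depends on $\vec A^\star$, so one cannot naively condition on the stopping event. The fix is to avoid that conditioning entirely and use the representation $O_\tau$ together with the independence structure.

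\textbf{Convexity and evaluation.} The output attribute law is the mixture $P_{\mathrm{TA\text{-}RDC}}=\mathbb E[\Pr(O_\tau\in\cdot\mid\mathcal F_\infty)]$. By joint convexity of $D_f$ (as in \cref{prop:criterion-decomposition}),
\[
\Delta_f^{\mathrm{un}}\le\mathbb E\bigl[D_f(\Pr(O_\tau\in\cdot\mid\mathcal F_\infty)\|\nu)\bigr].
\]
On $\{\alpha_\tau>0\}$, the density ratio of $\nu(\cdot\mid\Omega_\tau)$ with respect to $\nu$ is $1/\alpha_\tau$ on $\Omega_\tau$ and $0$ off it. Hence $D_f=\alpha_\tau f(1/\alpha_\tau)+(1-\alpha_\tau)f(0)=\Psi_f(\alpha_\tau)$. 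On $\{\alpha_\tau=0\}$, which can only occur if $\Psi_f(0)\le\varepsilon$, the returned law is singular to $\nu$. There $D_f=f(0)+f_\infty$, which equals the limit defining $\Psi_f(0)$. This yields the first inequality in \cref{eq:thresholded-anytime-rdc-unconditional-bound}. The second inequality holds because $\Psi_f(\alpha_{\tau})\le\varepsilon$ pointwise, by definition of $\tau_{f,\varepsilon}$.
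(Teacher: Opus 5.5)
Your proof is correct and follows the paper's argument. The steps match: termination because all counts eventually exceed $m$, so that $\alpha_t=1$ and $\Psi_f(1)=0\le\varepsilon$; pathwise agreement of TA-RDC with A-RDC at $\tau_{f,\varepsilon}$; identification of the conditional output law as $\nu(\cdot\mid\Omega_{\tau_{f,\varepsilon}})$ via \cref{thm:opt-count-cond-f-curve}; and convexity of $D_f$. Your only refinement is to condition on the entire attribute sequence, implicitly extending the clock field past $T_\nu^\star$ so the race winner is independent of $\vec A^\star$. This rigorously justifies evaluating the fixed-$t$ identity at the random time $\tau_{f,\varepsilon}$, a step the paper takes directly.
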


For the proof, see \cref{prf:thresholded-anytime-rdc}.
If the $f$-divergence is separable and \(\varepsilon=0\), then
\(T_\nu^\star \le \tau_{f,\varepsilon}\) almost surely, since $\Psi_f(\alpha_t) = 0$ if and only if $\alpha_t =1$. 
This occurs only when $\operatorname{supp}(\nu) \subseteq \Omega_t$, so that the feasible support contains the initially 
sampled attribute sequence $\vec A^\star \sim \nu$.
Thus,
\(N_{f,0}=T_\nu^\star\), and TA-RDC reduces to the A-RDC algorithm of \cref{def:anytime-rdc}.

For the product target \(\nu=q^{\otimes m}\) and forward KL divergence,
abbreviate the M-cost of TA-RDC as
\(\mathcal C_{m,\varepsilon}^{\mathrm{TA\text{-}RDC}}(p,q)\), and write \(\mathcal C_{m,\varepsilon}^{\mathrm{univ}}(p,q)
\coloneqq
\mathcal C_{m,\varepsilon}^{\mathrm{univ},\mathrm{KL}}
\bigl(p;q^{\otimes m}\bigr) \)
for the corresponding universal source-rate-free M-cost.
Then,
for any specified tolerance $\varepsilon >0$, 
TA-RDC is first-order optimal among all source-rate free universally $(\textrm{KL},\varepsilon)$-approximate algorithms.

\begin{theorem}[Finite-sample and first-order universal approximate optimality of TA-RDC]\label{thm:first-order-uni-opt-ta-rdc}
    Fix $p,q\in\operatorname{int}(\Delta^{k-1})$.
    For $\delta\ge0$, define the KL frontier
\[
\rho_{p,q}^\star(\delta)
\coloneqq
\min_{\substack{r\in\Delta^{k-1}\\
D_{\mathrm{KL}}(r\|q)\le \delta}}
\max_{i\in[k]}\frac{r_i}{p_i}.
\]
For any sequence $\varepsilon_m\ge0$ such that
$\varepsilon_m/m\to \delta\ge0$, the M-cost of
TA-RDC and universal source-rate-free M-cost (cf.~\cref{def:uni-source-rate-free-M-cost}) 
satisfy
\begin{equation}\label{eq:first-order-opt-ta-rdc}
    \frac{\mathcal C_{m,\varepsilon_m}^{\mathrm{TA\text{-}RDC}}(p,q)}{m}
\longrightarrow\rho_{p,q}^\star(\delta),
\qquad
\frac{\mathcal{C}_{m,\varepsilon_m}^{\mathrm{univ}}(p,q)}{m}
\longrightarrow\rho_{p,q}^\star(\delta),
\end{equation}
respectively.
Moreover, for every $m\ge1$ and every $\varepsilon\ge0$,
\begin{equation}\label{eq:ta-rdc-finite-m-additive-bound}
    \mathcal C_{m,\varepsilon}^{\mathrm{TA\text{-}RDC}}(p,q)
    \le
    \mathcal C_{m,\varepsilon}^{\mathrm{univ}}(p,q)
    +\frac{3\sqrt m}{\min_{i\in[k]}p_i}.
\end{equation}
In particular, the relative optimality is uniform over the tolerance:
\begin{equation}\label{eq:ta-rdc-uniform-first-order-optimality}
\sup_{\varepsilon\ge0}
\frac{\mathcal C_{m,\varepsilon}^{\mathrm{TA\text{-}RDC}}(p,q)}
{\mathcal C_{m,\varepsilon}^{\mathrm{univ}}(p,q)}
\longrightarrow1.
\end{equation}
\end{theorem}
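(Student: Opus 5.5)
The plan has three parts: an upper bound on the TA-RDC M-cost, a matching lower bound on $\mathcal C_{m,\varepsilon}^{\mathrm{univ}}$, and a comparison of the two that yields the additive bound \cref{eq:ta-rdc-finite-m-additive-bound}.

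\emph{Upper bound.} By Wald's identity, $\mathbb E[N]=\sum_i \mathbb E[\text{count needed}_i]/p_i$ is not directly usable because of the $\min$ over coordinates. We therefore bound the stopping time $T_\nu^\star\wedge\tau_{\mathrm{KL},\varepsilon}$ by $\tau_{\mathrm{KL},\varepsilon}$ and study the feasibility mass for $\nu=q^{\otimes m}$:
\[
\alpha_t=\Pr_{M\sim\Mult(m,q)}\bigl(M\le \vec C(t)\bigr).
\]
For any $r$ with $D_{\mathrm{KL}}(r\|q)\le\delta$, the set $\{M\le \vec C(t)\}$ contains a neighbourhood of the type class $m r$ once $C_i(t)\ge m r_i+O(\sqrt m)$ for all $i$. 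The method of types then gives
\[
-\log\alpha_t\le mD_{\mathrm{KL}}(r\|q)+O(\log m).
\]
It follows that $\tau_{\mathrm{KL},\varepsilon_m}$ is at most the first time $t$ at which $C_i(t)\ge m r_i+c\sqrt m$ for every $i$. Taking $r$ to be a minimizer of the frontier $\rho^\star_{p,q}$ and applying concentration of the counts, this hitting time is $m\rho^\star_{p,q}(\delta)+O(\sqrt m/\min_i p_i)$ in expectation. The $O(\log m)$ slack in the exponent is absorbed by perturbing $\delta$, and continuity of $\rho^\star$ in $\delta$ (which holds because $q$ is interior) handles the limit.

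\emph{Lower bound.} Take any universally $(\mathrm{KL},\varepsilon)$-approximate source-rate-free algorithm. Let $r^{(m)}$ be the empirical mean attribute profile of its output, so that $r^{(m)}_i=\frac1m\sum_j \Pr(\widetilde A_j=i)$. Since the returned outputs are a sub-multiset of the M-samples, $\mathbb E[C_i(N)]\ge m r^{(m)}_i$, and Wald's identity gives $\mathbb E[N]=\mathbb E[C_i(N)]/p_i$ for every $i$. Hence
\[
\mathbb E[N]\ge m\max_i r^{(m)}_i/p_i.
\]
By the chain rule and convexity of KL applied to the marginals, $D_{\mathrm{KL}}(r^{(m)}\|q)\le \frac1m D_{\mathrm{KL}}(P_{\mathrm{Alg}}\|q^{\otimes m})\le \varepsilon/m$. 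Therefore $\mathbb E[N]\ge m\rho^\star_{p,q}(\varepsilon/m)$. This step is clean and finite-$m$; it also shows the univ limit once combined with the upper bound.

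\emph{Additive bound.} To obtain \cref{eq:ta-rdc-finite-m-additive-bound}, the upper bound must be sharpened to
\[
\mathcal C^{\mathrm{TA\text{-}RDC}}_{m,\varepsilon}\le m\rho^\star_{p,q}(\varepsilon/m)+3\sqrt m/\min_i p_i,
\]
with no $\log m$ loss. This is the main obstacle. For it I would try the following. The frontier optimizer $r$ with $D_{\mathrm{KL}}(r\|q)=\varepsilon/m$ is a tilted version of $q$, so $\alpha_t\ge \Pr_{q}(M\le \vec C(t))$ can be lower bounded by a change of measure to $\Mult(m,r)$:
\[
\alpha_t\ge \mathbb E_r\!\left[e^{-\log(dr/dq)(M)}\,\mathbf 1\{M\le \vec C(t)\}\right].
\]
Jensen's inequality on the event then gives $-\log\alpha_t\le mD(r\|q)+O(1)$ provided $\vec C(t)$ exceeds $mr$ by about $\sqrt m$ standard deviations. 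The remaining $O(1)$ constant must be controlled, or alternatively $\tau$ compared with the RDC-type argument of \cref{thm:rdc-witness-and-upper}, where the $\sqrt m/p_i$ term arises from the Cauchy--Schwarz bound on the overshoot $\mathbb E\max_i(\cdot)$.

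Finally, since $\mathcal C^{\mathrm{univ}}\ge m$, dividing the additive bound by it yields \cref{eq:ta-rdc-uniform-first-order-optimality} uniformly in $\varepsilon$.
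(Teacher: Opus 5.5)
Your lower bound is the paper's argument essentially verbatim: average marginal profile $c$, the product-reference decomposition plus convexity giving $D_{\mathrm{KL}}(c\|q)\le\varepsilon/m$, and Wald/optional stopping giving $mc_i\le p_i\mathbb E_p[N]$. Your method-of-types upper bound can also be made to give the first-order limits \eqref{eq:first-order-opt-ta-rdc}, falling back on the RDC bound for small tolerances.

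\textbf{The gap: the finite-sample bound is not established.} The gap is \eqref{eq:ta-rdc-finite-m-additive-bound}, from which you then derive \eqref{eq:ta-rdc-uniform-first-order-optimality}. The method of types loses $k\log(m+1)+O(1)$ in the exponent. Because $\kappa-\rho^\star_{p,q}(\delta)$ can grow like $\sqrt{\delta}$ near $\delta=0$, converting that loss into queries costs order $\sqrt{m\log m}$ for some tolerances, not $3\sqrt m/\min_ip_i$.

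\textbf{Why your proposed repair does not close it.} Suppose you change measure to the frontier optimizer $r$ and apply Jensen and Cauchy--Schwarz on $\Omega_t$. With $\beta=r^{\otimes m}(\Omega_t)$ this gives
\[
-\log\alpha_t\le mD_{\mathrm{KL}}(r\|q)-\log\beta+\sqrt{m\operatorname{Var}_{A\sim r}\bigl(\log(r_A/q_A)\bigr)}\sqrt{\frac{1-\beta}{\beta}}.
\]
So the correction is of order $\sqrt m$, not $O(1)$, unless you push $1-\beta$ down to $O(1/m)$. That itself costs an extra $\sqrt{\log m}$ factor in the overshoot. More fundamentally, the frontier optimizer typically has $mD_{\mathrm{KL}}(r\|q)=\varepsilon$, which leaves zero slack. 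Then no positive correction, not even $-\log\beta$ alone, lets this argument verify the certificate $-\log\alpha_t\le\varepsilon$.

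\textbf{The missing idea.} You need to manufacture KL slack by relaxing the cap, and show that this slack provably dominates the fluctuation term. Write $\rho=\rho^\star_{p,q}(\varepsilon/m)$, $p_{\min}=\min_ip_i$ and $\kappa=\max_iq_i/p_i$.

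The paper sets $R=\rho+1/(p_{\min}\sqrt m)$. It tilts to the capped projection $s^\star=\arg\min\{D_{\mathrm{KL}}(s\|q):s_i\le Rp_i\}$, which has the KKT form $s^\star_i=\min\{\lambda q_i,Rp_i\}$.

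Put $\eta_i=\log(\lambda q_i/s^\star_i)\ge0$ and $g=\sum_ip_i\eta_i$.
\begin{itemize}
\item First-order convexity against the frontier optimizer gives $mD_{\mathrm{KL}}(s^\star\|q)\le\varepsilon-\sqrt m\,g/p_{\min}$.
\item Meanwhile $\log(s^\star_A/q_A)=\log\lambda-\eta_A$ ranges over an interval of length at most $g/p_{\min}$. So the Cauchy--Schwarz correction is at most $\sqrt m\,g/(2p_{\min})$.
\end{itemize}
The same multiplier $g$ thus controls both the slack and the fluctuation.

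Combined with $g\ge 4p_{\min}^2(\kappa-R)$, every state with $s^{\star\otimes m}(\Omega_t)\ge 1/2$ satisfies the certificate once $\kappa-\rho>3/(2p_{\min}\sqrt m)$; the $\log 2$ term is absorbed. In the complementary case, RDC's cost $m\kappa+\sqrt{2m}/p_{\min}$ already suffices.

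Finally, a median argument on the auxiliary $\Mult(m,s^\star)$-demand completion time bounds the expected first time with $s^{\star\otimes m}(\Omega_t)\ge1/2$ by $mR+2\sqrt m/p_{\min}=m\rho+3\sqrt m/p_{\min}$. Without this link between the slope of the cap-constrained KL projection and the variance of the log-likelihood ratio, your route gives first-order optimality but not the stated additive constant.
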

For the proof, see \cref{prf:first-order-uni-opt-ta-rdc}.
At $\varepsilon=0$, the stopping time rule of TA-RDC coincides with that of RDC and \cref{thm:first-order-uni-opt-ta-rdc}
recovers
the first-order optimality of RDC. 
We study the 
finite sample performance of TA-RDC in \cref{sec:synth-exp} and find that the uniform 
convergence occurs quickly for moderate $m$ under various
settings. 
We also give first-order optimal frontiers under other divergences 
and procedures that attain them in
\cref{sec:algos-anytime-rdc}.

{\bf Capped anytime RDC.}
If a practitioner specifies a query cap \(n\ge m\), they may run A-RDC
until either the exact completion time \(T_{\nu}^\star\) or the cap \(n\)
is reached.  This gives the \emph{capped-budget anytime RDC ($\textrm{CA-RDC}$)}
procedure, which stops at time
\(N_{n}^{\mathrm{CA\text{-}RDC}}
    \coloneqq
    T_{\nu}^\star\wedge n\)
and returns output attributes
\(\widehat{A}_{T_{\nu}^\star\wedge n}\) (cf.~\cref{eq:anytime-rdc-snapshot}); see \cref{alg:anytime-rdc}. 

Since CA-RDC and A-RDC agree pathwise
(i.e., \( \widehat{A}_{T_{\nu}^\star\wedge n}
=
\widehat{A}_n\) a.s.), 
the discrepancy of CA-RDC under any forward $f$-divergence
satisfies 
\[
    \Delta_f^{\mathrm{un}}
    \left(
    \mathrm{CA\text{-}RDC};
    \nu;
    p
    \right)
    \le 
    \mathbb E_p\left[
    D_f\left(
     P_{\textrm{A-RDC}}^{\vec C}\left(\widehat{A}_n\right)
    \middle|
    \nu
    \right)
    \right]
    = 
    \mathbb E_p\left[
    \Psi_f(\alpha_n)
    \right]
\]
Thus, the unconditional discrepancy of CA-RDC is bounded above by the 
minimum count-conditional discrepancy attainable by any post-processing algorithm with a capped budget $N \le n$. 
We numerically compare the thresholded and capped anytime RDC methods in \cref{sec:synth-exp}.

\section{Experiments}\label{sec:exp}
We provide simulations and experiments with generative AI models to evaluate statistical attribute alignment, with applications to demographic representation and synthetic dataset construction.

\subsection{Simulations}\label{sec:synth-exp}

In order to provide support for the validity of our theoretical results, we first provide experiments in simulated settings. 

{\bf Setting and methodology.}
We consider a synthetic setting with $k=8$ attributes, source and target rates \( p=(0.03,0.05,0.08,0.12,0.16,0.18,0.18,0.20)\) and \(q=(0.14,0.12,0.13,0.15,0.14,0.12,0.11,0.09)\), respectively, and vary the output size $m \in \{100,500,1000\}$. 
We also set the target joint law to
\(\nu=q^{\otimes m}\).
There are no additional features in any of the three settings:
the generated outcomes are identified with their attributes.
Additional experiments varying
the source rates, target rates, number of attributes, and output size are
reported in \cref{app:exp:results-synth-exp}.

We consider the following post-processing algorithms.
\begin{enumerate}
    \item[(i)] \textit{DS}, direct-sampling, queries the model $m$ times and returns the queried outputs.
    It lies in $\mathfrak{A}_{m}^o$, and its unconditional output attribute law is $p^{\otimes m}$.
    Thus, unless the outputs are already aligned, in the sense that the source and target rates are equal $p=q$, DS has a positive statistical alignment discrepancy under any separable forward $f$-divergence (cf.~\cref{def:f-div}).

    \item[(ii)] \textit{RDC}, random demand coupon collector (cf.~\cref{alg:rdc} in \cref{sec:exact-sampling}),
    lies in $\mathfrak{A}_{m}^o(q^{\otimes m})$ (cf.~\cref{def:universal-exactness}).
    It
    first E-samples an ordered target sequence with count vector $\vec L\sim\Mult(m,q)$ and then queries the source until every demand is feasible. 
    Its M-cost is the expected value of the stopping time
    \(
        T_{\nu}^\star=\inf\{t:\vec C(t)\ge \vec L\}.
    \)
    RDC returns the initially sampled target sequence, so its unconditional discrepancy is exactly zero.
    
    \item[(iii)] \textit{Oracle RS}, oracle rejection sampling (cf.~\cref{def:oracle-rs} in \cref{app:ors}), lies in $\mathfrak{A}_{m}(q^{\otimes m})$.
    It sequentially M-samples an output $Y_s \sim P(\cdot \mid x)$ 
    and accepts it with probability 
    $\left( \max_{j\in \textrm{supp}(q)}{q_j}/{p_j} \right)^{-1} \cdot q_i/p_i$ where $i =\phi(Y_s) \in [k]$ is the annotated attribute.
    It continues to M-sample until it accepts $m$ proposals, so that its
    M-cost is \( m \cdot  \max_{j\in \textrm{supp}(q)}{q_j}/{p_j}\).
    The distribution of the attributes conditional on acceptance is exactly $q$, so the unconditional discrepancy of Oracle RS is exactly zero.
    This method minimizes the 
    M-cost among all universally exact post-processing algorithms with known source rates (cf.~\cref{prop:opt-perfect-sampling-fairness-oracle})
    However, since the acceptance probabilities
    require knowledge of the source rates, this is not a practical black-box algorithm.

    \item[(iv)] \(\textit{TA-RDC}\), 
    thresholded anytime random demand coupon collector (cf.~\cref{alg:anytime-rdc} in \cref{sec:algos-anytime-rdc}), lies in $\mathfrak{A}_{m}^o$.
    This algorithm runs anytime RDC in \cref{def:anytime-rdc} until either the stopping time $T_{\nu}^\star$ or statewise $f$-certificate at a specified level of $\varepsilon\ge 0$
    is reached.
    
    \item[(v)] \(\textit{CA-RDC}\), 
    capped anytime random demand coupon collector (cf.~\cref{alg:anytime-rdc} in \cref{sec:algos-anytime-rdc}), lies in $\mathfrak{A}_{m}^o$.
    This algorithm runs anytime RDC in \cref{def:anytime-rdc} until either the stopping time $T_{\nu}^\star$ or cap $n$ is reached.

\end{enumerate}

{\bf Reported metrics.}
For \(d\in[0,D_{\mathrm{KL}}(p\|q)]\), let
\[
    \bar\Delta_m(d)
    \coloneqq
    \frac{1}{m}\Delta_{\mathrm{KL}}^{\mathrm{un}}
    \bigl(\mathrm{TA\text{-}RDC}(md)\bigr),
    \qquad
    \bar C_m(d)
    \coloneqq
    \frac{1}{m}\E_p\!\left[
        N_{\mathrm{KL},md}^{\mathrm{TA\text{-}RDC}}
    \right],
\]
be the unconditional discrepancy and M-cost of TA-RDC normalized by the number of outputs $m$.
We vary the normalized tolerance \(d\) to trace three
cost--discrepancy profiles: the realized TA-RDC profile
\(d\mapsto(\bar\Delta_m(d),\bar C_m(d))\), its nominal certificate
profile \(d\mapsto(d,\bar C_m(d))\), and the first-order optimal KL
frontier \(d\mapsto(d,\rho_{p,q}^\star(d))\)
(cf.~\cref{thm:first-order-uni-opt-ta-rdc}).

For CA-RDC, we vary the cap $n \in [m,6m]$
to trace out a cost-discrepancy profile
\((\Delta_{\mathrm{KL}}^{\mathrm{un}}
\left(
    \mathrm{CA\text{-}RDC}
\right)/m, \E_p[N_{n}^{\textrm{CA-RDC}}]/m)\).
Population-level expressions and Monte Carlo estimators for the M-cost and unconditional discrepancies 
are given in 
\cref{app:metric-evaluation}.

{\bf Results.}
\Cref{fig:synth-cost-discrepancy-profiles} compares TA-RDC and CA-RDC
at three output sizes $m \in [100,500,1000]$.
RDC and Oracle RS both
have zero discrepancy, but RDC is source-rate-free whereas Oracle RS
uses the source rates and is thus not a practical method.
At the opposite endpoint, direct-sampling returns the first \(m\) outputs and hence has normalized M-cost one but a large sampling discrepancy.

Between these endpoints, the anytime RDC methods of TA-RDC and CA-RDC 
achieve normalized
KL discrepancies 
below \(0.05\) 
at normalized M-costs ranging from \(1.85\) to \(2.18\). 
Thus, in this setting, querying the model only about twice as many
times as the number of returned outputs
is sufficient to
substantially reduce the discrepancy from direct-sampling.

The finite-sample profiles also closely track the first-order optimal
frontier, even with a moderate output size of \(m=100\).
Moreover, the nominal statewise certificate for TA-RDC remains  close to its unconditional KL discrepancy. 
This illustrates that the count-conditional objective underlying the certificate can be a
useful proxy for the unconditional sampling discrepancy.

\begin{figure}[tbp]
    \centering
    \includegraphics[width=\linewidth]{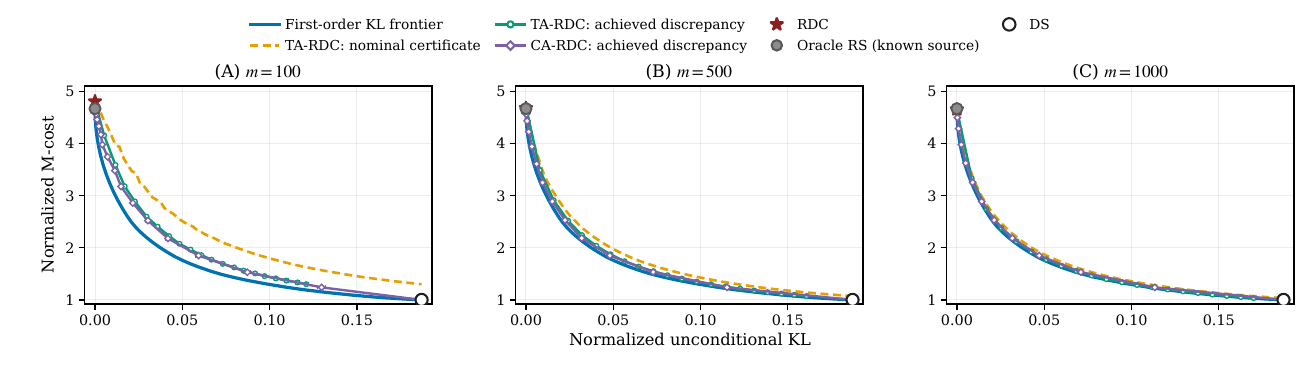}
    \caption{\footnotesize
    Cost--discrepancy profiles under KL divergence for the \(k=8\)
    non-uniform source and target distributions with output sizes
    \(m\in\{100,500,1000\}\).
    The horizontal axis reports normalized unconditional KL discrepancy,
    \(D_{\mathrm{KL}}(P_{\mathrm{Alg}_m}\|q^{\otimes m})/m\), and the
    vertical axis reports normalized M-cost, \(\mathbb E_p[N]/m\).
    The nominal TA-RDC certificate and the first-order optimal frontier
    are included for comparison.
    RDC and Oracle RS attain zero discrepancy, although Oracle RS
    requires knowledge of the source rates.
    DS denotes direct-sampling, which returns the first \(m\) candidate
    samples and has normalized M-cost one.
    }
    \label{fig:synth-cost-discrepancy-profiles}
\end{figure}

\subsection{Empirical experiments}\label{sec:emp-exp}

We now provide empirical experiments with various generative AI models. 
\subsubsection{Protected attribute alignment in text-to-image generation}\label{sec:t2i-exp-general}

We consider
text-to-image (T2I) generation settings:
generating (i) headshots of individuals (\cref{sec:exp-headshots-individuals}), 
and (ii) group photos of people (\cref{sec:exp-group-photos})
in a professional 
medicine setting.
We seek to align the distribution of demographic attributes, including
gender, race, and/or age, with a specified target.
This is a fairness-motivated instance of statistical attribute alignment.
In our experiments, 
we consider various methods that generate batches of images (or, M-samples)
and black-box post-processing methods
that are  
applied to these batches.
Since our focus is on black-box access, we compare with approaches that are applicable in that setting.

{\bf M-sampling methods.} 
We consider the following M-sampling methods that modify the text of the
input prompt to the generative AI model. These are roughly ordered by
increasing demographic specificity of the input prompt, and help showcase
different ways in which generated attribute distributions may remain misaligned
even after a prompting intervention is applied.

\begin{enumerate}
    \item[(i)] \textit{HPS} (Hard Prompt Searching,
    \citealp{ding2021cogview})
    fixes only the professional
    context of the prompt 
    and does not mention
    protected attributes such as gender, age, or race. 
    This tests the
    model's default distribution over protected attributes.
    For example, in the individual-headshot setting, the
    prompt is
    \textit{``Show a photorealistic professional headshot of a person working
    in medicine or healthcare, in professional attire.''}

    \item[(ii)] \textit{TSI} (Targeted Sampling Instruction) uses a single
    natural language prompt to ask the generator to sample from the desired
    protected attribute law. This baseline is inspired by work on ethical
    natural language interventions \citep{zhao2021ethical,bansal2022well} and
    prompt-based bias control in text-to-image generation \citep{clemmer2024precisedebias,shin2024can}. 
    In particular, TSI leaves
    the randomization implicit in the model by instructing it to first choose
    a target category or count-vector attribute uniformly at random and then
    generate an image matching that choice.

    \item[(iii)] \textit{Demographic HPS} (Hard Prompt Searching,
    \citealp{ding2021cogview}) externally randomizes over attribute-specific hard
    prompts. 
    In detail, HPS forms a set of hard prompts
    \(\{x_1',\ldots,x_k'\}\), where \(x_i'\in\mathcal X\) 
    requests 
    attribute \(i\in[k]\), and samples from the mixture
    \(
        P_{\rm HPS}(\cdot\mid x)
        =
        \sum_{i=1}^k
        q_i \cdot P_\theta(\cdot\mid x_i').
    \)
    For example, an attribute-specific prompt \(x_i'\) in the
    professional-headshot setting is
    \textit{``Show a photorealistic professional headshot of a young Asian
    woman working in medicine or healthcare, in professional attire.''}
        Thus, HPS differs from TSI as the randomization 
        is performed externally, before querying the generative model.
\end{enumerate}

{\bf Post-processing methods.}
We consider 
the RDC and anytime RDC algorithms TA-RDC and CA-RDC from
\cref{sec:synth-exp}, as well as direct-sampling (DS) as a baseline.
As post-processing methods, we apply them
on top of any of the M-sampling methods 
(e.g., TSI or demographic HPS).

In settings where hard prompting
reliably returns outputs with the prompted attribute, post-processing methods 
cannot meaningfully reduce the unconditional discrepancy (cf.~\cref{def:uncond-sampling-fairness-discrepancy}). Indeed, if hard prompting is perfectly reliable, 
one could randomly sample output attributes so that the source rates $p$ are equal to those of the target $q$.
Then, the discrepancy of a method that returns the first $m$ outputs is exactly zero.

However, as we will demonstrate empirically, HPS and TSI can fail,
because the generative AI model may not necessarily follow instructions or satisfy constraints reliably.
This is 
especially true 
for more niche or derived attributes 
(e.g., the social vulnerability index in persona generation tasks; see,~\cref{sec:exp-persona-generation}).

{\bf GenAI models used.}
Publicly available and open-weight diffusion models 
have been widely adopted as
synthetic image generators in applied data augmentation studies,
see, e.g., \cite{stockl2023evaluating, tian2023stablerep, bluethgen2025vision}
and \cref{app:exp:details-emp-exp-general} for further discussion.
Consequently,
we generate candidate images using open-weight text-to-image endpoints hosted by \texttt{fal},\footnote{\url{fal.ai}} 
a platform that provides querying access to various models.

Specifically, we consider the \texttt{fal}-hosted T2I models
\texttt{Flux-2-dev} \citep{blackforestlabs2026flux2dev},
~\texttt{Qwen-Image-2512} \citep{qwen2025qwenimage2512}, 
and
\texttt{HiDream-O1-Image} \citep{hidream2026o1image}, which we abbreviate to \texttt{Flux}, \texttt{Qwen}, and \texttt{HiDream}, respectively.
For certain combinations of M-sampled methods (e.g., HPS and TSI) and models (e.g., \texttt{Flux} and \texttt{Qwen}), we defer T2I 
empirical results to 
\cref{app:exp:results-individual-headshot} 
and \cref{app:exp:results-group-photo} for conciseness.

For annotation, 
we use OpenAI API calls to \texttt{GPT-5.5} as a
vision-language annotator \citep{openai2026gpt55docs}.
Prior work has shown this approach can have high agreement with human annotation.\footnote{See \cref{app:exp:details-emp-exp-general} for further discussion.}
General implementation details for T2I experiments are given in \cref{app:exp:details-emp-exp-general} with experiment specific prompts, 
model configuration
settings,
and annotation procedures provided in \cref{app:exp:details-t2i-individual-headshot} and \cref{app:exp:details-t2i-group-photo}.

{\bf Experimental setting and reported metrics.}
For each T2I experiment, model, and M-sampling method, we first
generate and annotate a fixed pool of images
\(Y_s\in\mathcal{Y}\), \(s=1,\ldots,1000\),
which we re-use across
evaluations.
The target joint law is $\nu = q^{\otimes m}$ where the experiment specific target rates $q$ 
vary across tasks with different attribute definitions and numbers of attributes $k$.

We report two types of quantities.
First, 
we record three 
pool-level diagnostics that depend only on the frozen pool of 1000 images and the target rates: 
(i)
the \textit{on-support rate} is
the fraction
of images whose annotations lie in
\(\operatorname{supp}(q)\);
(ii) the \textit{target-attribute coverage} is the number of 
unique attributes within the support of the target rates that are observed in the pool of images
and, (iii), for hard prompt M-sampling methods that directly request a protected
attribute group (e.g., demographic HPS), 
we additionally record the \textit{compliance rate}, defined as the
fraction of images whose requested and realized annotations
match.

Second, for each post-processing method $\textrm{Alg}_m$,
we approximate KL statistical alignment discrepancies 
by 
sampling outputs with replacement from the frozen pool of images (cf.~\cref{app:metric-evaluation}).
Since the T2I model may produce images whose annotated attribute lies outside the support of the target rates, $\textrm{supp}(q)$,
there is a strictly positive probability that an M-sampled batch of $n$ outputs $\mathcal{B}_n = (Y_1,\ldots, Y_n)$ 
admits no feasible attribute sequence under the target joint law.
In this case, the KL discrepancy of CA-RDC is necessarily $+\infty$, so we instead
report the feasible-run discrepancy \(D_{\mathrm{KL}}(P_{\textrm{CA-RDC}}(\widehat A_n\mid\alpha_n>0)\|q^{\otimes m})\) and the infeasibility probability 
$\beta_n \coloneq 1 - \Pr(\alpha_n > 0)$;
see \cref{app:metric-evaluation-ca-rdc} for more details.

In the main text, we report results for \(m=50\) returned images and, as in \cref{sec:synth-exp}, trace out cost-discrepancy profiles by varying either the normalized tolerance level $d \coloneq \varepsilon/m \in [0,3]$ 
or the
capped-budget \(n \in [m,20m]\) for TA-RDC and CA-RDC, respectively.
We defer results for discrepancies under other divergences and
settings with other output sizes 
to \cref{app:exp:results-individual-headshot,app:exp:results-group-photo,app:exp:results-svi}.

\subsubsection{Professional headshots of individuals}\label{sec:exp-headshots-individuals}

{\bf Experimental setup.}
The target distribution $q$
is uniform over tuples of  
(1)  a binary perceived-gender attribute with bins \{\texttt{male-presenting}, \texttt{female-presenting}\};
(2)
age over two categories with bins \{\texttt{young}, \texttt{old}\}; and
(3)
race/ethnicity presentation over four categories with bins
\{\texttt{White/Caucasian}, \texttt{Black/African descent}, \texttt{Asian}, \texttt{Hispanic/Latino}\}, 
yielding a total of $k=16$ attributes.
We abbreviate the gender, age, and race bins as M/F, Y/O, and W/B/A/H, respectively.

For this experiment, the annotator is required to select one of the 16 attributes, so that the annotated attribute of all sampled images lies in the support of the target rates by construction;
see \cref{sec:exp-group-photos,sec:exp-persona-generation} below for 
experiments that allow outputs whose annotated attribute lies outside $\textrm{supp}(q)$.

{\bf Main results.}
\textit{Hard prompt compliance and observed attributes.}
For \texttt{HiDream}, 
HPS and TSI only produce three and
five target attributes out of 16 in our 1000 image pool, respectively (cf.~\cref{tab:individual-headshot-hidream-support}).
Thus,  
these two prompts 
do not produce images from most 
target attributes. 
In contrast, 
demographic HPS, which explicitly prompts for a target attribute,
covers all 16 attributes and has a high
compliance rate of \(0.974\).
We study why hard prompting fails
in \cref{app:exp:results-individual-headshot}.

\begin{table}
\centering
\small
\setlength{\tabcolsep}{3pt}
\caption{\footnotesize Pool-level diagnostics 
for the individual-headshot prompt sources using the \texttt{HiDream-O1-Image} model. 
Attribute abbreviations are gender/age/race: F/M, Y/O, and W/B/A/H.}
\label{tab:individual-headshot-hidream-support}
\footnotesize
\begin{tabular}{@{}p{0.20\linewidth}ccp{0.36\linewidth}@{}}
\toprule
M-sampling method & Compliance rate & Target-attribute coverage & Attributes observed\\
\midrule
HPS & -- & 3/16 & F/Y/W, F/Y/B, F/Y/H \\
\cmidrule(lr){1-4}
TSI & -- & 5/16 & F/Y/B, F/Y/H, F/O/B, F/O/A, F/O/H \\
\cmidrule(lr){1-4}
Demographic HPS & 0.974 & 16/16 & All \\
\bottomrule
\end{tabular}
\end{table}

\textit{Reductions in sampling discrepancies.}
Across all three T2I models with demographic HPS M-sampling,
the anytime RDC methods consistently decrease the unconditional KL discrepancies towards zero (cf.~\cref{fig:emp-t2i-individual-headshot-cost-discrep-profiles}).
For example, at capped-budget \(n=500\), the KL discrepancy of CA-RDC is below
\(1.5\times10^{-4}\) across all models 
with normalized M-costs of \(5.14\) for \texttt{Flux}
and $\approx 3.10$ for both \texttt{Qwen} and \texttt{HiDream}.
Similarly, at closely matched normalized M-costs
of \(5.17,3.06\) and \(3.11\) for
\texttt{Flux}, \texttt{Qwen} and
\texttt{HiDream}, respectively,
TA-RDC achieves discrepancies
below \(1.3\times10^{-4}\) for all models.

\begin{figure}
    \centering
    \includegraphics[width=1.0\linewidth]{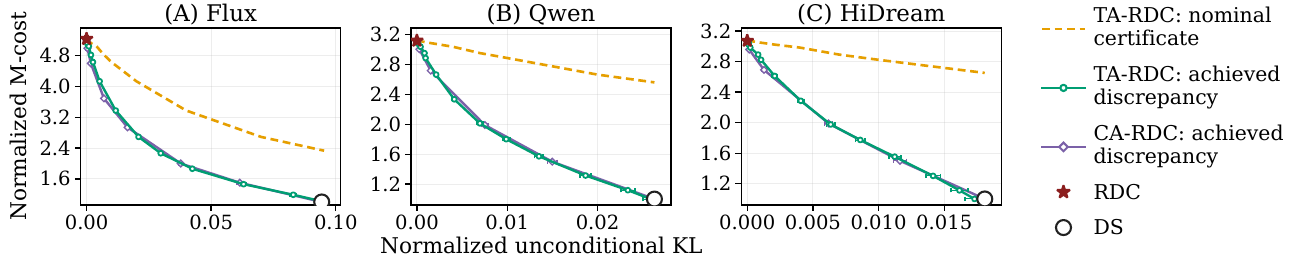}
    \caption{\footnotesize
    Statistical alignment discrepancies under KL divergence normalized by $m=50$, the number of returned images.
    The images in the pool are generated using 
    demographic HPS and are annotated into $k=16$ target attributes.
    The panels correspond to the three T2I models.
    }
    \label{fig:emp-t2i-individual-headshot-cost-discrep-profiles}
\end{figure}

\subsubsection{Synthetic datasets for pairwise face recognition audits}
\label{sec:exp-pairwise-audit}
We next consider a synthetic dataset curation task.
In facial
identity audits, 
a verifier seeks to identify duplicate pairs of individuals in
a returned set of face images. 
A central quantity in such systems is the false
match rate over image pairs, which may vary with demographic attributes
\citep[see, e.g.,][]{klare2012face, grother2019face, grother2022face}.
This quantity depends on 
the distribution of pairs of images, rather than only on marginal distributions.
Hence, proportion matching (PM), 
which selects feasible attribute counts as close as possible to target proportions, can be misleading here.

{\bf Experimental setup.}
Using the demographic HPS headshot pools from \cref{sec:exp-headshots-individuals}, we follow a standard face-verification workflow of embedding two face
images and applying a threshold to their embedding similarity as outlined in
\cite{schroff2015facenet,deng2019arcface}.
In particular, 
each image
is embedded using a Siamese face-recognition network
(\texttt{InsightFace} \texttt{buffalo\_l}\footnote{\url{https://github.com/deepinsight/insightface}}),
and a pair is classified as a same-person match when the cosine similarity
of the two embeddings exceeds a threshold \(t\). 
Because no identity ground truth is available for the generated images, we adopt the operational convention that each generated image represents a distinct individual; under this convention, any positive same-person decision is counted as a false match.
We keep the same \(k=16\) demographic classes, output size
\(m=50\), and target joint law  $\nu = q^{\otimes m}$ for uniform $q$.
We compare our CA-RDC method
with proportion matching at a capped-budget of $n=200$.
As a benchmark,
we also record downstream metrics of an oracle whose output label law is exactly the joint target law.

{\bf Results.}
At cosine threshold $t=0.45$, CA-RDC yields false-match rates closer to the oracle benchmark across all three image models; see \cref{tab:pairwise-identity-audit-cross-model}. 
For \texttt{HiDream}, the oracle rate is $6.404\%$, compared with $6.099\%$ for CA-RDC and $4.670\%$ for PM.
Thus, targeting a marginal objective underestimates the false-match rate.
This illustrates that a
synthetic data curation task can require specifying and approximating a joint attribute law rather than a marginal law.
Implementation details and a threshold sensitivity analysis are deferred to \cref{app:exp:details-results-pairwise-audit-ca-rdc-vs-pm}.

\begin{table}[t]
\centering
\small
\setlength{\tabcolsep}{3pt}
\caption{\footnotesize
Synthetic pairwise audit task at cosine threshold $t=0.45$. Same- and different-attribute match rates are estimated from all valid image pairs in the frozen pool of 1000 demographic HPS images. 
Oracle is an infeasible benchmark with output attribute law equal to the target joint law $\nu = q^{\otimes m}$.
Parentheses report standard errors across ten sampled batches.}
\label{tab:pairwise-identity-audit-cross-model}
\begin{tabular}{lccccc}
\toprule
Model &
$\widehat\alpha_{\rm same}^{(2)}$ &
$\widehat\alpha_{\rm diff}^{(2)}$ &
Oracle & CA-RDC & PM \\
\midrule
\texttt{Flux} &
$3.726\%$ &
$0.0395\%$ &
$0.270\%$ &
$0.257\%\;(0.005\%)$ &
$0.205\%\;(0.001\%)$ \\
\texttt{Qwen} &
$83.019\%$ &
$0.449\%$ &
$5.610\%$ &
$5.295\%\;(0.112\%)$ &
$4.089\%\;(0.000\%)$ \\
\texttt{HiDream} &
$94.662\%$ &
$0.520\%$ &
$6.404\%$ &
$6.099\%\;(0.156\%)$ &
$4.670\%\;(0.000\%)$ \\
\bottomrule
\end{tabular}
\end{table}

\subsubsection{Compositional group photos}
\label{sec:exp-group-photos}

{\bf Experimental setup.}
We next consider a compositional group photo setting, where each generated image
is requested to 
contain a foreground trio of adults in a medical/healthcare scene. 
The attributes we aim to balance
are the demographics of the people. 
In particular, 
we keep track of the counts of the group members'
age, gender, and race/ethnicity.
The target distribution is uniform 
over all possible combinations of (one man \& two women OR two men \& one woman)
$\times$ one White/Caucasian, one Black/African descent, \&
(one Hispanic/Latino OR one Asian adult),
$\times$
(two young adults \& one older adult OR three young adults).
This yields \(k = 2 \times 2 \times 2 = 8\) attributes.

TSI and demographic HPS prompts may request, for example, 
an image whose three foreground adults comprise of two women \& one man; two young adults \& one older adult; and one White/Caucasian, one Asian, \& one person of Black/African descent.
Consequently, the prompt does not specify the exact demographics of each individual.
Full details are deferred to \cref{app:exp:details-t2i-group-photo}.

\begin{figure}[!t]
    \centering
    \includegraphics[width=0.8\linewidth]{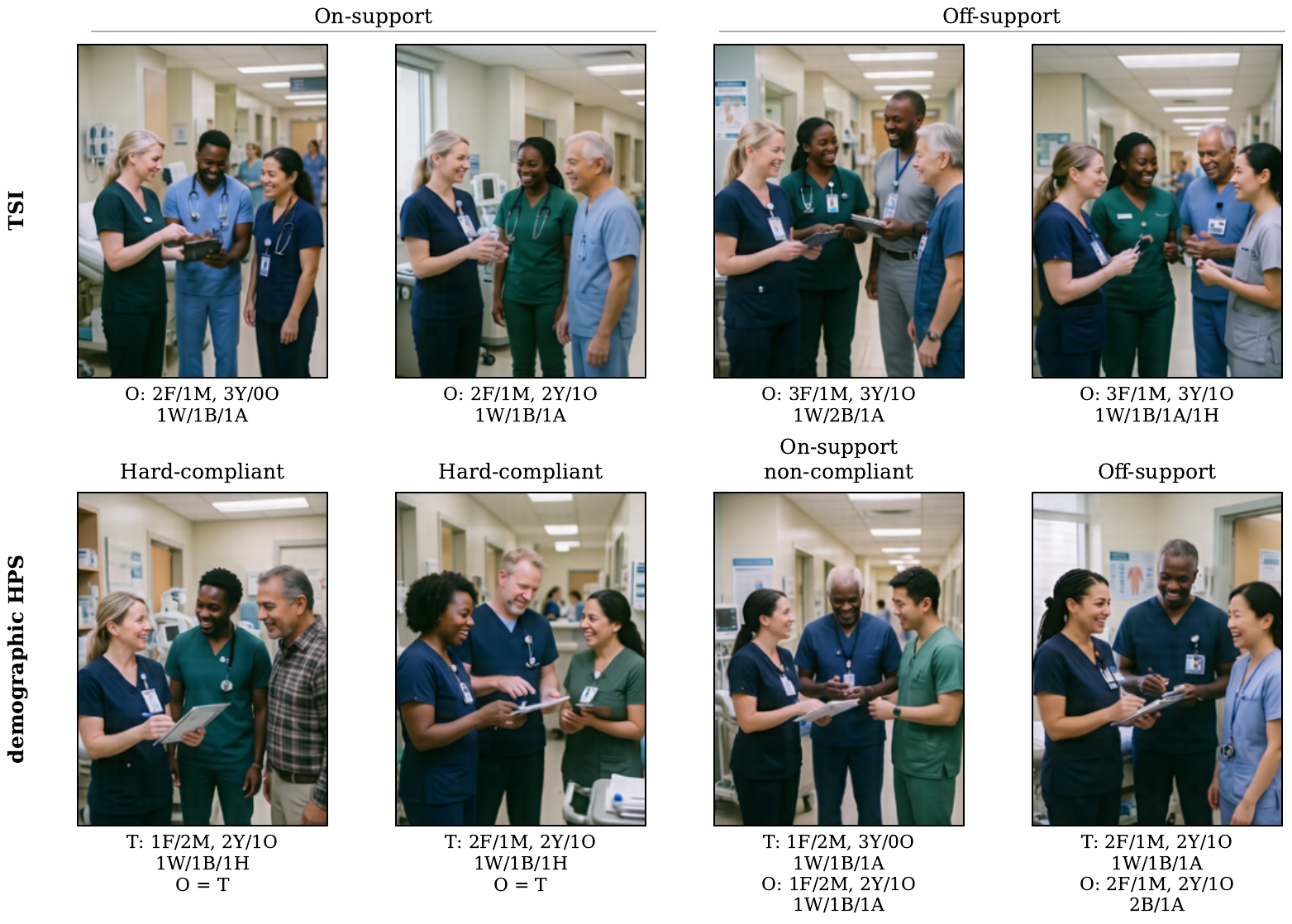}
    \vspace{-0.25em}
    \caption{\footnotesize Examples of \texttt{HiDream} group-photo generations. The top row shows on- and off-support TSI generations; captions report the annotated observed count vector (O). The bottom row shows hard-compliant, on-support non-compliant, and off-support demographic HPS generations; captions report the prompted target count vector (T) and annotated observed count vector (O). Abbreviations are F/M for female/male, Y/O for young/old, and W/B/A/H for White/Black/Asian/Hispanic.
    }
    \label{fig:group-photo-hidream-examples}
\end{figure}

{\bf Main results.}
\textit{Qualitative results.}
\Cref{fig:group-photo-hidream-examples} shows that TSI 
often does not follow instructions.
The same figure separates
demographic HPS outputs into hard-compliant, 
on-support non-compliant, and
off-support examples. 
Importantly, 
images that are 
non-compliant yet on-support can be used to improve the count-conditional sampling discrepancy via post-processing methods like the anytime RDC algorithms.

\textit{Hard prompt compliance and on-support rate.}
\Cref{tab:group-photo-hidream-support} shows that 
models find it 
substantially harder 
to follow instructions about group composition than in the
individual headshot experiments. 
For the
\texttt{HiDream} model, TSI has on-support rate \(0.028\), while demographic HPS
raises the on-support rate to \(0.624\) and covers all eight target attributes.
Even under demographic HPS, however, the 
compliance rate is only
\(0.219\), 
indicating that following group compositional constraints
is difficult.

\begin{table}
\centering
\footnotesize
\caption{\footnotesize Pool-level diagnostics for the group-photo prompt sources using the \texttt{HiDream-O1-Image} model.}
\label{tab:group-photo-hidream-support}
\begin{tabular}{lrrc}
\toprule
M-sampling method & On-support rate & Compliance rate & Attributes observed \\
\midrule
TSI & 0.028 & -- & 6/8 \\
\cmidrule(lr){1-4}
Demographic HPS & 0.624 & 0.219 & 8/8 \\
\bottomrule
\end{tabular}
\end{table}

\textit{Reductions in sampling discrepancy.}
Even when explicitly specifying the demographics in the input prompt, \texttt{HiDream}
produces images whose attribute lies outside of the target support.
Consequently, 
we report the feasible-run discrepancy 
as well as the infeasibility probability $\widehat \beta_n$ (cf.~\cref{sec:t2i-exp-general}). 

As seen in \cref{fig:emp-t2i-group-photo-cost-discrep-profiles},
TA-RDC and CA-RDC meaningfully reduce KL discrepancies: 
CA-RDC with a capped-budget of $n=1000$ achieves a feasible-run discrepancy of \(3.6\times10^{-4}\),
with a normalized M-cost of
\(11.85\).
Similarly, TA-RDC reduces the discrepancy to \(8.2\times10^{-4}\) 
at a matched M-cost of \(11.97\).

Furthermore,
the infeasibility probability decreases rapidly as the capped-budget increases.
At \(n=100\), $\widehat\beta_n$ is \(0.0043\) and by \(n=200\) it falls to less than \(10^{-27}\).
Thus, even in the presence of substantial off-target samples,
CA-RDC requires only a moderate query cap to produce a feasible
output sequence 
with probability effectively equal to one.

\begin{figure}
    \centering
    \includegraphics[width=1.0\linewidth]{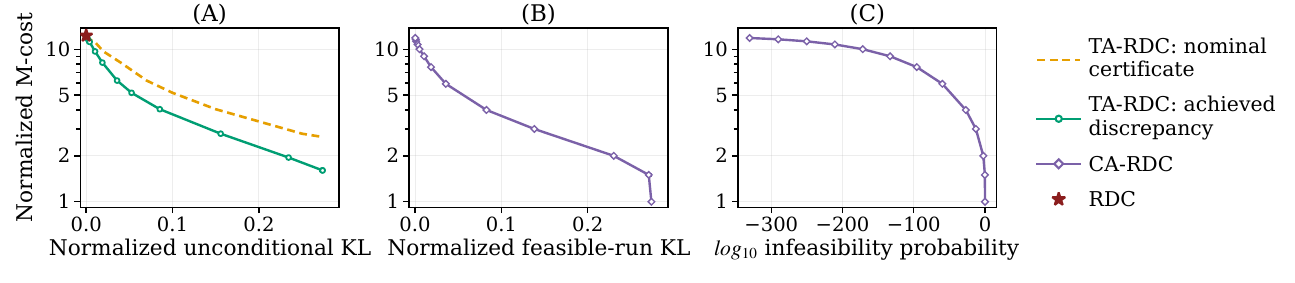}
    \caption{\footnotesize
    Various metrics for the group-photo setting with 
    demographic HPS and the \texttt{HiDream} model.
    There are $k=8$ target attributes and each post-processing method returns $m=50$ images. 
    Because hard prompting 
    generates images whose annotated attribute lies outside $\textrm{supp}(q)$, 
    we report the feasible-run discrepancies \(D_{\mathrm{KL}}(P_{\textrm{CA-RDC}}(\widehat A_n\mid\alpha_n>0)\|q^{\otimes m})\) and the infeasibility probability 
    $\beta_n \coloneq 1 - \Pr(\alpha_n > 0)$ (cf.~\cref{sec:t2i-exp-general}) in panels (B) and (C), respectively.}
    \label{fig:emp-t2i-group-photo-cost-discrep-profiles}
\end{figure}

\subsubsection{Geocoded persona generation}\label{sec:exp-persona-generation}

{\bf Experimental setup.}
We next consider a synthetic patient-persona generation task, motivated by constructing evaluation datasets for patient-facing digital health assistants and related conversational health systems
\citep{reichenpfader-denecke-2024-simulating,kyung2025patientsim,liao2024automatic,rashidian2025aiagents,cook2025virtual}.
We generate patient personas, with fields including 
an age band,
care setting, city, state, ZIP code, request type, and patient message.
The audited attribute is the CDC 2022 Social Vulnerability Index (SVI) quartile of
the generated ZIP code.\footnote{CDC/ATSDR SVI data downloads:
\url{https://svi.cdc.gov/dataDownloads/data-download.html}.}
We discretize the national overall SVI percentile into four quartiles, assign
invalid or unmatched ZIP codes to an auxiliary attribute with target mass zero, and
set the target distribution to be uniform over the four valid SVI quartiles.
We use 
\texttt{GPT-5.5} to generate personas.

{\bf M-sampling and post-processing methods.}
As in the text-to-image experiments, we distinguish between the method used to
generate outputs
and the post-processing rules applied thereafter to them. 
We consider two M-sampling methods.

\begin{enumerate}
    \item[(i)] \textit{Hard SVI prompting.}
    This direct hard-prompt baseline externally randomizes over the four
    requested SVI quartiles and queries the model with an attribute-specific prompt.
    For example, for a requested quartile, the prompt asks for a ZIP whose SVI percentile lies in the corresponding quartile range. 

    \item[(ii)] \textit{County-seeded prompting.}
    Here, we sample a
    county/state seed and ask the model to generate a persona located in or
    near that county. The prompt does not mention SVI, social vulnerability, or
    the target quartile. This is intended to broaden geographic support
    without asking the model to reason about the target SVI quartile.
\end{enumerate}

For each M-sampling method, we run RDC and the anytime RDC procedures of TA-RDC and CA-RDC.
Detailed prompts, the 
annotation procedure, and
example personas
are provided in \cref{app:exp:details-svi,app:exp:results-svi}.

{\bf Reported metrics.}
Following the protocol of \cref{sec:t2i-exp-general}, 
we first generate and annotate a fixed pool of \(1000\) personas for each M-sampling method.
We then report the three pool-level diagnostics of the on-support rate, target-attribute coverage, and compliance rate.
For each post-processing method, we also track
household and geographic diversity metrics.
In particular, for each set of $m$ returned personas,
we record the proportion of outputs whose ZIP code lies above the $75$th national percentile for a socioeconomic component (i.e., housing-cost burden, transportation access, and uninsurance rate), as well as the top-five-city share rate
(share of responses in the five most frequent cities).

Identical to the protocol of \cref{sec:exp-headshots-individuals},
we approximate the KL statistical alignment discrepancies by sampling with replacement from the frozen pool of generated personas.
We record the feasible-run discrepancy and
infeasibility probability 
for settings where there are personas whose annotated attribute lies outside the support of the target rates, $\mathrm{supp}(q)$.

In the main text, we consider output sizes of \(m=50\) personas and, as in \cref{sec:synth-exp}, trace out cost-discrepancy profiles by varying either the normalized tolerance level $d \coloneq \varepsilon/m \in [0,3]$ 
or the
capped-budget \(n \in [m,20m]\) for TA-RDC and CA-RDC, respectively.
We defer results for discrepancies under other divergences and
settings with other output sizes to
\cref{app:exp:results-svi}.

\begin{table}[t]
    \centering
    \scriptsize
    \setlength{\tabcolsep}{2.6pt}
    \renewcommand{\arraystretch}{1.15}
    \caption{\footnotesize
    SVI4 persona-generation results at \(m=50\).
    Pool-level diagnostics are reported once per M-sampling method, while the
    remaining columns report means and standard deviations over sets of returned
    persona outputs.
    The three socioeconomic rates give the proportions of returned personas,
    among those linked to the 2022 national ZCTA-level SVI data, whose ZCTAs
    lie above the 75th national percentile for the corresponding indicator.
    The top-five city--state share is the proportion belonging to the five
    most frequently represented city--state pairs within each returned set.
    Anytime RDC methods are selected by choosing the nearest hyperparameter
    setting (normalized tolerance \(d=\varepsilon/m\) or cap \(n\) for
    TA-RDC and CA-RDC, respectively) whose M-cost is halfway between direct
    sampling and RDC.
    }
    \label{tab:svi-main}
    \resizebox{\textwidth}{!}{%
    \begin{tabular}{@{}lccc@{\hspace{6pt}}llccccc@{}}
        \toprule
        \multicolumn{4}{c}{Pool-level diagnostics}
        &
        \multicolumn{7}{c}{Returned personas after post-processing}
        \\
        \cmidrule(lr){1-4}
        \cmidrule(lr){5-11}
        M-sampling method
        & \shortstack{On-support\\rate}
        & \shortstack{Target-attribute\\coverage}
        & \shortstack{Compliance\\rate}
        & \shortstack{Post-processing\\method}
        & Setting
        & \(\text{M-cost}/m\)
        & \shortstack{Housing-\\cost burden}
        & \shortstack{No-vehicle\\prevalence}
        & \shortstack{Uninsured\\prevalence}
        & \shortstack{Top-5-city--state\\share}
        \\
        \midrule

                \multirow{4}{*}{Hard SVI prompting}
        & \multirow{4}{*}{\(1.00\)}
        & \multirow{4}{*}{\(4/4\)}
        & \multirow{4}{*}{\(0.35\)}
        & RDC
        & --
        & \(21.43\;(7.86)\)
        & \(0.35\;(0.07)\)
        & \(0.39\;(0.07)\)
        & \(0.11\;(0.04)\)
        & \(0.42\;(0.06)\)
        \\

        & & & &
        TA-RDC
        & \(d=0.07\)
        & \(11.97\;(4.57)\)
        & \(0.41\;(0.07)\)
        & \(0.46\;(0.07)\)
        & \(0.12\;(0.05)\)
        & \(0.37\;(0.05)\)
        \\

        & & & &
        CA-RDC
        & \(n=600\)
        & \(11.74\;(1.02)\)
        & \(0.41\;(0.07)\)
        & \(0.46\;(0.07)\)
        & \(0.12\;(0.05)\)
        & \(0.38\;(0.05)\)
        \\

        & & & &
        DS
        & --
        & \(1.00\;(0.00)\)
        & \(0.60\;(0.07)\)
        & \(0.62\;(0.07)\)
        & \(0.20\;(0.06)\)
        & \(0.39\;(0.06)\)
        \\

        \cmidrule(lr){1-11}

        \multirow{4}{*}{County-seeded prompting}
        & \multirow{4}{*}{\(0.98\)}
        & \multirow{4}{*}{\(4/4\)}
        & \multirow{4}{*}{\(\text{--}\)}
        & RDC
        & --
        & \(1.74\;(0.41)\)
        & \(0.27\;(0.06)\)
        & \(0.31\;(0.06)\)
        & \(0.17\;(0.05)\)
        & \(0.15\;(0.03)\)
        \\

        & & & &
        TA-RDC
        & \(d=0.04\)
        & \(1.40\;(0.20)\)
        & \(0.30\;(0.06)\)
        & \(0.34\;(0.06)\)
        & \(0.18\;(0.05)\)
        & \(0.16\;(0.03)\)
        \\

        & & & &
        CA-RDC
        & \(n=70\)
        & \(1.37\;(0.07)\)
        & \(0.30\;(0.06)\)
        & \(0.34\;(0.07)\)
        & \(0.18\;(0.05)\)
        & \(0.16\;(0.03)\)
        \\

        & & & &
        DS
        & --
        & \(1.00\;(0.00)\)
        & \(0.32\;(0.07)\)
        & \(0.36\;(0.07)\)
        & \(0.19\;(0.06)\)
        & \(0.16\;(0.03)\)
        \\
        
        \bottomrule
    \end{tabular}%
    }
\end{table}

{\bf Main results.}
\textit{Hard compliance and geographic diversity.}
\cref{tab:svi-main} demonstrates that hard SVI prompting has a compliance rate of \(0.356\) and is thus not reliable.
Using these hard SVI prompted samples 
with direct-sampling 
also results in highly concentrated
socioeconomic factors.
For example, the proportion of returned personas associated with high
housing-cost burden is \(0.60\) under direct-sampling, compared with
\(0.35\)--\(0.41\) under RDC and the anytime RDC methods.
This demonstrates that aligning the target attributes also changes the distribution of correlated auxiliary features.
We provide 
diagnostics for hard prompt failures
in
\cref{app:exp:results-svi}.

\textit{Reductions in sampling discrepancy.}
County-seeded prompting produces personas whose annotated attribute lies outside of $\mathrm{supp}(q)$ -- 
of the 1000 generated personas, 18 were unmatched to an SVI quartile. 
Consequently, 
we report the feasible-run discrepancy and infeasibility probability $\widehat \beta_n$ 
(cf.~\cref{sec:t2i-exp-general}).

As can be seen in \cref{fig:emp-capped-budget-curve-svi},
both TA-RDC and CA-RDC reduce the discrepancies under either
persona-generation method.
However, 
the degree of reduction appears to heavily depend on the source rates.
For example, under hard SVI prompting, the observed quartile distribution is highly imbalanced despite having no observed off-support outputs (see, also, \cref{app:exp:results-svi}).
In this case, TA-RDC and CA-RDC both require a normalized M-cost of \(\approx 12 \) to achieve a KL discrepancy of \(<0.05\).

In contrast,
county-seeded prompting produces a more balanced distribution over valid SVI quartiles.
Consequently, TA-RDC can achieve a similar discrepancy 
as the anytime RDC methods under hard SVI prompting at a much smaller normalized M-cost of \(1.02\).
These results illustrate that post-processing model outputs \textit{without} hard prompting may actually improve the statistical alignment discrepancies over M-sampling methods that explicitly request certain features.

Furthermore, like with the group photo T2I experiments in \cref{sec:exp-group-photos}, the CA-RDC
infeasibility probability under county-seeded prompting decreases
rapidly with the capped-budget \(n\). 
At the moderately sized cap of $n=70$, CA-RDC can achieve a feasible-run normalized
KL discrepancy of \(0.013\), with a negligible infeasibility probability of \(\widehat{\beta}_{70}=3.8\times10^{-20}\).

\begin{figure}
    \centering
    \includegraphics[width=0.85\linewidth]{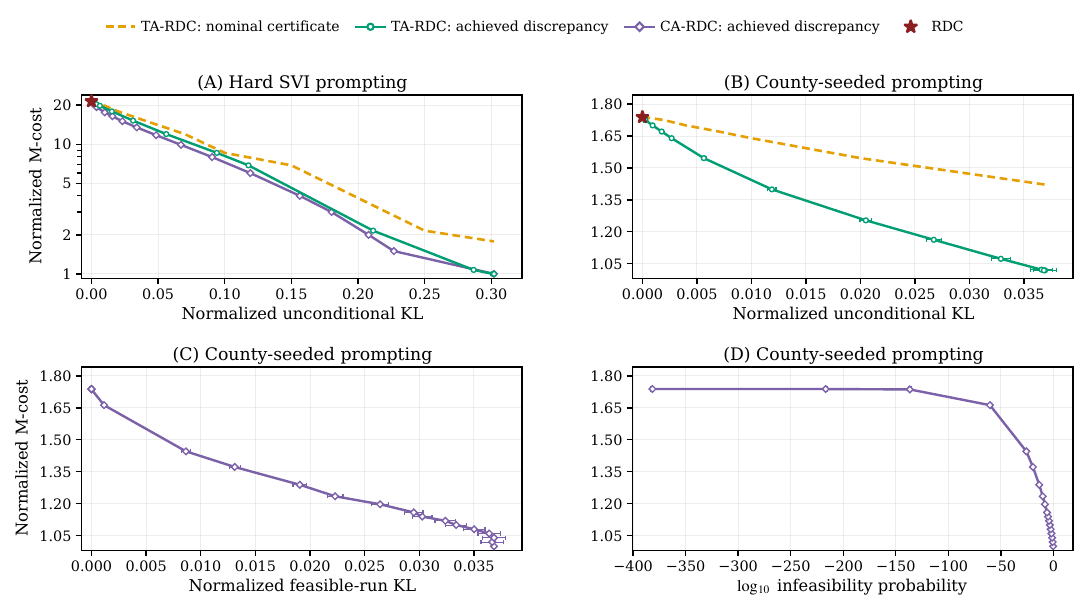}
    \caption{\footnotesize
    Various metrics for the SVI persona generation task with \(k=4\) target quartiles and \(m=50\) returned personas. 
    Panel (A) reports the KL discrepancy with hard SVI prompting for both TA-RDC and CA-RDC,
    since the annotated attributes of all 1000 generated personas lie inside \(\textrm{supp}(q)\).
    In contrast, 
    county-seeded prompting produces $18$ personas with attributes outside of \(\textrm{supp}(q)\) (cf.~\cref{tab:svi-main}), so panel (B) plots the cost-discrepancy profile of only 
    TA-RDC.
    For CA-RDC,
    we report the feasible-run discrepancies \(D_{\mathrm{KL}}(P_{\textrm{CA-RDC}}(\widehat A_n\mid\alpha_n>0)\|q^{\otimes m})\) and the infeasibility probability 
    $\beta_n \coloneq 1 - \Pr(\alpha_n > 0)$ (cf.~\cref{sec:t2i-exp-general}) in panels (C) and (D), respectively.
    }
    \label{fig:emp-capped-budget-curve-svi}
\end{figure}

\section{Discussion}\label{sec:discussion}

Statistical attribute alignment (SAA) provides a common formulation for alignment goals including fairness and synthetic data augmentation tasks.
Our results characterize the model-query cost of exactly or approximately achieving a target joint law on output attributes under post-processing black-box access.
The target distribution is supplied by the practitioner, and our post-processing methods
can adapt the attribute distribution of generators that have already undergone other forms of alignment.
In this way, SAA complements broader alignment methods in generative AI.
We discuss several directions that build on the SAA framework.

\emph{Dominance and admissibility.}
For a set of source rates \(\mathcal P\subseteq\Delta^{k-1}\)
and two post-processing algorithms
\(\mathrm{Alg}_m,\mathrm{Alg}'_m\in\mathfrak A_m\), we say that
\(\mathrm{Alg}'_m\) \emph{dominates} \(\mathrm{Alg}_m\) over
\(\mathcal P\) if
\begin{align}
    \mathcal C_m^{\mathrm{Alg}'_m}(p)
    &\leq
    \mathcal C_m^{\mathrm{Alg}_m}(p); \qquad 
    \Delta_f^{\mathrm{un}}(\mathrm{Alg}'_m;\nu;p)
    \leq
    \Delta_f^{\mathrm{un}}(\mathrm{Alg}_m;\nu;p)
    \label{eq:discrepancy-dominance}
\end{align}
for every \(p\in\mathcal P\), with at least one of these inequalities strict for at least one \(p\in\mathcal P\). 
A post-processing algorithm is then called 
\emph{(cost-discrepancy) admissible} over \(\mathcal P\) if it is not dominated by another algorithm (see, for example, classical decision-theoretic notions of dominance and admissibility in 
\citet{wald1950statistical} and \citet{lehmann1998theory}).

\cref{thm:rdc-m1-optimal} shows that the RDC algorithm is admissible over $\mathcal{P}=\textrm{int}(\Delta^{k-1})$ when $m=1$.
Furthermore, when the target joint law is $\nu=q^{\otimes m}$,
direct-sampling is also admissible, since no post-processing algorithm can dominate it at $p=q$; see \cref{fig:discussion-admissibility}.
Though the TA-RDC procedure of \cref{def:thresholded-anytime-rdc} 
is first-order optimal in the limit as $m\rightarrow \infty$,
it is not necessarily admissible 
over $\mathcal{P} = \Delta^{k-1}$, in general.
In particular, there are black-box post-processing algorithms that may strictly improve the unconditional discrepancy while achieving the same M-cost;
we give one such procedure in \cref{sec:algos-anytime-rdc}.

\begin{figure}[tbp]
    \centering
    \includegraphics[width=\linewidth]{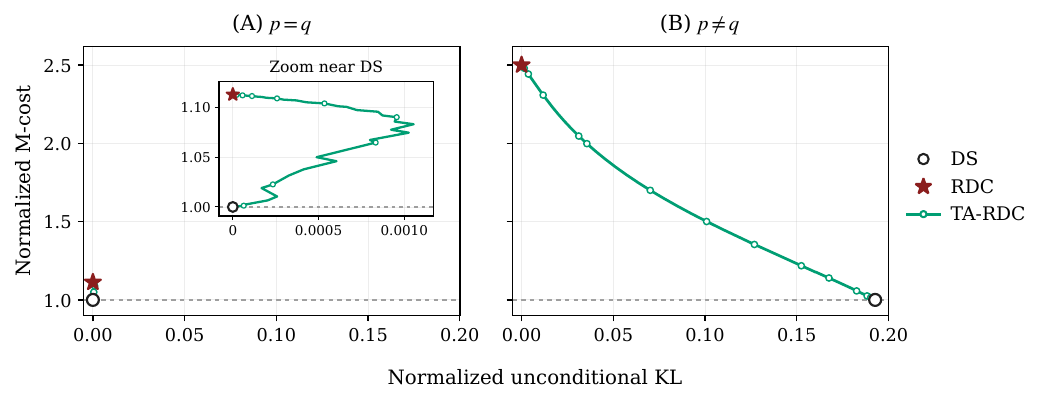}
    \caption{Source-dependent cost--discrepancy profiles for direct-sampling
    (DS), RDC, and TA-RDC, with $m=100$ and target joint law $\nu = q^{\otimes m}$ with uniform target rates $q=(0.5,0.5)$:
    (A) $p=q$; (B) $p=(0.8,0.2)$.
    In this small-scale setting, M-cost and unconditional KL discrepancy are evaluated explicitly through enumeration of count state spaces (cf.~\cref{app:metric-evaluation}).
    In (A), DS attains both zero discrepancy and minimum cost.
    For tolerance values $\varepsilon > m \log 2$, TA-RDC simplifies to DS, while for $\varepsilon = 0$, it reduces to RDC. In either of these two cases, it achieves zero unconditional discrepancy.
    Small irregularities for intermediate tolerances $0 < \varepsilon < m\log 2$ 
    reflect the discrete nature of 
    the count vectors used for the stopping time (cf.~\cref{def:thresholded-anytime-rdc}).
    In (B), TA-RDC reduces discrepancy relative to DS with additional M-cost.}
    \label{fig:discussion-admissibility}
\end{figure}

\emph{Relevance as genAI capabilities improve.}
As model query costs decline \citep[see, e.g.,][]{maslej2025artificial}, 
so does the cost of generating a larger batch of samples. 
This makes universally exact and approximate algorithms less costly, thus increasing the practical applicability of our approach.
Moreover, no single target distribution or input prompt 
is appropriate for every use case, so better models do not eliminate the need for a downstream method that can steer outputs toward an application specific target without retraining. 
In particular, a model previously aligned to one demographic composition or training population may require a different attribute distribution for a new application.
Since our method uses only black-box samples, it does not need to be redesigned when providers update their model or target specifications.

\emph{Noisy annotator.}
Suppose there is a latent true attribute $G$ associated with $Y \sim P(\cdot \mid x)$ with rates $\pi_i=\Pr(G=i)$.
If the annotator $\phi$ 
is noisy, $\phi(Y)$ is not necessarily always equal to the true attribute $G$.
One can model its error through a confusion rate matrix 
\[
C_{ij}=\Pr(G=i\mid A=j)
=
\frac{\pi_iK_{ij}}{\sum_{\ell}\pi_{\ell}K_{\ell j}}; \qquad K_{ij}=\Pr(A=j\mid G=i).
\]
\citep[see, e.g.][]{patrini2017making,lipton2018detecting}.

Given $C$, choose proxy target rates \(u^\star
    \in
    \arg\min_{u\in\Delta^{k-1}}
    D_f(Cu\|q),\)
and set $\bar q=Cu^\star$. 
If $q\in C\Delta^{k-1}$, we may choose $u^\star$ so that $\bar q=q$ 
and thus exact sampling for the true attribute is possible.
Otherwise, exact sampling
with a single output is impossible, and $\bar q$ is the
closest output attribute law under the chosen divergence. 

For general $m\ge 1$ and a target joint law $\nu = q^{\otimes m}$,
RDC run with
target rates 
$u^\star$ has true attribute law $\bar q^{\otimes m}$ unconditionally, so that it is universally exact if $q\in C\Delta^{k-1}$.
For approximate sampling, 
an annotation error aware 
anytime RDC method changes the rate of the exponential clock in
\cref{def:anytime-rdc} from $\nu(\mathcal{D}_t)$
to 
$(u^\star)^{\otimes m}(\mathcal D_t)$.
Its count-conditional output attribute law (cf.~\cref{eq:count-cond-output-label-law}) is then \(Q_{\vec c}^{u^\star}
    \coloneq 
    (u^\star)^{\otimes m}
    \bigl(\cdot\mid\Omega_m(\vec c)\bigr)\)
for intermediate times $t \ge m$
where $\Omega_m(\vec c)$ and $\vec c = \vec c(t)$
are the realized feasible support and source counts, respectively (cf.~\cref{eq:feasible-support}).
Since $C^{\otimes m}(u^\star)^{\otimes m}=\bar q^{\otimes m}$, the data
processing inequality gives
\[
    D_f\left(
    C^{\otimes m}Q_{\vec c}^{u^\star}
    \middle\|
    \bar q^{\otimes m}
    \right)
    =
    D_f\left(
    C^{\otimes m}Q_{\vec c}^{u^\star}
    \middle\|
    C^{\otimes m}(u^\star)^{\otimes m}
    \right)
    \leq
    D_f\left(
    Q_{\vec c}^{u^\star}
    \middle\|
    (u^\star)^{\otimes m}
    \right).
\]
Thus, minimizing the count-conditional discrepancy under the proxy target rates
controls
the distortion from the closest reachable product law.

{\small
\bibliography{refs}
}

\appendix
\onecolumn

\section{Appendix}

\subsection{Additional related work}
\label{app:arw}

{\bf Distributional and training-free alignment.}
The alignment literature spans objectives concerning instructions, preferences, safety, and values \citep{ji2025alignment}, as well as methods that act before, during, or after decoding without additional training \citep{pan2025trainingfree}.
Distributional constraints also arise in controlled generation: \citet{khalifa2021distributional} formulate pointwise and distributional constraints on text, characterize a target law that minimizes KL divergence from a pretrained language model, and train a new model to approximate that law.
Our setting instead concerns a fixed generator available only through sampling, with unknown source attribute probabilities.
We seek guarantees for the joint attribute law of a selected sample and characterize the number of model queries required to attain them.
Input-side controls, including conditional generation \citep{keskar2019ctrl} and learned soft prompts, may change the source distribution to which our procedures are applied; the access required by such controls depends on the method.

{\bf Synthetic data for training and inference.}
The statistical literature on synthetic microdata emphasizes the connection between the synthesis mechanism and valid downstream analysis \citep{rubin1993disclosure,raghunathan2003multiple}.
Recent surveys discuss generative augmentation for learning \citep{chen2024comprehensive,mumuni2024survey} and the assumptions needed for synthetic data to support statistical inference \citep{abdel2026harnessing}; applications in healthcare raise related questions of utility and distributional fidelity \citep{giuffre2023harnessing,lu2023machine}.
Repeated training on generated data can also distort a learned distribution \citep{shumailov2024collapse}.
SAA addresses a specified attribute mismatch in the resulting generator, including one previously trained toward a different target, rather than all sources of synthesis error.
Prediction-powered inference \citep{angelopoulos2023prediction} and synthetic-powered predictive and general inference \citep{bashari2025synthetic,bashari2025general} use real observations to support inference despite imperfect auxiliary predictions or synthetic data.
Our contribution is complementary to these inferential corrections: it concerns sample construction before downstream use.
Under a shared attribute-conditional model, \cref{sec:count-cond-discrep-synth-data-aug} makes this connection precise through a bound on bounded downstream criteria.

{\bf Distributional fairness as an application.}
Recent works \citep[see, e.g.,][]{choi2020fair, zhang2023iti, teo2023measuring} have adopted sampling-based or distributional fairness objectives, but they do not study the same black-box post-processing setting or prove similar optimality results established in our work.
We retain this application-specific perspective in the following review of approaches to fairness in generative models.

{\bf Fair synthetic data augmentation.}
A substantial early line of fairness-aware generative modeling predates modern diffusion models and treats the generator primarily as a mechanism for producing synthetic data that improves downstream fairness. 
\citet{xu2018fairgan} propose FairGAN, a GAN-based framework for learning a generator whose samples remain close to the original data distribution while reducing the dependence between generated records and protected attributes. The aim is not merely to generate plausible data, but to generate data on which downstream classifiers exhibit improved fairness when trained on the synthetic sample. 

\citet{xu2019fairgan+} extend this idea in FairGAN+ by coupling the generator with a classifier and additional adversarial components, so that the model jointly pursues fair data generation and fair downstream classification. \citet{sattigeri2019fairness} develop a related auxiliary-classifier GAN formulation that targets group fairness criteria such as demographic parity and equality of opportunity, and demonstrate the approach on both tabular and image-like data. \citet{amini2019uncovering} take a different generative route: rather than training a GAN to output fair synthetic data, they use a variational autoencoder to learn latent structure in the training distribution and then adaptively upweight underrepresented regions during model training, with applications to racial and gender bias in face detection. These works are important because they show that generative models can be used to repair biased datasets or improve fairness of downstream predictors. 
However, they are different from the black-box post-processing setting we pursue, as they are training-time or data-generation interventions and assume access to the training procedure, to a dataset to be modified, or to internal model components. 

{\bf Adversarial generative modeling for fairness.}
Another line of adversarial generative work explicitly uses information-theoretic objectives to reduce the dependence between sensitive attributes and generated or predicted quantities. \citet{chen2024information} propose an information-minimizing GAN for fair generation and classification, in which the generator and classifier are trained jointly so as to reduce mutual-information-type dependence involving sensitive attributes while preserving classification utility. Their method also introduces an ANOVA-based latent factor to mitigate accuracy loss, and the resulting framework is designed to improve downstream fair classification from generated data. This is related to our work in its use of generative modeling to address fairness, but the object of control is different. Information-minimizing GANs aim to remove sensitive-attribute information from learned representations or downstream predictions, whereas our target distribution may assign positive mass to protected attributes and asks for the returned samples to follow that target law. For example, in a text-to-image application, our goal may be to generate a batch whose gender, race, or age annotations follow a specified distribution, not to make those attributes uninformative or unrecoverable from the images.

{\bf Few-shot and conditional generative modeling for fairness.} Few-shot and conditional generative modeling provide another route to more balanced generation. \citet{sinha2021d2c} introduce D2C, a diffusion-decoding framework that combines an unconditional VAE, a learned diffusion prior over latent representations, and contrastive representations to enable few-shot conditional image generation. Because D2C can adapt an unconditional generator to new labels or manipulation constraints using a small labeled set, it can in principle be used to increase the number of samples from rare or desired attributes. 
Furthermore, 
\cite{zhang2023iti} develop an input-side procedure that learns and appends attribute-specific soft tokens to hard prompts.
This is also an approach that requires white-box access. 
Moreover, \cite{teo2024fairqueue} introduces a gray-box method that interweaves hard prompts and fairness-aware prompts during the decoding process.
\citet{shrestha2024fairrag} use retrieval augmentation and a learned module that projects reference images into the text-embedding space; this likewise requires more than black-box output sampling.

{\bf Fairness for T2I diffusion models.} 
For text-to-image diffusion models, several recent works intervene directly in the diffusion model or in the prompting mechanism. \citet{friedrich2023fair} propose Fair Diffusion, a deployment-time approach for steering text-to-image diffusion models by adding fairness-aware guidance so that generated images can be shifted toward desired proportions of identity groups. This work is important because it demonstrates that fairness interventions can be applied after pretraining and need not require full retraining of the text-to-image model. Nevertheless, Fair Diffusion still relies on access to the diffusion sampling process and modifies the generation dynamics themselves. 

\citet{de2023diffusionworldviewer} introduce DiffusionWorldViewer, an interactive system for exposing the ``worldview'' reflected by a text-to-image model and allowing users to edit generated outputs toward alternative demographic compositions. Their work is especially close in spirit to our user-specified target distribution: rather than assuming a single universal definition of fairness, DiffusionWorldViewer lets users inspect the demographic distribution produced by a model and then adjust prompts or editing controls to represent a desired perspective. This is consistent with our view that $q$ should be supplied by the user or application context rather than imposed universally. 

\citet{shen2023finetuning} also formulate fairness for text-to-image diffusion models as distributional alignment. They introduce a distributional alignment loss that steers generated attributes toward a user-defined target distribution, together with an adjusted direct fine-tuning method for optimizing losses defined on generated images. Empirically, their approach reduces gender, race, and intersectional biases for occupational prompts, and it can represent target distributions beyond exact parity, such as simultaneously debiasing some attributes while setting a nonuniform target for another. This is one of the closest source-side analogues of our target-distribution framework. However, it requires fine-tuning the diffusion model or its soft tokens and therefore assumes substantially more access than black-box sampling. 

\citet{choi2024fair} study fair sampling in diffusion models through an attribute-switching mechanism. Their method changes the diffusion sampling path by switching sensitive attributes during generation, and they provide theoretical and empirical evidence that this can produce fair synthetic data while preserving utility, without requiring additional model training. The fairness notion they use is connected to the $\epsilon$-fairness framework of \citet{feldman2015certifying}, where a representation is considered fair when the protected attribute cannot be accurately inferred from the remaining information. In the binary case discussed in this line of work, a predictor $f:\mathcal X\to\mathcal A$ must have sufficiently large balanced error, expressed through conditions such as $P(f(X)=0\mid A=1)+P(f(X)=1\mid A=0)>2\epsilon$. This criterion is substantively different from ours. Attribute switching and $\epsilon$-fairness aim to obscure or remove information about protected attributes, 
whereas our target-law sampling objective matches the output attribute distribution with a specified target law.
Moreover, the switching mechanism is tailored to diffusion sampling, while our algorithms are model-agnostic and apply to any generator from which one can obtain black-box M-samples.

{\bf SAA as a complementary intervention.} Altogether, these papers show that alignment for generative modeling in tasks like fairness and synthetic data augmentation
has been approached through 
data repair, adversarial training, weak supervision, latent reweighting, diffusion guidance, prompt or interface design, fine-tuning, and specialized diffusion samplers.
Our formulation of SAA isolates the downstream statistical problem faced by a user who can only query a black-box generator and post-process its outputs toward an application-specific attribute law.

This perspective allows us to prove optimality and lower bound
results that are independent of the architecture of the underlying generative model.
In particular, 
the source-rate-free and universal optimality results of our RDC and TA-RDC methods (cf.~\cref{thm:rdc-exact-main,thm:first-order-uni-opt-ta-rdc}) make our 
algorithms complementary to the methods
above:
the source rates $p$ may differ between various fairness intervention methods, but our algorithms, which do not assume knowledge of these rates, can still
provide exact or approximate statistical attribute alignment with first-order optimal model-query costs.

{\bf Fairness constraints on attribute proportions and sequential decision making.}
In the context of algorithmic fairness, 
there are some provable guarantees
for 
ensuring 
proportions of protected attributes (e.g., old/young)
obey some bounds \citep{cano2021fairness, hu2022achieving, alamdari2023remembering}
or that protected attributes appear relatively frequently 
(e.g., a particular attribute appears at least once for every $\beta$ images produced, where $\beta >0$ is a user-specified hyperparameter; see \cite{cheng2025runtime}). 
However, constraints on attribute proportions or frequencies need not ensure that
the returned attributes follow a specified joint distribution, as we consider in our work.
This distinction matters for downstream tasks such as pairwise audits,
whose results depend on the distribution of pairs of outputs \citep[see, e.g.,][]{klare2012face, grother2019face, grother2022face}; see, also, \cref{sec:exp-pairwise-audit}.

{\bf Exponential races and random-priority couplings.}
The exponential clock construction used in
\cref{def:anytime-rdc} is a classical random-priority device.
Let $\{E_a:a\in\mathcal S\}$ be independent with
$E_a\sim\operatorname{Exp}(w_a)$. For any fixed subset
$B\subseteq\mathcal S$ with positive total weight,
$\Pr(\arg\min_{a\in B}E_a=b)=w_b/\sum_{a\in B}w_a$.
Hence one clock field simultaneously gives a draw from the weights
normalized on every fixed subset; the same statement holds conditionally
for a random subset independent of the clocks. This competing-exponentials
rule underlies race-based event simulation, including the stochastic
simulation algorithm of \citet{gillespie1976general}. In discrete-choice
language, the winner follows Luce's proportional-choice rule
\citep{luce1959individual}, while ordering all clocks gives the
Plackett--Luce ranking law \citep{plackett1975analysis}. It is also
equivalent to the Gumbel-max trick: writing $E_a=Z_a/w_a$ with
$Z_a\overset{\mathrm{i.i.d.}}{\sim}\operatorname{Exp}(1)$ gives
$\arg\min_aE_a=\arg\max_a\{\log w_a+G_a\}$, where
$G_a=-\log Z_a$ are i.i.d. standard Gumbel variables
\citep{yellott1977relationship}.

In machine learning, this representation is part of the broader
perturb-and-MAP framework, which uses random perturbations to reduce
sampling or partition-function calculations to optimization
\citep{papandreou2011perturb,hazan2012partition}. For our purposes, a
particularly close connection is the Gumbel-process framework of
\citet{maddison2014astar,maddison2016poisson}, which extends Gumbel-max to
general spaces and couples maximizers over many measurable subsets.
Equivalent exponential or Gumbel keys appear in weighted sampling without
replacement and reservoir sampling \citep{efraimidis2006weighted}; taking
the top $k$ perturbed scores gives Gumbel-top-$k$ sampling and the
associated Plackett--Luce order \citep{kool2019stochastic}. Related
persistent-randomness couplings arise in consistent weighted sampling and
weighted MinHash \citep{ioffe2010improved}, and in counterfactual models
that share Gumbel variables across interventions
\citep{oberst2019counterfactual}.

Our use differs in the object being coupled. The target law $\nu$ is fixed,
while the source stream generates a nested random family of feasible
supports $\{\mathcal{D}_t\}_{t\ge m} \coloneq \{\Omega_t \backslash\Omega_{t-1}\}_{t\ge m}$ (cf.~\cref{eq:feasible-support}), 
independently of the clock field. 
Restricting one
target-weighted priority field to $\Omega_t$ yields the count-conditional output attribute law 
$\nu(\cdot\mid\Omega_t)$ at every deterministic time $t \ge m$. 
Once the global minimum-clock target sequence becomes feasible,
it remains selected and has law $\nu$, giving the RDC stopping time; see
\cref{thm:opt-count-cond-f-curve}. 
In our setting,
the clocks therefore combine a classical priority
construction with the feasibility constraints imposed by a coupon-collector type process, 
yielding an optimal count-conditional distortion frontier.

\subsection{Additional algorithms for exact sampling}
\label{app:alg-exact-sampling-uncond-discrep}

\subsubsection{Universally exact algorithms with known source rates}\label{app:ors}

In this section, 
we state
the traditional oracle sampler introduced in \cite{von195113} that assumes knowledge of the source rates $p$ and connect it with our problem formalization. 
For the target joint law $\nu = q^{\otimes m}$, 
we demonstrate that oracle rejection sampling is universally exact and achieves the minimum possible M-cost among all universally exact algorithms.

\begin{definition}[Oracle rejection sampler]\label{def:oracle-rs}
Fix an input \(x\) and a generative model $P$ from which we can sample outputs via
$P(\cdot\mid x)$. 
Assume the support condition $\textrm{supp}(q) \subseteq \textrm{supp}(p)$.
Define
\[
    c(x) \coloneqq \min_{i\in \textrm{supp}(q)}\frac{p_i}{q_i}
    =\left(\max_{i\in \textrm{supp}(q)}\frac{q_i}{p_i}\right)^{-1}.
\]

The oracle sampler proceeds as follows.
Repeat until acceptance:
\begin{enumerate}
    \item Query an M-sample \(Y\sim P(\cdot\mid x)\) and compute its attribute \(I=\phi(Y)\).
    \item Draw an independent E-sample \(U\sim\mathrm{Unif}[0,1]\).
    \item Accept \(Y\) if
    \[
        U \le r_I,
        \qquad\text{where}\qquad
        r_i \coloneqq
        \begin{cases}
            c(x)\cdot \dfrac{q_i}{p_i}, & i\in \textrm{supp}(q),\\[6pt]
            0, & i\notin \textrm{supp}(q).
        \end{cases}
    \]
    Otherwise, reject and repeat.
\end{enumerate}
To produce \(m\) outputs, the procedure is run independently until \(m\) samples
have been accepted.
\end{definition}

Each iteration uses one model query and succeeds with probability \(c(x)\).
Hence, the number of model queries required per accepted sample is geometric with mean
\(
    \frac{1}{c(x)}.
\)
and, consequently, the M-cost to produce $m$ samples is
\begin{equation}\label{eq:m-cost-ors}
    \mathcal{C}_m^{\mathrm{OracleRS}}(p,q)
    = m\cdot \frac{1}{c(x)}
    = m\cdot \max_{i\in \textrm{supp}(q)}\frac{q_i}{p_i}.
\end{equation}

Furthermore, the oracle rejection sampler is universally exact since, for an arbitrary source rate $p \in\Delta^{k-1}$, a single proposal is accepted with probability
\[
    \Pr(\text{accept}\mid x)
    = \sum_{i\in\textrm{supp}(q)} p_i \cdot r_i
    = \sum_{i\in\textrm{supp}(q)} p_i\cdot \frac{c(x)q_i}{p_i}
    = c(x).
\]
Moreover, conditioning on acceptance, the attribute distribution of the output satisfies
\[
    \Pr(\phi(Y)=i\mid x)
    = \frac{p_{i}\cdot r_i}{\sum_{j\in\textrm{supp}(q)} p_j \cdot r_j}
    = q_{i},
    \qquad \forall i\in\textrm{supp}(q).
\]
Thus, the oracle sampler sequentially produces i.i.d. outputs whose attributes follow the marginal law $q$ of the target joint law $\nu = q^{\otimes m}$. 

The oracle rejection sampler adaptively 
stops after a random number of proposals, so that it lies in $\mathfrak{A}_{m}$.
The next result 
records the optimality of the oracle sampler. 

\begin{proposition}[Optimality for universal sampling with known source rates]\label{prop:opt-perfect-sampling-fairness-oracle}
    Assume the support condition $\textrm{supp}(q) \subseteq \textrm{supp}(p)$ and denote the pointwise M-cost of a black-box post-processing algorithm $\textrm{Alg}_m\in\mathfrak A_m$ as $\mathcal{C}_m^{\textrm{Alg}_m}(p,q) = \E_p[N]$.
    Then,
    the pointwise optimal universal cost satisfies
    \[ 
    \inf_{\textrm{Alg}_m \in \mathfrak{U}_m(q^{\otimes m})} \mathcal{C}_m^{\textrm{Alg}_m}(p,q)
    =
    m\cdot \max_{i\in\operatorname{supp}(p)}\frac{q_i}{p_i}.
    \]
    The infimum is achieved by the
    oracle rejection sampler in \cref{def:oracle-rs}.
\end{proposition}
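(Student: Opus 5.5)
The proof has two halves: achievability by the oracle rejection sampler, and a matching lower bound from \cref{thm:universal-exact-m-cost-lower}. I read $\mathfrak U_m(q^{\otimes m})$ as the class $\mathfrak A_m(q^{\otimes m})$ of universally exact algorithms with finite M-cost, which may use $p$. Since $\operatorname{supp}(q)\subseteq\operatorname{supp}(p)$ and $q_i=0$ off $\operatorname{supp}(q)$, the maximum of $q_i/p_i$ over $\operatorname{supp}(p)$ equals the maximum over $\operatorname{supp}(q)$. So the right-hand side coincides with the oracle cost $m/c(x)$ in \cref{eq:m-cost-ors}.

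\textbf{Upper bound.} The computation after \cref{def:oracle-rs} shows three things. First, each proposal is accepted independently with probability $c(x)$. Second, an accepted attribute has law $q$, independently across accepted draws; this uses fresh M- and E-samples and the strong Markov property at each acceptance. Third, the returned attribute vector therefore has law $q^{\otimes m}$ for every admissible $p$. Hence the oracle sampler lies in $\mathfrak A_m(q^{\otimes m})$. Its stopping time $N$ is a sum of $m$ independent $\mathrm{Geom}(c(x))$ variables and is a valid stopping time in the sense of \cref{def:black-box-post-processing-algs}. By Wald's identity, its M-cost is $m\max_{i\in\operatorname{supp}(q)} q_i/p_i$.

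\textbf{Lower bound.} I apply \cref{thm:universal-exact-m-cost-lower} with $\nu=q^{\otimes m}$. The support condition $\operatorname{supp}(\nu)=\operatorname{supp}(q)^m\subseteq\operatorname{supp}(p)^m$ holds. Every coordinate marginal satisfies $\nu_j(i)=q_i$, so $\sum_j\nu_j(i)=mq_i$, and the theorem gives $\mathcal C_m^{\mathrm{Alg}_m}(p,q)\ge m\max_{i\in\operatorname{supp}(p)} q_i/p_i$ for every $\mathrm{Alg}_m\in\mathfrak A_m(q^{\otimes m})$. That theorem is stated for all of $\mathfrak A_m(\nu)$, not only the source-rate-free subclass, so it applies to algorithms that know $p$. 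Combining the two bounds shows the infimum is attained by the oracle sampler.

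\textbf{Main obstacle.} The delicate step is confirming that the lower bound legitimately covers $p$-dependent algorithms. If one reproves it directly, the route is as follows. Fix $i$ and let $C_i(N)$ count the M-samples with attribute $i$. The returned sequence uses at most $C_i(N)$ copies of attribute $i$, so $\mathbb E[C_i(N)]\ge\mathbb E[\#\{j:\widetilde A_j=i\}]=mq_i$, where the equality uses exactness at the true $p$. Wald's identity gives $\mathbb E[C_i(N)]=p_i\mathbb E[N]$, which needs $\mathbb E[N]<\infty$ and holds by finite M-cost. Together these yield $\mathbb E[N]\ge mq_i/p_i$. This argument uses exactness only at the true $p$, so knowledge of $p$ causes no difficulty.
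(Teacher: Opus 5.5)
Your proposal is correct and follows the paper's proof. The lower bound is \cref{thm:universal-exact-m-cost-lower} with $\sum_{j}\nu_j(i)=mq_i$, and the matching upper bound is the oracle sampler's M-cost in \cref{eq:m-cost-ors}. You also make explicit two points the paper leaves implicit: the maximum over $\operatorname{supp}(p)$ equals the maximum over $\operatorname{supp}(q)$, and the lower bound uses exactness only at the true $p$, so it covers $p$-dependent algorithms.
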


The result immediately follows from \cref{thm:universal-exact-m-cost-lower} (note that \(\sum_{j=1}^m \nu_j(i) = m\cdot q_i\) for $\nu = q^{\otimes m}$), 
and the M-cost of the oracle rejection sampler recorded in \cref{eq:m-cost-ors}.

\subsubsection{Universally exact source-rate-free algorithms}\label{app:unconditional-count-tree-lp}

Every post-processing algorithm 
returns attribute sequences from the feasible support $\Omega_m(\vec C(t))$ 
of \cref{eq:feasible-support} after $t\ge m$ M-samples.
For the purpose of minimizing M-cost, it suffices to characterize the algorithm's behavior through its count-conditional output count law (cf.~\cref{eq:count-cond-output-count-law}), which 
is the probability law on count vectors $\vec\ell \in \mathcal L_m
    \coloneqq
    \left\{
    \vec\ell\in\mathbb N_{\geq 0}^k:
    \sum_{i=1}^k\ell_i=m
    \right\}$ 
    induced by the 
    count-conditional output \emph{attribute} law of \cref{eq:count-cond-output-label-law}.
Indeed, given 
returned counts $\vec \ell \in \mathcal{L}_m$,
sampling the target joint law $\nu$ conditional on these returned counts 
$\widetilde A \sim \nu(\cdot \mid \{\vec L = \vec \ell\})$
preserves feasibility and M-cost, 
where  $L_i=\sum_{j=1}^m1\{\widetilde A_j=i\}$ are the attribute counts of the output attribute sequence.

Consequently, we consider variables $x_{\vec c,\vec \ell}$ which are
jointly indexed by the source and
returned count vectors 
\[
\left\{
    \vec c(t)\in\mathbb N_{\geq0}^k:
    \sum_{i=1}^kc_i=t
\right\}
\quad\text{and}\quad
\left\{
    \vec\ell\in \mathcal{L}_m:
    \ell_i\leq c_i(t),\ \forall i\in[k]
\right\}.
\]
At each source count vector $\vec c$, a post-processing algorithm can also elect to continue M-sampling and not return any outputs.
The 
variables $x_{\vec c, \vec \ell}$ and continuation variables $y_{\vec c}$ for $t \ge m$
then constitute the primary variables of an optimization problem.
Let
\begin{align*}
    \rho_\nu(\vec\ell)
\coloneqq
\sum_{\substack{\vec a\in\mathcal A^m:
\sum_{r=1}^m1\{a_r=i\}=\ell_i,\ \forall i\in[k]}}
\nu(\vec a), \qquad \vec \ell \in \mathcal{L}_m
\end{align*}
denote the law over the count vectors induced by the target joint law $\nu$. Additionally, abbreviate 
$p^{\vec c} = \prod_{i=1}^k p_i^{c_i}$.
Then, at a designated source rate \(p_0\), 
the minimum universally exact
source-rate-free M-cost is the value of the following linear program
\begin{align}
    \inf_{\{x_{\vec c,\vec\ell},y_{\vec c}\geq0\}}
    \quad&
    \sum_{\vec\ell\in\mathcal L_m}
    \sum_{\vec c\geq\vec\ell}
    |\vec c|x_{\vec c,\vec\ell}p_0^{\vec c}
    \notag \\
    \text{s.t.}\quad&
    y_{\vec 0}=1,
    \notag \\
    &
    y_{\vec c}
    +
    \sum_{\substack{\vec\ell\in\mathcal L_m: \vec \ell \leq \vec c}}
    x_{\vec c,\vec\ell}
    =
    \sum_{i:c_i>0} y_{\vec c-\vec e_i},
    \qquad \vec c\neq\vec 0,
    \notag \\
    &
    \sum_{\vec c\geq\vec\ell} 
    x_{\vec c,\vec\ell} \cdot p^{\vec c} 
    =
    \rho_\nu(\vec\ell),
    \qquad
    \vec\ell\in\mathcal L_m,\quad
    p\in\Delta^{k-1}: \operatorname{supp}(\nu)\subseteq(\operatorname{supp}(p))^m.
    \label{eq:lp-universal-exactness}
\end{align}

\cref{eq:lp-universal-exactness} imposes universal exactness through
a global constraint on a continuum of source rates.
Furthermore, for universal exactness, the stopping time cannot have a deterministic
finite upper bound.

\begin{theorem}[No capped-budget universal exactness]\label{thm:no-capped-budget-uni-exactness}
    Fix $m\geq1$, $k\geq2$, and a target joint law $\nu$. Suppose a source-rate-free algorithm $\textrm{Alg}_m \in \mathfrak{A}_m^o$
    satisfies $N\leq n$ almost surely for some deterministic $n<\infty$. Then the set of interior source rates $p$ for which
    \[
    P_{\textrm{Alg}_m}(\widetilde{A} \mid x)=\nu
    \]
    has Lebesgue measure zero. 
    In particular, an algorithm with a capped-budget $n \in \N$ fails to be exact for almost every interior source rate $p$.
\end{theorem}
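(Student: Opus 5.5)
The plan is to show that, for a capped algorithm, each output probability $P_{\mathrm{Alg}_m}(\widetilde A=\vec a\mid x)$ is a polynomial in $p$ of degree at most $n$. Exactness would then force a nonzero polynomial to vanish, and the zero set of a nonzero polynomial on the simplex has Lebesgue measure zero.

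\emph{Step 1 (polynomial representation).} Since $\mathrm{Alg}_m$ is source-rate-free and $N\le n$ almost surely, we may pad the run with unused M-samples. The whole run is then a measurable function of $(A_1,\dots,A_n)$, the outcomes, and E-randomness whose law does not depend on $p$. Conditioning on the attribute string $\vec b\in[k]^n$ gives
\[
P_{\mathrm{Alg}_m}(\widetilde A=\vec a\mid x)=\sum_{\vec b\in[k]^n} w_{\vec a}(\vec b)\prod_{t=1}^n p_{b_t}.
\]
The weights $w_{\vec a}(\vec b)\in[0,1]$ do not depend on $p$, because the conditional law of outputs given attributes and the E-sampling are both $p$-free. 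This step is the main point to get right. A stopping time $N$ that depends on E-samples and outcomes is harmless, since the unused coordinates beyond $N$ sum out. So each coordinate of the output law is a polynomial $F_{\vec a}(p)$ of degree at most $n$, with coefficients independent of $p$, and it is defined on all of $\Delta^{k-1}$.

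\emph{Step 2 (exactness fails somewhere).} It suffices to find a single $\vec a$ for which $F_{\vec a}-\nu(\vec a)$ is not identically zero on the affine hull of the simplex. Then, in coordinates $(p_1,\dots,p_{k-1})$, it is a nonzero polynomial, and its zero set has measure zero. The exactness set is contained in this zero set.

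Pick $\vec a\in\operatorname{supp}(\nu)$ and some attribute $i$ appearing in $\vec a$. Since $k\ge2$, we can approach the vertex $e_j$ for some $j\ne i$, along which $p_i\to0$. The returned sequence must be feasible among the $N\le n$ draws. So for $\widetilde A=\vec a$ some draw must equal $i$, which gives
\[
F_{\vec a}(p)\le \Pr(\exists t\le n: A_t=i)\le n p_i.
\]
By continuity of the polynomial, $F_{\vec a}(e_j)=0\neq\nu(\vec a)$. Hence $F_{\vec a}-\nu(\vec a)$ is a nonzero polynomial, and the identity $F_{\vec a}\equiv\nu(\vec a)$ fails.

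\emph{Step 3 (conclusion).} The set of interior $p$ with $P_{\mathrm{Alg}_m}(\widetilde A\mid x)=\nu$ is contained in $\{p: F_{\vec a}(p)=\nu(\vec a)\}$, which is a proper algebraic subset and so has Lebesgue measure zero. The standard fact used here is that a nonzero polynomial on $\mathbb R^{k-1}$ has a null zero set, proved by induction on the dimension with Fubini.

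If extra care is needed, the argument in Step 2 can instead evaluate at boundary points outside the support condition. This is legitimate because $F_{\vec a}$ is defined as a polynomial everywhere, even if exactness is only demanded on the interior.
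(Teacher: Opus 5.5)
Your proof is correct and follows the paper's argument. Both pad the run to $n$ draws, write each output probability as a polynomial in $p$ with $p$-free coefficients, and use feasibility on an all-one-attribute sample to show that $F_{\vec a}$ cannot equal $\nu(\vec a)$ identically, so the exactness set lies inside the null zero set of a nonzero polynomial. The only differences are minor: you get $F_{\vec a}(e_j)=0$ directly from the interior bound $F_{\vec a}(p)\le np_i$ and continuity, whereas the paper uses completeness of $\mathrm{Mult}(n,p)$ to force $Q_{\vec c}=\nu$ for every $\vec c$ before examining $\vec c=n\vec e_i$; and the paper explicitly transfers the cap $N\le n$ from one interior $p_0$ to all interior $p$ via the positive likelihood ratio, which you take as given.
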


For the proof, see \cref{prf:no-capped-budget-uni-exactness}.
Consequently, the primal program has infinitely many variables, with their number growing combinatorially with the number of M-samples.
Together with the continuum of globally coupled constraints in
\eqref{eq:lp-universal-exactness}, this makes the program difficult to
solve, and standard finite-state constrained Markov decision process
methods do not directly apply
\citep[see, e.g.,][]{altman2021constrained}.

\subsection{Additional algorithms for approximate sampling}\label{app:alg-approx-sampling-uncond-discrep}

We consider the following heuristic solution 
to 
finding universally approximate source-rate-free
procedures that minimize the M-cost (cf.~\cref{def:uni-source-rate-free-M-cost}) under a target joint law $\nu = q^{\otimes m}$.
First, estimate the source rates on an adaptively sized pilot sample, then
freeze the estimates, and finally run ordinary rejection sampling on freshly drawn
M-samples using the estimated source rates.
We call this procedure
\emph{approximate rejection sampling}
(ARS).

\emph{Anytime confidence set and stopping time rule.}
To estimate the source rates, 
we construct anytime-valid multinomial confidence sequences as developed in
\citet{lindon2022anytime}
and use 
a relative width stopping time rule 
\citep[see, e.g.,][]{chow1965asymptotic,flegal2015relative}.

More formally, let \(C_i(t)=\sum_{s=1}^t 1\{\phi(Y_s)=i\}\) be the source count process (cf.~\cref{eq:feasible-support}) 
and
\(\widehat p_i(t)=C_i(t)/t\) be the empirical plug-in estimates for the source-rates.  
Following \citet{lindon2022anytime},
for \(u\in\Delta^{k-1}\), define the
confidence set 
\begin{equation}
    \mathcal P_t(\delta_{\rm CS})
    =
    \left\{u\in\Delta^{k-1}:
    \mathcal E_t(u)<\delta_{\rm CS}^{-1}\right\},
    \qquad
    \mathcal E_t(u)
    =
    \frac{B(\vec 1+\vec C(t))}{B(\vec 1)}
    \prod_{i=1}^k u_i^{-C_i(t)},
    \label{eq:ars-lindon-malek-cs}
\end{equation}
where \(B\) is the multivariate beta function. In particular, the anytime guarantee holds
\(\Pr_p\{p\in\mathcal P_t(\delta_{\rm CS})\ \forall t\}\ge
1-\delta_{\rm CS}\). 
Let
\begin{equation*}
    L_i(t)=\inf_{u\in\mathcal P_t(\delta_{\rm CS})}u_i,
    \qquad
    U_i(t)=\sup_{u\in\mathcal P_t(\delta_{\rm CS})}u_i
\end{equation*}
be its coordinate projections.
Then, 
given a deterministic set of look times
\(\mathcal G\), the pilot stopping time is
\begin{equation*}
    \tau_\gamma
    =
    \inf\left\{t\in\mathcal G:
    \max_{i\in[k]}
    \frac{U_i(t)-L_i(t)}{2\widehat p_i(t)}
    \leq\gamma\right\},
\end{equation*}
with the convention that the maximum is \(+\infty\) until every empirical
rate is positive. 
Thus, 
\(\gamma\) is a
tuning parameter, with smaller values corresponding to more precise source-rate
estimates and pilot samples.

\emph{Approximate rejection sampling on fresh M-samples.} At \(\tau_\gamma\), freeze the empirical plug-in estimates for the source rates
\(\widehat p=\widehat p(\tau_\gamma)\) and set the empirical rejection sampling probabilities
\begin{equation}
    \widehat c
    =\min_{i\in\operatorname{supp}(q)}\frac{\widehat p_i}{q_i},
    \qquad
    a_i
    =
    \begin{cases}
        \widehat c\,q_i/\widehat p_i,&q_i>0,\\
        0,&q_i=0.
    \end{cases}
    \label{eq:ars-frozen-acceptance}
\end{equation}
The algorithm discards the pilot samples and performs rejection sampling on freshly drawn M-samples.
Like the oracle rejection sampler (cf.~\cref{app:ors}),
ARS independently accepts a proposal of attribute \(i\) with probability
\(a_i\) and stops after \(m\) acceptances.  

Conditional on the stopped pilot
\(\mathcal F_{\tau_\gamma}\), the proposal acceptance probability and the
output attribute law of the accepted samples 
are thus
\begin{equation}
    \lambda
    =\sum_{i=1}^k p_i a_i,
    \qquad
    s_i
    =\Pr_p\{\widetilde A_j=i\mid\mathcal F_{\tau_\gamma}\}
    =\frac{p_i a_i}{\lambda}.
    \label{eq:ars-conditional-law}
\end{equation}
In other words, conditional on \(\mathcal F_{\tau_\gamma}\), 
the attributes of the accepted outputs are i.i.d. draws from \(s\), so that
their count vector is \(\Mult(m,s)\). 
If
\(\widehat p=p\), then \(s=q\) and the procedure reduces to oracle rejection
sampling.

Details on the evaluation of the M-cost and statistical alignment discrepancy of ARS 
are deferred to \cref{app:metric-evaluation-rs}. 
Numerical results for a synthetic experiment can be found in 
\cref{app:exp:results-synth-exp}.

\subsection{Count-conditional discrepancy and synthetic data augmentation}\label{sec:count-cond-discrep-synth-data-aug}

The count-conditional discrepancy is also relevant in synthetic data augmentation, where a downstream analysis is performed using a single augmented dataset.
Suppose the model and target distributions
satisfy a label shift condition on the attributes:
\[
    A\sim\operatorname{Categorical}(p),
    \qquad
    A^{\mathrm{tar}}\sim\operatorname{Categorical}(q),
    \qquad
    Y\mid A=a \sim Y^{\mathrm{tar}}\mid A^{\mathrm{tar}}=a \sim K_a,
\]
where \(K_a\) is the common attribute-conditional law for attribute \(a\in\mathcal A\); see, for example, \citet{lipton2018detecting}.
Here, the target distribution describes the population that the synthetic data is intended to represent.
Let
\(\vec Y^{\mathrm{tar}}=(Y_1^{\mathrm{tar}},\ldots,Y_{n_{\mathrm{tar}}}^{\mathrm{tar}})\)
be an ideal dataset obtained by drawing
\[
\vec A^{\textrm{tar}}\sim q^{\otimes n_{\mathrm{tar}}},
\qquad
Y_j^{\textrm{tar}}\mid A_j^{\textrm{tar}}=a\sim K_a,
\quad j\in[n_{\mathrm{tar}}],
\]
and let
\(\vec Y^\star=(Y_1^\star,\ldots,Y_m^\star)\)
be an ideal augmentation sample obtained by drawing an attribute sequence $\vec A^\star \sim q^{\otimes m}$ with the same attribute-conditional laws $K_a$.

Likewise, conditional on \(\vec C\), let
\(\widetilde{Y}=(\widetilde Y_1,\ldots,\widetilde Y_m)\)
be the returned synthetic sample obtained from
\(\widetilde{A}\sim P_{\mathrm{Alg}_m}^{\vec C}\)
(cf.~\cref{eq:count-cond-output-label-law}) and the same attribute-conditional laws \(K_a\).
We consider post-processing algorithms whose stopping time and output selection rule
depend only on annotated attributes and independent E-sampling, with returned outputs selected uniformly within each attribute, as in RDC and its anytime variants (cf.~\cref{alg:rdc,alg:anytime-rdc}).
These procedures preserve the product of the attribute-conditional laws given the selected attributes and source counts $\vec C$.

Then, for every bounded downstream criterion
\(G:\mathcal Y^{n_{\mathrm{tar}}+m}\to[0,1]\),
the
data processing inequality 
through the common attribute-conditional laws gives, for each count vector $\vec c$ of positive probability,
\begin{equation*}%
\left|
\mathbb E\left[G(\vec Y^{\mathrm{tar}},\widetilde Y)\mid\vec C=\vec c\right]
-\mathbb E\left[G(\vec Y^{\mathrm{tar}},\vec Y^\star)\right]
\right|
\leq
\mathrm{TV}\left(P_{\mathrm{Alg}_m}^{\vec c},q^{\otimes m}\right).
\end{equation*}
Averaging the absolute conditional difference over $\vec C$ therefore bounds it by the count-conditional discrepancy. In particular,
\begin{align}
\left|
\mathbb E\left[
G(\vec Y^{\mathrm{tar}},\widetilde{ Y})
\right]
-
\mathbb E\left[
G(\vec Y^{\mathrm{tar}},\vec Y^\star)
\right]
\right|
&\le
\mathbb E_{\vec C}\left[
\mathrm{TV}\left(
P_{\mathrm{Alg}_m}^{\vec C},
q^{\otimes m}
\right)
\right]
\equiv 
\Delta_{\mathrm{TV}}^{\vec C}
(\mathrm{Alg}_m;q^{\otimes m};p),
\label{eq:count-conditional-augmentation-control}
\end{align}

More specifically, write \(\widetilde n=n_{\mathrm{tar}}+m\), and suppose that replacing one observation in the augmented dataset changes \(G\) by at most \(\gamma_{\widetilde n}\).
A maximal-coupling argument then strengthens \eqref{eq:count-conditional-augmentation-control} to
\[
    \left|
    \mathbb E\left[
    G(\vec Y^{\mathrm{tar}},\widetilde{ Y})
    \right]
    -
    \mathbb E\left[
    G(\vec Y^{\mathrm{tar}},\vec Y^\star)
    \right]
    \right|
    \le
    \bigl(m\gamma_{\widetilde n}\wedge1\bigr)\cdot
    \Delta_{\mathrm{TV}}^{\vec C}
    (\mathrm{Alg}_m;q^{\otimes m};p).
\]
For example, an average of fixed, 
observation wise 
losses taking values in $[0,1]$ has
\(\gamma_{\widetilde n}=1/\widetilde n\), yielding the factor
\(m/(n_{\mathrm{tar}}+m)\).
Analogous distributional mismatch terms arise in stability bounds for generative data augmentation \citep[see, e.g., Theorem~3.1 of][]{zheng2023toward}.

\subsection{Additional algorithms for anytime RDC}\label{sec:algos-anytime-rdc}

In this section, 
we discuss how to efficiently compute the target mass of the feasible support $\alpha_t \coloneq \nu(\Omega_m(\vec C(t)))$ (cf.~\cref{def:thresholded-anytime-rdc}) under the target joint law $\nu = q^{\otimes m}$.

We then discuss a variant of the TA-RDC algorithm (cf.~\cref{sec:anytime-rdc}), 
called 
\emph{RM (randomized-mass RDC)}, 
that modifies 
the output attribute law away from one that minimizes the count-conditional discrepancy.
On a high level, RM constructs a stopping time rule whose survival probability function decreases linearly with the target mass of the feasible support $\alpha_t$.
This results in stopping times that are stochastically no greater than that of TA-RDC.
Under mild conditions,
it provably reduces the M-cost while preserving TA-RDC’s universal unconditional discrepancy bound.

Finally, we provide first-order optimal frontiers under TV and Hellinger distance, and discuss a simple randomized RDC/direct-sampling procedure that attains them.

{\bf Computing the target mass of the feasible support.}
When $\nu = q^{\otimes m}$, 
the target mass of the feasible support set $\nu (\Omega)$ (cf.~\cref{thm:opt-count-cond-f-curve} in \cref{sec:approx-sampling-fairness})
is exactly that of a multinomial c.d.f.,
\[
    \alpha(\vec c)\equiv \nu(\Omega) = q^{\otimes m}(\Omega)
    =
    \Pr_{\vec L\sim\mathrm{Mult}(m,q)}
    \left(L_i\le c_i,~\forall i\in[k]\right).
\]

Computing multinomial c.d.f.s has been extensively studied in computational statistics \citep[see, e.g.,][and references therein]{levin1981representation, lebrun2013efficient, hayter2014recursive}.
Inspired by these works, we use recursion identities of multinomial 
c.d.f.s as in \cite{hayter2014recursive} to compute these probabilities,
and 
provide the full algorithm in \cref{alg:multinomial-c.d.f.} for completeness.

\begin{algorithm}[tbp]
\caption{\footnotesize \textsc{MultinomialCDF}$(h,q,\vec r)$: (upper-rectangular) Multinomial c.d.f. probability via \cite{hayter2014recursive} recursion}
\label{alg:multinomial-c.d.f.}
\scriptsize
\begin{algorithmic}[1]
\REQUIRE Integer $h\ge 0$, probability vector $q\in\Delta^{k-1}$, capacity vector $\vec r\in\mathbb{Z}_{\ge 0}^k$.
\ENSURE $P=\Pr[L_i\le r_i~\forall i\in[k]]$ for $L\sim\mathrm{Mult}(h,q)$.

\vspace{0.5em}
\IF{$\exists i:r_i<0$}
    \STATE \textbf{return} $0$.
\ENDIF
\IF{$h=0$}
    \STATE \textbf{return} $1$.
\ENDIF

\STATE Let $S\leftarrow\{i\in[k]:q_i>0\}$.
\STATE Restrict $q\leftarrow(q_i)_{i\in S}$ and $\vec r\leftarrow(r_i)_{i\in S}$, and reset $k\leftarrow |S|$.
\STATE \COMMENT{Coordinates with $q_i=0$ have $L_i=0$ almost surely and do not affect the multinomial c.d.f.}

\IF{$\sum_{i=1}^k r_i<h$}
    \STATE \textbf{return} $0$.
\ENDIF
\IF{$k=1$}
    \STATE \textbf{return} $\mathbf{1}\{r_1\ge h\}$.
\ENDIF
\STATE Replace each $r_i$ by $\min\{r_i,h\}$.

\vspace{0.5em}
\IF{$k=2$}
    \STATE Set $a\leftarrow \max\{0,h-r_2\}$ and $b\leftarrow \min\{h,r_1\}$.
    \STATE \textbf{return}
    \( \sum_{x=a}^{b}
        \binom{h}{x}q_1^xq_2^{h-x}\).
\ENDIF

\vspace{0.5em}
\STATE Define prefix and suffix capacity sums
\( B_i\coloneqq\sum_{j=1}^i r_j\) 
and \(R_i\coloneqq\sum_{j=i}^k r_j\).
\STATE For $i=1,\dots,k-1$, define the feasible domain
\[
    \mathcal D_i
    \coloneqq
    \left\{
    w\in\mathbb Z:
    \max\{0,h-R_{i+1}\}\le w\le \min\{h,B_i\}
    \right\}.
\]

\vspace{0.5em}
\STATE Initialize $g_{k-1}(w)\leftarrow 1/(h-w)!$ for each $w\in\mathcal D_{k-1}$.
\FOR{$i=k-2$ down to $1$}
    \STATE For each $w\in\mathcal D_i$, set
    \[
    g_i(w)\leftarrow
    \sum_{j=\max\{h-R_{i+2},w\}}^{\min\{h,w+r_{i+1}\}}
    \left(\frac{q_{i+1}}{q_{i+2}}\right)^j
    \frac{1}{(j-w)!}
    g_{i+1}(j).
    \]
\ENDFOR

\vspace{0.5em}
\STATE Compute
\[
    P
    \leftarrow
    h!q_k^h
    \sum_{j\in\mathcal D_1}
    \left(\frac{q_1}{q_2}\right)^j
    \frac{1}{j!}
    g_1(j).
\]
\STATE \textbf{return} $P$.
\end{algorithmic}
\end{algorithm}

{\bf M-cost improvements over TA-RDC at a fixed discrepancy tolerance.}
We provide a universally $(f,\varepsilon)$-approximate algorithm that has the same unconditional discrepancy bound as TA-RDC (cf.~\cref{def:thresholded-anytime-rdc}) 
while having a potentially smaller M-cost.

\begin{definition}[Randomized-mass RDC (RM)]\label{def:randomized-mass-rdc}
Fix a target joint law \(\nu\) on \([k]^m\). After \(t\geq m\) M-samples, let \(\vec C(t)\) be the accumulated source-count vector (cf.~\cref{eq:feasible-support}) 
and set
\[
    \Omega_t\coloneqq\Omega_m\bigl(\vec C(t)\bigr),
    \qquad
    \Omega_{m-1}\coloneqq\varnothing,
    \qquad
    \mathcal D_t\coloneqq\Omega_t\setminus\Omega_{t-1},
\]
where \(\Omega_m\bigl(\vec C(t)\bigr)\) is the feasible support
(cf.~\cref{eq:feasible-support}).
Furthermore, fix a tolerance $\varepsilon\ge0$ and let
\[
\theta_{f,\varepsilon}
\coloneqq\inf\{a\in[0,1]:\Psi_f(a)\le\varepsilon\}.
\]

If $\theta_{f,\varepsilon}>0$, independently E-sample
$U\sim\operatorname{Unif}(0,1)$ and stop at
\begin{equation}\label{eq:rm-stopping-time}
N_{f,\varepsilon}^{\mathrm{RM}}
\coloneqq\inf\{t\ge m:\alpha_t\ge\theta_{f,\varepsilon}\cdot U\},
\end{equation}
where \(\alpha_t = \nu\bigl(\Omega_m(\vec C(t))\bigr)\) is the target mass of the feasible support.
At this time, E-sample output attributes from
$\vec U\sim\nu(\cdot\mid\mathcal D_{N_{f,\varepsilon}^{\mathrm{RM}}})$.
If $\theta_{f,\varepsilon}=0$, return the first $m$ queried outputs instead.
\end{definition}

In general, RM may use any stopping time rule with conditional survival probability
\(\Pr\left(
N_{f,\varepsilon}^{\mathrm{RM}}>t
\middle|\mathcal G_t
\right)
=
{(\theta_{f,\varepsilon}-\alpha_t)_+}/
{\theta_{f,\varepsilon}}\).
For $\theta_{f,\varepsilon}>0$, the likelihood ratio of RM's output attribute law 
conditional on the entire source-count path is at most $1/\theta_{f,\varepsilon}$.
Consequently, convexity of the $f$-divergence and averaging over source paths gives,
for every interior source rate $p$,
\[
    \Delta_f^{\mathrm{un}}(\mathrm{RM};\nu;p)
    \le\Psi_f(\theta_{f,\varepsilon})\le\varepsilon.
\]
Moreover, 
for every $t\ge m$,
\[
\Pr\left(
N_{f,\varepsilon}^{\mathrm{RM}}>t\mid\mathcal G_t
\right)
\leq
1\{\alpha_t<\theta_{f,\varepsilon}\}(1-\alpha_t)
=
\Pr\left(
N_{f,\varepsilon}^{\mathrm{TA\text{-}RDC}}>t
\mid\mathcal G_t
\right).
\]
Thus, RM is a universally $(f,\varepsilon)$-approximate algorithm whose M-cost is no greater than that of TA-RDC. 

The M-cost of RM is strictly less whenever
$0<\theta_{f,\varepsilon}<1$ and
$\Pr_p(0<\alpha_t<\theta_{f,\varepsilon})>0$ for some $t\ge m$.
For example, if $\nu=q^{\otimes m}$ with $q$ uniform on $[k]$,
$k\ge2$, then $\alpha_m=k^{-m}$ for the source count process with 
$m$ identical attributes.
Thus, the improvement is strict whenever
$k^{-m}<\theta_{f,\varepsilon}<1$; for KL, 
$\theta_{f,\varepsilon}=e^{-\varepsilon}$,
so this holds for
$0<\varepsilon<m\log k$.

{\bf First-order optimal frontiers and algorithms under other divergences.}
First-order optimal frontiers and procedures under a target joint law $\nu = q^{\otimes m}$
can be characterized for divergences other than KL given in \cref{thm:first-order-uni-opt-ta-rdc}.
For the $\chi^2$ divergence,
\(\chi^2(P\|Q)\coloneqq\sum_x(P(x)-Q(x))^2/Q(x)\), 
every $(\chi^2,\varepsilon)$-approximate algorithm is $(\textrm{KL},\log(1+\varepsilon))$-approximate since
\(D_{\mathrm{KL}}(P\|Q)\le\log\{1+\chi^2(P\|Q)\}\).
Thus, for a sequence of tolerances \(\log(1+\varepsilon_m)/m\to d\), \[
    \frac{\mathcal C_{m,\varepsilon_m}^{\mathrm{TA\text{-}RDC},\chi^2}
    (p;q^{\otimes m})}m,
    \quad
    \frac{\mathcal C_{m,\varepsilon_m}^{\mathrm{univ},\chi^2}
    (p;q^{\otimes m})}m
    \longrightarrow\rho_{p,q}^\star(d),
\]
and, hence, 
TA-RDC is first-order optimal under the $\chi^2$ divergence as well.
Furthermore, 
\[ 
\mathcal C_{m,\varepsilon}^{\mathrm{TA\text{-}RDC}, \chi^2}(p,q)
    \le
    \mathcal C_{m,\log(1+\varepsilon)}^{\mathrm{univ},\textrm{KL}}(p,q)
    +\frac{3\sqrt m}{\min_{i\in[k]}p_i} \leq 
    \mathcal C_{m,\varepsilon}^{\mathrm{univ},\chi^2}(p,q)
    +\frac{3\sqrt m}{\min_{i\in[k]}p_i}.
\]

For fixed \(\varepsilon\in[0,1]\), write
\(\kappa(p,q)\coloneqq\max_{i\in[k]}q_i/p_i\).
The TV and normalized squared Hellinger frontiers are instead
\[
\frac{\mathcal C_{m,\varepsilon}^{\mathrm{univ},\mathrm{TV}}
(p;q^{\otimes m})}m
\longrightarrow1+(1-\varepsilon)\{\kappa(p,q)-1\},
\qquad
\frac{\mathcal C_{m,\varepsilon}^{\mathrm{univ},H^2}
(p;q^{\otimes m})}m
\longrightarrow1+(1-\varepsilon)^2\{\kappa(p,q)-1\},
\]
where \(H^2(P,Q)\coloneqq1-\sum_x\sqrt{P(x)Q(x)}\).  
The optimal rule
E-samples a coin before querying, runs RDC with probability
\(1-\varepsilon\) for TV or \((1-\varepsilon)^2\) for Hellinger, and
otherwise returns the first \(m\) M-samples.  

Thus, TA-RDC is the
first-order optimal procedure for KL and Pearson \(\chi^2\), while the
randomized RDC/direct-sampling rule is the choice for TV and
Hellinger.
However, this randomized rule does not give control over individual runs: it
controls neither the pathwise attained discrepancy as in TA-RDC, nor the number of queries as in CA-RDC.
We provide experimental results with TA-RDC, the first-order optimal frontier and its optimal procedure under TV distance in \cref{app:exp:exp-details}.

\subsection{Evaluation of M-costs and statistical alignment discrepancies}\label{app:metric-evaluation}

In this section, we give details 
for computing M-costs and 
statistical alignment discrepancies (cf.~\cref{def:uncond-sampling-fairness-discrepancy,def:count-cond-sampling-fairness-discrepancy}) 
of the various post-processing methods considered in \cref{sec:exp,app:alg-approx-sampling-uncond-discrep}.
We consider a product law for the target joint law $\nu = q^{\otimes m}$, where $q \in \Delta^{k-1}$ are the user-specified target rates.

Statistical alignment discrepancies with closed form expressions can be directly evaluated
either via a theorem (e.g., the unconditional discrepancy is zero for RDC by \cref{thm:rdc-exact-main})
or complete enumeration of state spaces of appropriate multinomial random variables.

However, when the number of attributes $k$ and/or output samples $m$ is large,
exact computation may be infeasible.
In this case, 
we approximate the statistical alignment discrepancy via Monte Carlo estimation. 

\emph{Setting and notation.}
Throughout this section, we
assume the source and target rates $p$ and $q$ 
can lie in $\Delta^{k-1}$ (in particular, they need not lie in the interior of the simplex).
We
denote their supports as \(S_{\src}=\operatorname{supp}(p)\)
and
\(\textrm{supp}(q)=\operatorname{supp}(q)\), respectively.

Furthermore, let \(\mathcal T_t
=\{\vec\ell\in\mathbb N_{\geq0}^k:
\textstyle\sum_i\ell_i=t\}\) 
be the space of possible count vectors returned by $t\in \N$ samples and
\( \rho_q(\vec\ell)
=\Pr\left(\Mult(m,q)=\vec\ell\right) \) 
index values $\vec \ell \in \mathcal T_m$
of a multinomial p.d.f. with $m$ trials and success rates equal to the target rates $q$, with the convention $0^0=1$.
For a source state \(\vec c\), define its feasible target mass by
\begin{align*}
    \alpha(\vec c)
    =\Pr_{\vec L\sim\Mult(m,q)}(\vec L\leq\vec c)
    =\sum_{\substack{\vec\ell\in\mathcal T_m: ~
                     \vec\ell\leq\vec c}}
       \rho_q(\vec\ell)
\end{align*}
which are multinomial c.d.f.s that can be efficiently computed via recursive identities;
see, e.g., \citet{hayter2014recursive} and \cref{sec:algos-anytime-rdc} for more details.

\emph{Output count law.}
Sampling an attribute sequence from the target joint law $\nu = q^{\otimes m}$ can be equivalently thought of as
drawing a count vector $\vec \ell \in \mathcal{L}_m$
according to the multinomial law $\vec L \sim \Mult(m,q)$ and then
selecting an attribute sequence uniformly over the chosen type class
\begin{equation}\label{eq:type-attribute}
    T(\vec \ell)
    :=
    \left\{
        \vec a=(a_1,\ldots,a_m)\in[k]^m:
        \sum_{t=1}^m 
        1\{a_t=i\}=\ell_i,
        ~\forall i\in[k];
    \right\}
\end{equation}
\citep[see, e.g.,][]{cover1991elements, csiszar1998method}.
Under the target joint law $\nu = q^{\otimes m}$,
the black-box post-processing methods we consider in this paper also select an attribute sequence uniformly over a chosen type class, for a (possibly random) count vector $\vec \ell$ (cf.~\cref{sec:approx-sampling-fairness}).

Consequently, obtaining closed-form expressions for the statistical alignment discrepancies often requires explicitly characterizing the output law of the count vectors returned by a post-processing algorithm; 
see, also, the discussion about minimal sufficiency of the count vectors 
in \cref{sec:count-conditional-target-sampling}.
Thus, let
$\widetilde{K} \in \mathcal{T}_m$ be the attribute counts of the $m$ returned samples from a post-processing algorithm $\textrm{Alg}_m \in \mathfrak{A}_m^o$ (cf.~\cref{def:black-box-post-processing-algs}).
Then,
for every count vector $\vec c \in \N_{\ge 0}^k$ and $\vec \ell \in \mathcal{T}_m$, 
denote
the output count law after conditioning on the realized attribute counts as
\begin{equation}\label{eq:count-cond-output-count-law}
    K^{\vec C}_{\textrm{Alg}_m}(\vec \ell; \vec c)
    \coloneqq
    \Pr (\widetilde{K} =\vec \ell \mid x,\vec C=\vec c)
\end{equation}
(see, also, \cref{eq:count-cond-output-label-law}, which defines the output \textit{attribute} law as the law of the attribute sequence returned by a post-processing algorithm).
Finally, denote the unconditional output count law of a post-processing algorithm as \begin{equation}\label{eq:count-cond-output-label-law-uncond}
    K_{\textrm{Alg}_m}(\vec \ell)
    \coloneqq
    \Pr(\widetilde{K}=\vec \ell \mid x)
\end{equation}

\subsubsection{Metrics for RDC}
\label{app:metric-evaluation-rdc-stopping-time-dist-m-cost}

Assume the support condition
\(\textrm{supp}(q)\subseteq \textrm{supp}(p)\), so that the RDC stopping time satisfies
\(T_{\nu}^\star<\infty\) almost surely (cf.~\cref{thm:rdc-exact-main}).

{\bf M-cost and stopping-time distribution.}
Note that \[
    T_\nu^\star
    =\inf\{t\ge m: \vec A^\star \in \Omega_t\}
    =\inf\{t\ge m:C_i(t)\ge L_i,\ \forall i\in[k]\}
    \eqqcolon T_{\vec L},
\]
where  $L_i=\sum_{j=1}^m1\{A_j^\star=i\}$ are the attribute counts of the target attribute sequence $\vec A^\star \sim \nu$.
In other words,
an equivalent representation of the RDC algorithm involving count vectors draws a target attribute sequence $\vec A^\star\sim\nu$, records its attribute counts $L_i$, 
and then
M-samples until it attains at least $L_i$ outputs with annotated attribute $i$.

Then, independence of the target demand and
the source stream gives the c.d.f. of the stopping time and M-cost of the RDC algorithm as
\begin{align}
    \Pr(T_{\nu}^\star\leq t)
       &=\mathbb E_{\vec C(t)\sim\Mult(t,p)}\left[\alpha(\vec C(t))\right],
       \label{eq:app-synth-rdc-cdf}\\
    \mathcal C_m^{\RDC}(p,q)
       &=\mathbb E[T_{\nu}^\star]
        =\sum_{t=0}^{\infty}
          \left\{1-\mathbb E_{\vec C(t)\sim\Mult(t,p)}
          \left[\alpha(\vec C(t))\right]\right\},
       \notag %
\end{align}
respectively. 
The M-cost of the RDC algorithm also admits an alternative expression involving an expectation over target count vectors.
For a count vector $\ell=(\ell_1,\ldots,\ell_k)\in\mathbb N_{\ge0}^k$, define the hitting time \(T_{\ell}
    =
    \inf\left\{
        t\ge0:C_i(t)\ge\ell_i,\ \forall i
    \right\} \)
    and let 
    \( S_r(u)
    =
    \sum_{j=0}^{r-1}\frac{u^j}{j!}\) with \(S_0(u)=0\).
    
\begin{proposition}[Tail-sum and Poissonized integral formulas]\label{prop:rdc-exact-formulas}
For every $\vec\ell\in\mathbb N_{\geq0}^k$,
\begin{align}
    \mathbb E[T_{\vec\ell}]
    & =
    \sum_{t=0}^{\infty}
    \Pr\left(
        \exists i:C_i(t)<\ell_i
    \right),
    \notag %
    \\
    & =
    \int_0^{\infty}
    \left\{
        1-
        \prod_{i=1}^k
        \left[
            1-e^{-p_it}S_{\ell_i}(p_it)
        \right]
    \right\}dt.
    \label{eq:quota-poisson-integral}
\end{align}
Consequently,
\begin{equation}\label{eq:rdc-integral}
    \mathcal C_m^{\RDC}(p,q)
    =
    \mathbb E_{\vec L\sim\Mult(m,q)}
    \left[
        \int_0^{\infty}
        \left\{
            1-
            \prod_{i=1}^k
            \left[
                1-e^{-p_it}S_{L_i}(p_it)
            \right]
        \right\}dt
    \right].
\end{equation}
\end{proposition}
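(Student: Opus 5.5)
The proof has three steps: the tail-sum identity for a nonnegative integer random variable, Poissonization to evaluate that sum as an integral, and finally averaging over the random demand.

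\emph{Tail-sum formula.} Since $T_{\vec\ell}$ is a nonnegative integer random variable, $\mathbb E[T_{\vec\ell}]=\sum_{t\ge0}\Pr(T_{\vec\ell}>t)$. The counts $C_i(t)$ are nondecreasing in $t$, so $T_{\vec\ell}>t$ holds exactly when $C_i(t)<\ell_i$ for some $i$. This gives the first display. When $\vec\ell\neq\vec 0$ and $\ell_i>0$ forces $p_i>0$, the expectation is finite, by a geometric-tail bound or by the integral below. In the degenerate case where some $\ell_i>0$ has $p_i=0$, both sides equal $+\infty$ and the identity holds trivially.

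\emph{Poissonization.} Embed the query stream in a rate-one Poisson process: the arrival times are $\Gamma_1<\Gamma_2<\cdots$, the marks are i.i.d.\ $\mathrm{Categorical}(p)$, and $\widetilde N(s)$ is the number of arrivals by time $s$. Let $\widetilde T$ be the continuous time at which all quotas $\ell_i$ are first met. Then $\widetilde T=\Gamma_{T_{\vec\ell}}$, with the convention $\Gamma_0=0$. The inter-arrival gaps are independent of the marks, and $T_{\vec\ell}$ is a function of the marks only. Wald's identity, or directly $\mathbb E[\Gamma_n]=n$ after conditioning on $T_{\vec\ell}$, therefore gives $\mathbb E[\widetilde T]=\mathbb E[T_{\vec\ell}]$.

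Next I evaluate $\mathbb E[\widetilde T]=\int_0^\infty\Pr(\widetilde T>s)\,ds$. By Poisson thinning, the continuous-time counts $\widetilde C_i(s)$ are independent $\mathrm{Poisson}(p_i s)$ variables, and $\widetilde T\le s$ holds exactly when $\widetilde C_i(s)\ge\ell_i$ for every $i$. Also $\Pr(\mathrm{Poisson}(\lambda)<r)=e^{-\lambda}S_r(\lambda)$, which is consistent with $S_0=0$ for zero quotas. Independence then gives
\[
\Pr(\widetilde T\le s)=\prod_{i=1}^k\bigl[1-e^{-p_is}S_{\ell_i}(p_is)\bigr],
\]
and integrating the complement yields \cref{eq:quota-poisson-integral}. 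The main point needing care is the identity $\mathbb E[\widetilde T]=\mathbb E[T_{\vec\ell}]$, which relies on independence of the gaps from the marks. An alternative route avoids the Poisson process: expand $\sum_t\Pr(T_{\vec\ell}>t)$ using $\int_0^\infty e^{-s}s^t/t!\,ds=1$ and Fubini with nonnegative terms. Either way, finiteness is immaterial because all quantities are nonnegative.

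\emph{Consequence for RDC.} As noted before the proposition, $T_\nu^\star=T_{\vec L}$ with $\vec L\sim\Mult(m,q)$ independent of the source stream. The lower bound $t\ge m$ in $T_\nu^\star$ is automatic, since $|\vec L|=m$ means every quota is met only once at least $m$ queries have been made. Conditioning on $\vec L$ and applying \cref{eq:quota-poisson-integral} gives \cref{eq:rdc-integral} via the tower property.
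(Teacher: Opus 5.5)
Your proof is correct and follows the route the paper has in mind. The paper's proof only defers to the Poissonization (embedding) argument of \citet{doumas2016coupon}, and you carry that argument out for heterogeneous quotas $\ell_i$: the tail-sum step, the identity $\mathbb E[\Gamma_{T_{\vec\ell}}]=\mathbb E[T_{\vec\ell}]$ via independence of the gaps from the marks, Poisson thinning, conditioning on the independent demand $\vec L\sim\Mult(m,q)$, and the observation that the constraint $t\ge m$ in $T_\nu^\star$ is automatic because $|\vec L|=m$ (a detail the paper leaves implicit).
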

The proof is analogous to the one given in \citet{doumas2016coupon} which gives the expected number of coupons needed to collect $m$ sets of $k$ coupon types.

For exact evaluations,
the expectation in \cref{eq:app-synth-rdc-cdf} 
can be computed by enumerating count vectors in $\vec c \in \mathcal T_n$ and then calculating the multinomial c.d.f. $\alpha(\vec c)$
in the argument (cf.~\cref{sec:algos-anytime-rdc}).
Similarly, the expectation in 
\cref{eq:rdc-integral} can be computed by enumerating count vectors in $\mathcal T_m$ and then calculating the one-dimensional integral in the argument.

For approximate evaluations, we use Monte Carlo by drawing $r\in[R]$ replicates
\(\vec L^{(r)}\overset{\mathrm{i.i.d.}}{\sim}\Mult(m,q)\) and simulating a source-count process until
\(\vec C^{(r)}(t)\geq\vec L^{(r)}\).
We then either average the resulting stopping times to estimate the M-cost or obtain empirical survival probabilities and quantiles to estimate \cref{eq:app-synth-rdc-cdf}.

\subsubsection{Metrics for thresholded anytime RDC}
Let
\(
\mathcal G_t\coloneqq\sigma\{\vec C(s):0\le s\le t\}
\)
be the source-count filtration, and let
\(\sigma\geq m\) be either a deterministic time or an almost surely finite
\(\{\mathcal G_t\}_{t\geq 0}\)-stopping time. 
Write \(\vec C_\sigma \equiv \vec C (\sigma)\)
as the accumulated source-count vector (cf.~\cref{eq:feasible-support})
of the anytime RDC procedure
(A-RDC; cf.~\cref{def:anytime-rdc}) at time \(\sigma\). By
\cref{thm:opt-count-cond-f-curve}, conditional on
\(\vec C_\sigma=\vec c\), the output count law of A-RDC is
\begin{align*}
K_{\mathrm{A\text{-}RDC},\sigma}^{\vec C}
(\vec\ell;\vec c)
&\coloneqq
\Pr\left(
    \widetilde K=\vec\ell
    \,\middle|\,
    x,\vec C_\sigma=\vec c
\right)
=
\frac{
    \rho_q(\vec\ell)
    1\{\vec\ell\leq\vec c\}
}{
    \alpha(\vec c)
}
\end{align*}
whenever \(\alpha(\vec c)>0\). 
On the event \(\{\alpha(\vec C_\sigma)=0\}\), every feasible output has
zero target probability. 
Therefore, for every
\(\vec\ell\in\mathcal T_m\) satisfying \(\rho_q(\vec\ell)>0\), the
unconditional output count law of A-RDC at time \(\sigma\) satisfies
\begin{align}
K_{\mathrm{A\text{-}RDC},\sigma}(\vec\ell)
&\coloneqq
\Pr\left(
    \widetilde K=\vec\ell
    \,\middle|\,
    x
\right)=
\rho_q(\vec\ell)r_\sigma(\vec\ell),
\label{eq:a-rdc-uncond-output-count-law}
\end{align}
where
\begin{equation}
r_\sigma(\vec\ell)
\coloneqq
\mathbb E\left[
    \frac{
        1
        \left\{
            \alpha(\vec C_\sigma)>0,\,
            \vec\ell\leq\vec C_\sigma
        \right\}
    }{
        \alpha(\vec C_\sigma)
    }
\right],
\label{eq:a-rdc-uncond-density-ratio}
\end{equation}
with the convention that the integrand is zero when
\(\alpha(\vec C_\sigma)=0\). 

Recall from \cref{def:thresholded-anytime-rdc} that TA-RDC
(cf.~\cref{alg:anytime-rdc}) with nominal tolerance
\(\varepsilon\geq 0\) uses the source-count measurable stopping time \(\tau_{f,\varepsilon}
\coloneqq
\inf\left\{
    t\geq m:
    \Psi_f(\alpha_t)\leq\varepsilon
\right\}\).
Since TA-RDC and A-RDC agree pathwise, (i.e., \( \widehat{A}_{T_{\nu}^\star\wedge\tau_{f,\varepsilon}}
=
\widehat{ A}_{\tau_{f,\varepsilon}}
\) a.s.; see, also, \cref{eq:anytime-rdc-snapshot}),
the unconditional output count law of TA-RDC with nominal tolerance
\(\varepsilon\) is
\(K_{\mathrm{A\text{-}RDC},\tau_{f,\varepsilon}}\).
Thus, TA-RDC is obtained from the preceding general expressions by setting
\(\sigma=\tau_{f,\varepsilon}\).

{\bf Unconditional discrepancies.}
For a source-count measurable time \(\sigma\geq m\), define
\begin{equation}\label{eq:infeasibility-prob-beta-n}
    \beta_\sigma
\coloneqq
\Pr\left(
    \alpha(\vec C_\sigma)=0
\right)
\end{equation}
as the probability that the realized source state admits no feasible attribute
sequence having positive probability under the target joint law
\(\nu = q^{\otimes m}\). Equivalently, on this event, every feasible sequence of
output attributes has zero target probability.

The unconditional TV and KL discrepancies of A-RDC at the
source-count measurable time \(\sigma\) are, respectively,
\begin{align}
\Delta_{\mathrm{TV}}^{\mathrm{un}}
\left(
    \mathrm{A\text{-}RDC}_\sigma
\right)
&=
\frac{1}{2}
\mathbb E_{\vec L\sim\Mult(m,q)}
\left[
    \left|
        r_\sigma(\vec L)-1
    \right|
\right]
+
\frac{\beta_\sigma}{2},
\notag\\
\Delta_{\mathrm{KL}}^{\mathrm{un}}
\left(
    \mathrm{A\text{-}RDC}_\sigma
\right)
&=
\begin{cases}
\displaystyle
\mathbb E_{\vec L\sim\Mult(m,q)}
\left[
    r_\sigma(\vec L)\log r_\sigma(\vec L)
\right],
& \beta_\sigma=0,\\[0.8em]
+\infty,
& \beta_\sigma>0.
\end{cases}
\label{eq:app-synth-a-rdc-unconditional-kl-tv}
\end{align}
where we use the relation 
\(\sum_{\vec\ell:\,\rho_q(\vec\ell)>0}
\rho_q(\vec\ell)r_\sigma(\vec\ell)
=
1-\beta_\sigma\).

The discrepancies in
\cref{eq:app-synth-a-rdc-unconditional-kl-tv} also admit equivalent
representations as expectations under the unconditional output count law
of A-RDC at time \(\sigma\):
\begin{align}
\Delta_{\mathrm{TV}}^{\mathrm{un}}
\left(
    \mathrm{A\text{-}RDC}_\sigma
\right)
&=
\mathbb E_{\vec L\sim
K_{\mathrm{A\text{-}RDC},\sigma}}
\left[
    \left(
        1-\frac{1}{r_\sigma(\vec L)}
    \right)_{+}
    1\{\rho_q(\vec L)>0\}
\right]
+
\beta_\sigma,
\notag\\
\Delta_{\mathrm{KL}}^{\mathrm{un}}
\left(
    \mathrm{A\text{-}RDC}_\sigma
\right)
&=
\begin{cases}
\displaystyle
\mathbb E_{\vec L\sim
K_{\mathrm{A\text{-}RDC},\sigma}}
\left[
    \log r_\sigma(\vec L)
\right],
& \beta_\sigma=0,\\[0.8em]
+\infty,
& \beta_\sigma>0.
\end{cases}
\label{eq:app-synth-a-rdc-unconditional-kl-tv-returned-law}
\end{align}

To approximate these discrepancies, 
we simulate \(r \in [R]\) independent source paths
and record \( \vec C_\sigma^{(r)}
\coloneqq
\vec C^{(r)}\bigl(\sigma^{(r)}\bigr) \).
For TA-RDC, each source path is simulated until
\(\sigma^{(r)}=\tau_{f,\varepsilon}^{(r)}\). 
For any fixed \(\vec\ell\), the expectation in
\cref{eq:a-rdc-uncond-density-ratio} is then estimated by
\begin{equation}
\widehat r_{\sigma,R}(\vec\ell)
=
\frac{1}{R}
\sum_{r=1}^R
\frac{
     1
    \left\{
        \alpha(\vec C_\sigma^{(r)})>0,\,
        \vec\ell\leq\vec C_\sigma^{(r)}
    \right\}
}{
    \alpha(\vec C_\sigma^{(r)})
}.
\label{eq:a-rdc-uncond-ratio-mc-estimate}
\end{equation}
where the feasible mass \(\alpha(\vec C_\sigma^{(r)})\) is computed explicitly
using the multinomial recursion identities described in
\cref{sec:algos-anytime-rdc}.

When the count space of \(\mathcal T_m\) is sufficiently small, we evaluate
\(\widehat r_{\sigma,R}(\vec\ell)\) for every
\(\vec\ell\in\mathcal T_m\) and substitute these values into
\cref{eq:app-synth-a-rdc-unconditional-kl-tv}. For larger count spaces, we
instead (E-)sample $O \ge 1$
returned count vectors conditional on each
simulated source state according to
\begin{equation*}
\Pr\left(
    \vec L^{(r,o)}=\vec\ell
    \,\middle|\,
    \vec C_\sigma^{(r)}
\right)
=
\frac{
    \rho_q(\vec\ell)
    1\{\vec\ell\leq\vec C_\sigma^{(r)}\}
}{
    \alpha(\vec C_\sigma^{(r)})
},
\qquad
o\in[O]
\end{equation*}
so that, marginally,
each \(\vec L^{(r,o)}\) has law
\(K_{\mathrm{A\text{-}RDC},\sigma}\).
These conditional draws are sampled using the same multinomial recursion
identities used to evaluate the feasible masses
\(\alpha(\vec C_\sigma^{(r)})\).
These draws are then used 
to
approximate the expectations in
\cref{eq:app-synth-a-rdc-unconditional-kl-tv-returned-law}.

In particular, 
for KL divergence, 
we estimate \(\log r_\sigma\) using a cross-fitted classifier-based density-ratio estimator \citep[see, e.g.,][for details]{nguyen2010estimating,menon2016linking}; 
for TV distance,
we estimate \(r_\sigma(\vec L^{(r,o)})\) for each returned-count \(\vec L^{(r,o)}\)
by averaging
\(1\{\vec L^{(r,o)}\leq\vec C_\sigma^{(s)}\}/
\alpha(\vec C_\sigma^{(s)})\) over the stopped source states 
$s \in S^{(f)} \subseteq [R] \backslash r$ for disjoint hold-out folds $\bigsqcup_{f}S^{(f)} = [R]$.
To obtain the final estimate, we then average \(
\left(
    1-
    {1}/{
        \widehat r_\sigma
        (\vec L^{(r,o)})
    }
\right)_+
\)
over the returned-count samples.

When \(\beta_\sigma>0\), the unconditional KL discrepancy is
\(+\infty\). For TV, the mass \(\beta_\sigma\) can be estimated from the
same source paths by
\[
\widehat\beta_{\sigma,R}
=
\frac{1}{R}
\sum_{r=1}^R
1\left\{
    \alpha(\vec C_\sigma^{(r)})=0
\right\}.
\]

{\bf Attained statewise certificate.}
Because the source-count process is discrete, the statewise discrepancy
at \(\tau_{f,\varepsilon}\) may be strictly smaller than the nominal
tolerance \(\varepsilon\).
In particular, for the TV and KL certification
times, respectively,
\begin{align}
\Delta_{\mathrm{TV}}^{\mathrm{un}}
\left(
    \mathrm{TA\text{-}RDC}_{\mathrm{TV},\varepsilon}
\right)
&\leq
\mathbb E\left[
    1-\alpha\!\left(
        \vec C(\tau_{\mathrm{TV},\varepsilon})
    \right)
\right]
\leq
\varepsilon,
\notag\\
\Delta_{\mathrm{KL}}^{\mathrm{un}}
\left(
    \mathrm{TA\text{-}RDC}_{\mathrm{KL},\varepsilon}
\right)
&\leq
\mathbb E\left[
    -\log\alpha\!\left(
        \vec C(\tau_{\mathrm{KL},\varepsilon})
    \right)
\right]
\leq
\varepsilon.
\label{eq:app-synth-a-rdc-state-kl-tv}
\end{align}

Using the same simulated stopped states $\{\vec C_\sigma^{(r)}\} _{r\in [R]}$, 
we estimate the attained
statewise TV and KL certificates by
\begin{align}
\widehat\Delta_{\mathrm{TV}}^{\vec C}
\left(
    \mathrm{TA\text{-}RDC}
\right)
&=
\frac{1}{R}
\sum_{r=1}^R
\left[
    1-\alpha(\vec C_\sigma^{(r)})
\right],
\notag\\
\widehat\Delta_{\mathrm{KL}}^{\vec C}
\left(
    \mathrm{TA\text{-}RDC}
\right)
&=
\frac{1}{R}
\sum_{r=1}^R
\left[
    -\log\alpha(\vec C_\sigma^{(r)})
\right],
\label{eq:app-synth-a-rdc-state-kl-tv-mc-estimate}
\end{align}
where \(\sigma=\tau_{\mathrm{TV},\varepsilon}\) or
\(\sigma=\tau_{\mathrm{KL},\varepsilon}\), depending on the $f$-divergence used for the statewise certificate. 

{\bf M-cost.} 
The M-cost of TA-RDC with source-count measurable 
stopping time $\tau_{f,\varepsilon}$ (cf.~\cref{def:thresholded-anytime-rdc})
is
\[
\mathcal C_m^{\mathrm{TA\text{-}RDC}_{f,\varepsilon}}(p,q)
=
\mathbb E\left[
    T_{\nu}^\star\wedge\tau_{f,\varepsilon}
\right].
\]
We approximate this expectation using \(r \in R_{\mathrm{MCost}}\) independent
replications of the terminal target demand and the source stream
\begin{equation}
\widehat{\mathcal C}_m^{
    \mathrm{TA\text{-}RDC}_{f,\varepsilon}
}(p,q)
=
\frac{1}{R_{\mathrm{MCost}}}
\sum_{r=1}^{R_{\mathrm{MCost}}}
\left(
    T_{\nu}^{(r)}
    \wedge
    \tau_{f,\varepsilon}^{(r)}
\right).
\label{eq:app-synth-ta-rdc-m-cost-mc}
\end{equation}

\subsubsection{Metrics for capped anytime RDC}
\label{app:metric-evaluation-ca-rdc}

Since capped anytime RDC
(CA-RDC; cf.~\cref{alg:anytime-rdc}) and A-RDC agree pathwise
(i.e., \( \widehat{ A}_{T_{\nu}^\star\wedge n}
=
\widehat{ A}_n\) a.s.; see, also, \cref{eq:anytime-rdc-snapshot}),
the unconditional output count law of CA-RDC with cap \(n\geq m\)
coincides with that of A-RDC at the deterministic source-count measurable time
\(\sigma\equiv n\). 
Accordingly, CA-RDC's unconditional discrepancies and
statewise certificates are computed using exactly the preceding
expressions and estimators with
\begin{equation}\label{eq:metric-eval-ca-rdc}
    \sigma=n,
\qquad
\vec C_\sigma^{(r)}
\overset{\mathrm{i.i.d.}}{\sim}
\Mult(n,p).
\end{equation}

More explicitly, CA-RDC's population-level unconditional discrepancies
are obtained from
\cref{eq:app-synth-a-rdc-unconditional-kl-tv,eq:app-synth-a-rdc-unconditional-kl-tv-returned-law}
by setting \(\sigma=n\),
while their Monte Carlo approximations use
\cref{eq:a-rdc-uncond-ratio-mc-estimate} and the conditional returned-count
draws described above.
When exact
evaluation is computationally feasible, the expectations in
\cref{eq:app-synth-a-rdc-unconditional-kl-tv} are computed by enumerating
the count vectors in \(\mathcal T_m\) and \(\mathcal T_n\).

The infeasibility probability $\beta_n$ also admits a convenient form when \(\sigma=n\) is deterministic: note that 
\( \alpha(\vec C(n))=0
\Longleftrightarrow
\sum_{i\in\operatorname{supp}(q)}C_{n,i}<m\) where $\vec C(n)$ is the source-count process of \cref{eq:feasible-support}.
Thus,
\begin{equation}
\beta_n
=
\Pr\left\{
    \operatorname{Binomial}\left(
        n,
        \sum_{i\in\operatorname{supp}(q)}p_i
    \right)
    <m
\right\}.
\label{eq:beta-ca-rdc-binom-tail-prob}
\end{equation}
For \(n\geq m\), this probability is zero if and only if
\(\operatorname{supp}(p)\subseteq\operatorname{supp}(q)\). 

{\bf M-cost.}
The M-cost of CA-RDC with query cap \(n\geq m\) is
\[
    \mathcal C_m^{\mathrm{CA\text{-}RDC}_n}(p,q)
    = \mathbb E\left[
    T_{\nu}^\star\wedge n
    \right].
\]
We approximate this expectation using
\(r\in[R_{\mathrm{MCost}}]\) independent replications of the terminal
target demand and the source stream:
\begin{equation}
\widehat{\mathcal C}_m^{
\mathrm{CA\text{-}RDC}_n
}(p,q) =
\frac{1}{R_{\mathrm{MCost}}}
\sum_{r=1}^{R_{\mathrm{MCost}}}
\left(
T_{\nu}^{(r)}
\wedge n
\right).
\label{eq:app-synth-ca-rdc-m-cost-mc}
\end{equation}

{\bf Feasible-run discrepancies.}
In the setting where $\exists i \in[k]$ such that $p_i > 0$ but $q_i = 0$,
there is a strictly positive probability that an M-sampled batch of $n$ outputs $\mathcal{B}_n = (Y_1,\ldots, Y_n)$ 
admits no feasible attribute sequence under the target joint law.
In this case, the KL discrepancy for CA-RDC with any capped-budget $n\ge m$ is $+\infty$, so we 
report the feasible-run discrepancy \(D_{\mathrm{KL}}(P_{\textrm{CA-RDC}}(\widehat A_n\mid\alpha_n>0)\|q^{\otimes m})\) and the feasibility probability 
$\Pr(\alpha_n > 0) = 1 - \beta_n$ (cf.~\cref{sec:emp-exp,eq:infeasibility-prob-beta-n}).

The former is computed by filtering the source streams in the empirical expectations of 
\cref{eq:a-rdc-uncond-ratio-mc-estimate} to those that contain at least $m$ outputs whose attribute lies in the support of the target rates (since \( \alpha(\vec C(n))=0
\Longleftrightarrow
\sum_{i\in\operatorname{supp}(q)}C_{n,i}<m\)).
For the latter, we report \cref{eq:beta-ca-rdc-binom-tail-prob} using empirical plug-in estimates for the source rates. 
These estimates may be obtained, for example, through a frozen pool of $S$ outputs $(Y_1,\ldots,Y_S)$, 
so that $\widehat p_i = 1/S \cdot \sum_{s=1}^{S} 1 \{ \phi(Y_s) = i\}, \forall i\in[k]$ (cf.~\cref{sec:emp-exp}).

\subsubsection{Metrics for rejection sampling algorithms}\label{app:metric-evaluation-rs}

{\bf Oracle rejection sampling.} 
The oracle rejection sampling procedure of \cref{app:ors} has M-cost $\mathcal{C}_m^{\mathrm{OracleRS}}(p,q)
    = m\cdot \max_{i\in \textrm{supp}(q)}{q_i}/{p_i}$ 
    and is universally exact, so that its unconditional discrepancy under forward $f$-divergences is zero; see, \cref{prop:opt-perfect-sampling-fairness-oracle}.

{\bf Approximate rejection sampling.}
The output count law of the approximate rejection sampling procedure in \cref{app:alg-approx-sampling-uncond-discrep} does not, in general, have a closed-form expression. 
We thus approximate its M-cost and unconditional discrepancy via Monte Carlo estimate as follows. 

Conditional on a
stopped pilot path \(F\), let \(\tau(F)\), \(\lambda(F)\), and \(s(F)\)
denote the pilot size, the 
proposal acceptance probability, and the
output attribute law of accepted samples 
in \cref{eq:ars-frozen-acceptance,eq:ars-conditional-law}.
The number of
proposals required for \(m\) acceptances is negative binomial,
so the (pilot conditional) M-cost and its estimate using independent pilot paths \(F_1,\ldots,F_R\) are
\begin{equation*}
    \mathcal C_m^{\mathrm{ARS}}(p,q)
    =\mathbb E_p\left[\tau(F)+\frac{m}{\lambda(F)}\right]; \qquad \widehat{\mathcal C}_m^{\mathrm{ARS}}(p,q)
    =\frac1R\sum_{r=1}^R
    \left\{\tau(F_r)+\frac{m}{\lambda(F_r)}\right\},
\end{equation*}
respectively.
For the estimation of the sampling discrepancies, the unconditional output count law is
\begin{equation}
    K_m(\ell)
    =\mathbb E_F\!\left[\rho_{s(F)}(\ell)\right],
    \qquad
    \rho_v(\ell)
    =\frac{m!}{\prod_i\ell_i!}\prod_{i=1}^k v_i^{\ell_i},
    \qquad \ell\in\mathcal L_m.
    \label{eq:ars-unconditional-count-mixture}
\end{equation}
Following the 
standard cross-fitting construction of 
\citep{chernozhukov2018double},
we split the pilot paths into folds 
so that $\bigsqcup_{f} I_{f} = [R]$. 
If pilot path
\(F_r\) is in fold \(f\), draw \(B\) count vectors
\(L_{rb}\sim\Mult(m,s(F_r))\), and use the other folds $I_{-f} \coloneq [R]\backslash I_f$
to form
the density ratio
\begin{equation*}
    \widehat w_{-f}(\ell)
    =\frac{1}{|I_{-f}|}
    \sum_{u\in I_{-f}}
    \frac{\rho_{s(F_u)}(\ell)}{\rho_q(\ell)}
    =\frac{1}{|I_{-f}|}
    \sum_{u\in I_{-f}}
    \prod_{i=1}^k
    \left\{\frac{s_i(F_u)}{q_i}\right\}^{\ell_i}.
\end{equation*}
We then construct estimates for the unconditional statistical alignment discrepancies as
\begin{align*}
    \widehat\Delta_{\mathrm{KL}}^{\mathrm{un}}
    &=\frac{1}{R\cdot B}\sum_{r=1}^R\sum_{b=1}^B
      \log \widehat w_{-f(r)}(L_{rb}),
      \\
    \widehat\Delta_{\mathrm{TV}}^{\mathrm{un}}
    &=\frac{1}{R\cdot B}\sum_{r=1}^R\sum_{b=1}^B
      \left(1-\widehat w_{-f(r)}(L_{rb})^{-1}\right)_+.
\end{align*}

\subsection{Additional details and results for experiments}\label{app:exp:exp-details}

\subsubsection{Additional results for simulation experiments}\label{app:exp:results-synth-exp}

{\bf Anytime RDC methods.}
We record the unconditional statistical alignment discrepancies of the post-processing algorithms in \cref{sec:synth-exp}
under both the KL divergence and TV distance across various settings outside of the main text.
We further record the 
first-order optimal frontier under the 
TV distance \[
    1+(1-\varepsilon)\{\kappa(p,q)-1\}, \quad \kappa(p,q) \coloneq \max_{i\in[k]} q_i / p_i,
\] 
as well as a randomized RDC/direct-sampling rule 
(\emph{RDC/DS mixture}) 
that runs RDC with probability $1-\varepsilon$ and otherwise returns the first $m$ samples with probability $\varepsilon$ (cf.~\cref{sec:algos-anytime-rdc}). 
Its M-cost is $\E[N_{\varepsilon}^{\textrm{RDC/DS}}] =
\E[T_{\nu}^\star] \cdot (1-\varepsilon) + m \cdot \epsilon$, and its 
unconditional discrepancies are 
\begin{align*}
\Delta_{\mathrm{KL}}^{\mathrm{un}}
\left(
    \mathrm{RDC/DS}
\right)
&=
D_{\mathrm{KL}}
\left(
    (1-\varepsilon)\cdot q^{\otimes m}
    +
    \varepsilon \cdot p^{\otimes m}
    \middle\|
    q^{\otimes m}
\right),\\
\Delta_{\mathrm{TV}}^{\mathrm{un}}
\left(
    \mathrm{RDC/DS}
\right)
&=
\varepsilon\cdot 
\mathrm{TV}
\left(
    p^{\otimes m},
    q^{\otimes m}
\right).
\end{align*}
We vary the normalized tolerance level $d = \varepsilon /m$ either in the interval $[0,D_{\mathrm{KL}}(p\|q)]$ for KL or the unnormalized tolerance \(\varepsilon\in[0,1]\)
for TV to trace out a cost-discrepancy profile.

\emph{Additional source settings.}
We fix the target joint law to $\nu = q^{\otimes m}$ with $q$ uniform and $m=300$, while
varying the number of attributes $k$ and the source rates; see \cref{tab:additional-source-settings}. 
The ratio $\kappa = \max_{i\in[k]} q_i / p_i$ captures the inherent difficulty of the setting,
with the oracle rejection sampler achieving the minimum possible M-cost among universally exact 
black-box post-processing algorithms, $m\cdot \kappa$ (cf.~\cref{app:ors}).
For CA-RDC, the cap $n$
is varied over a grid from $m$ to $n_{\max} = \operatorname{round}(m\cdot \max \{1.25 \kappa, \kappa + 0.5\})$.

\begin{table*}[t]
\centering
\caption{Additional synthetic source settings. In every setting, the target joint law is $\nu = q^{\otimes m}$ with the target rates uniform, $q_i=1/k$, and the output size is $m=300$.}
\label{tab:additional-source-settings}
\small
\setlength{\tabcolsep}{4pt}
\begin{tabular}{lclc}
\toprule
Setting  & $k$ & Source distribution $p$ & $\kappa=\max_i q_i/p_i$ \\
\midrule
mild skew & 3 & $(0.20,0.30,0.50)$ & 1.67 \\
one rare attribute & 4 & $(0.08,0.22,0.30,0.40)$ & 3.12 \\
graded source & 8 & $(0.04,0.06,0.08,0.10,0.12,0.16,0.20,0.24)$ & 3.12 \\
graded source & 12 & $p_i=(i+1)/90,\ i=1,\ldots,12$ & 3.75 \\
two mild groups & 16 & $((5/64)^{\times 8},(3/64)^{\times 8})$ & 1.33 \\
four rare attributes & 16 & $((1/64)^{\times 4},(5/64)^{\times 12})$ & 4.00 \\
\bottomrule
\end{tabular}
\end{table*}

As shown in \cref{fig:add-source-settings-kl-profiles,fig:add-source-settings-tv-profiles}, 
all methods exhibit a tradeoff between normalized M-cost and unconditional discrepancy across the six source settings.
TA-RDC and CA-RDC produce similar cost-discrepancy profiles under both KL and TV, while the RDC/DS mixture traces a linear tradeoff between its RDC and direct-sampling endpoints.
The profiles generally track their first-order optimal frontiers, 
although the size of the finite-sample deviations varies across the number of attributes and source rates.
Overall, the qualitative comparisons suggest that our anytime RDC methods remain stable across various source settings.

\begin{figure}[tbp]
    \centering

    \begin{subfigure}[t]{0.88\linewidth}
        \centering
        \includegraphics[width=\linewidth]{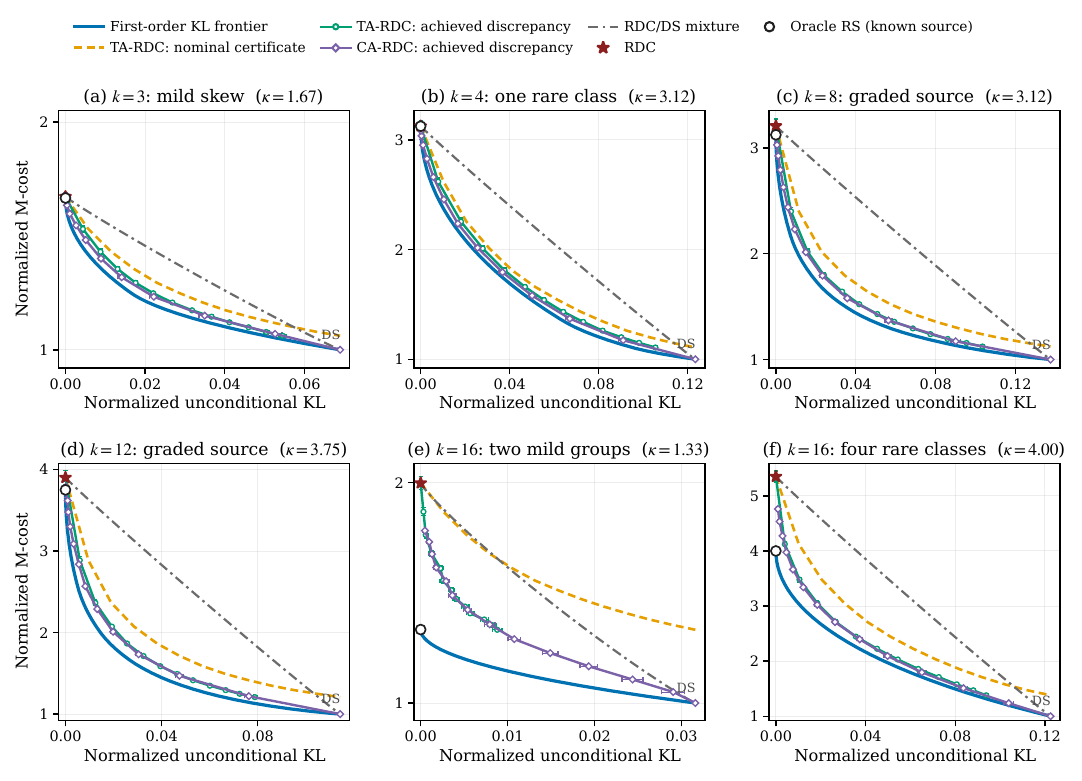}
        \caption{KL discrepancy profiles.}
        \label{fig:add-source-settings-kl-profiles}
    \end{subfigure}

    \vspace{0.25em}

    \begin{subfigure}[t]{0.88\linewidth}
        \centering
        \includegraphics[width=\linewidth]{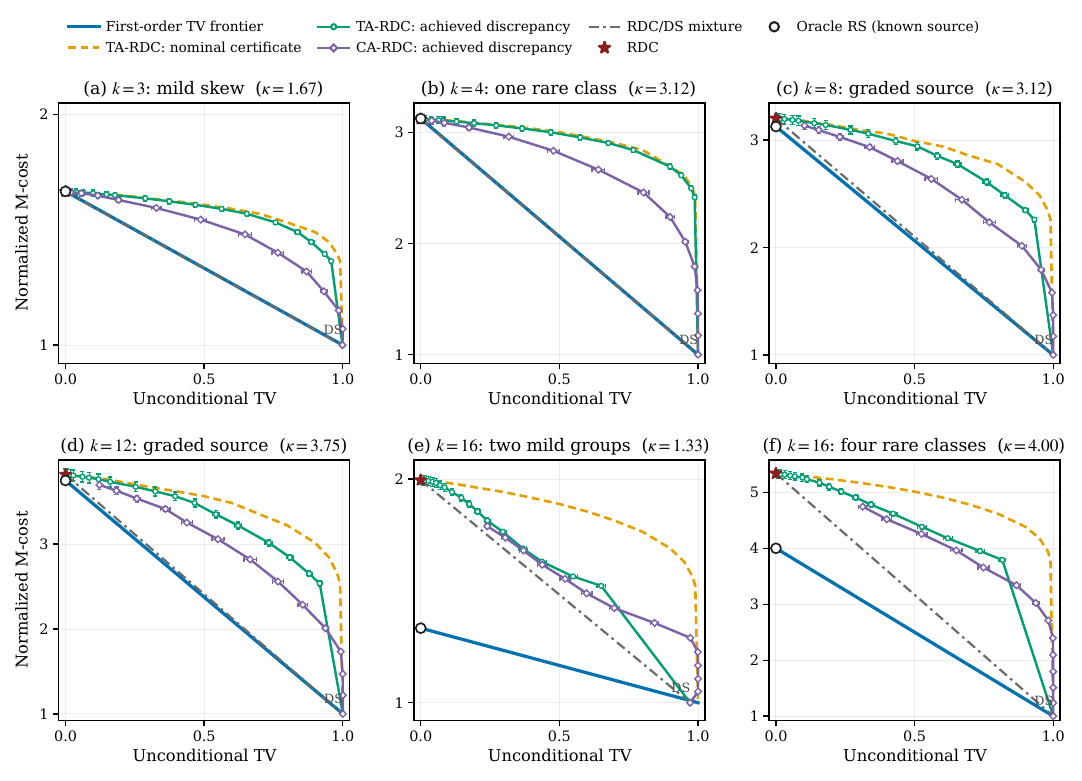}
        \caption{TV discrepancy profiles.}
        \label{fig:add-source-settings-tv-profiles}
    \end{subfigure}

    \caption{Cost--discrepancy profiles for the additional synthetic source 
    settings with a uniform target distribution; see \cref{tab:additional-source-settings} for exact source rate specifications.}
    \label{fig:add-source-settings-profiles}
\end{figure}

\emph{Finite sample performance of $m$.}
We vary the output size $m$ 
in the main text setting with $k=8$ attributes to examine the finite sample effects of the first-order optimality guaranteed by \cref{thm:first-order-uni-opt-ta-rdc,sec:algos-anytime-rdc}.
For CA-RDC, the cap $n$
is varied over a grid from $m$ to $n_{\max}=6m$, as in the main text.

As can be seen in \cref{fig:finite-sample-perf-m-k8}, the finite sample effects are pronounced at $m=5$ for both the KL divergence and TV distance, with the profiles of TA-RDC and CA-RDC achieving a lower M-cost than the first-order optimal RDC/DS mixture procedure for larger values of the unconditional TV (panel (f)).
However, the appropriate 
first-order optimal procedures converge to their respective frontiers even for moderate output sizes $m=100,200$.

\begin{figure}
    \centering
    \includegraphics[]{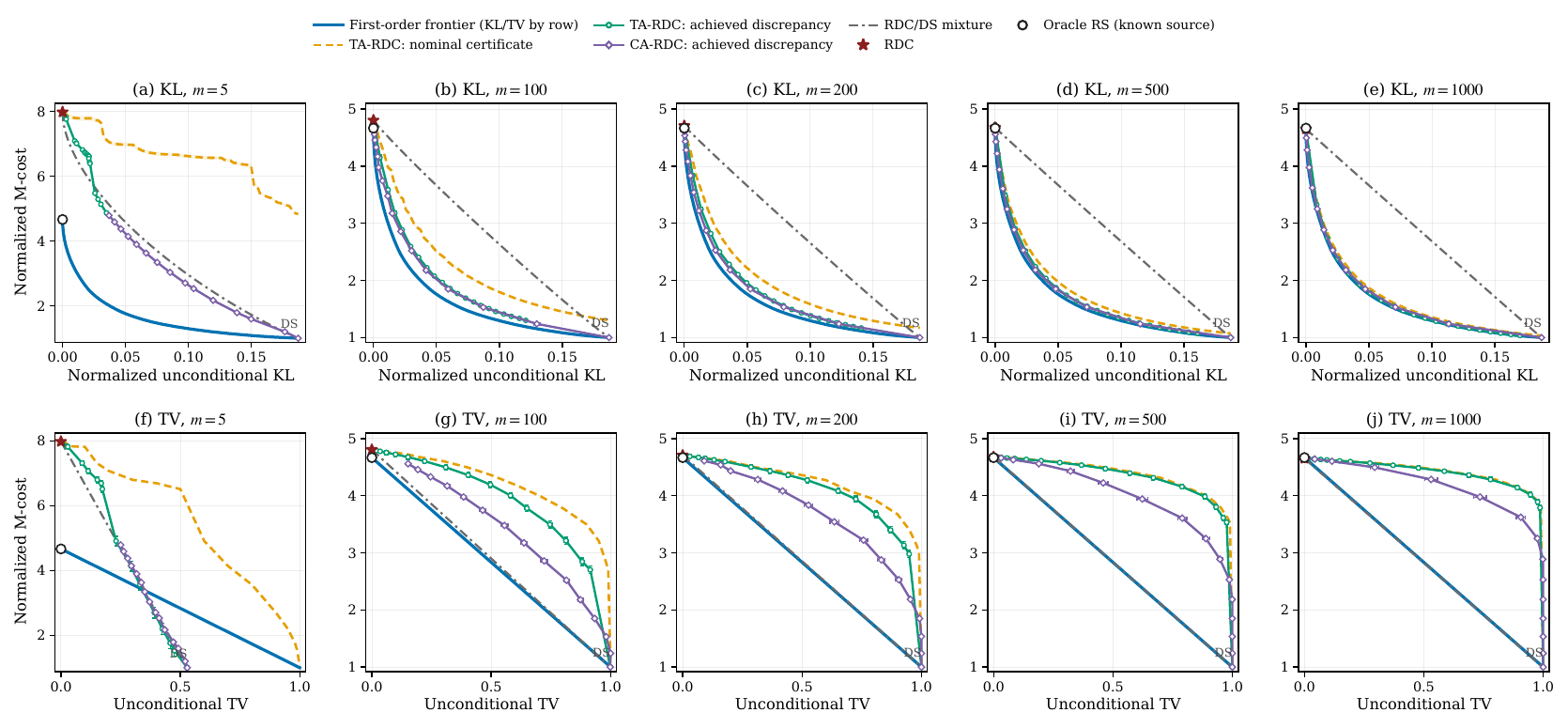}
    \caption{Finite output sample size performance of various post-processing methods in the main text setting with $k=8$ attributes and non-uniform source and target rates \( p=(0.03,0.05,0.08,0.12,0.16,0.18,0.18,0.20)\) and  \(q=(0.14,0.12,0.13,0.15,0.14,0.12,0.11,0.09)\).
    First-order optimal procedures -- TA-RDC and CA-RDC for KL divergence, and RDC/DS mixture for TV distance -- closely track 
    their respective frontiers even for moderate output sizes $m=100$ and $200$.}
    \label{fig:finite-sample-perf-m-k8}
\end{figure}

{\bf Approximate rejection sampling.}
We evaluate the ARS procedure in
the synthetic setting of the main text with $k=8$ attributes and non-uniform source and target rates.
We vary the output size \(m\in\{100,500,1000\}\).
Computational details for unconditional discrepancies and M-costs are given in \cref{app:metric-evaluation}.

\emph{Implementation details.}
We use the confidence sequence in
\cref{eq:ars-lindon-malek-cs} with \(\delta_{\rm CS}=0.05\).  The first look
is at time \(t_0=16\), with
subsequent looks following the dense geometric grid
\(t_{j+1}=\max\{\lceil1.04t_j\rceil,t_j+1\}\).
We vary the threshold for the 
relative width stopping time 
over
\(\gamma\in\{0.30,0.42,0.60,0.85,1.25,1.75,2.50,3.50,5.00\}\), with smaller values corresponding to more precise source-rate
estimates and pilot samples. 
All thresholds are evaluated on the same \(1{,}500\) independent pilot paths.
To estimate the sampling discrepancies, 
we generate $B=20$ count vectors per
pilot path with
three folds for the cross-fitting (cf.~\cref{app:metric-evaluation}). 

\emph{Results.} \cref{fig:ars-relative-width-fresh} plots the cost-discrepancy profile of ARS as we vary the threshold $\gamma$.
The M-cost and KL discrepancy are normalized by the output size $m$. 
At the loosest threshold \(\gamma=5\), its normalized M-costs are \(4.88\),
\(4.32\), and \(4.24\) for \(m=100,500,1000\), respectively, with normalized KL discrepancies of \(0.057\), \(0.077\), and
\(0.081\).
Tightening the threshold to \(\gamma=0.30\) reduces the normalized KL to
\(1.51\times10^{-6}\), \(6.38\times10^{-6}\), and
\(1.15\times10^{-5}\), and TV to \(0.0066\), \(0.0293\), and \(0.0560\),
but substantially raises its normalized M-cost to \(241.94\), \(52.12\), and \(28.39\).
Across all output sizes,
ARS is unable to track the first-order optimal frontier (cf.~\cref{thm:first-order-uni-opt-ta-rdc}).
In contrast,
TA-RDC (cf.~\cref{alg:anytime-rdc}) attains lower M-costs at comparable discrepancies and is first-order optimal under the KL divergence. 

\begin{figure}[tbp]
    \centering
    \includegraphics[width=\linewidth]{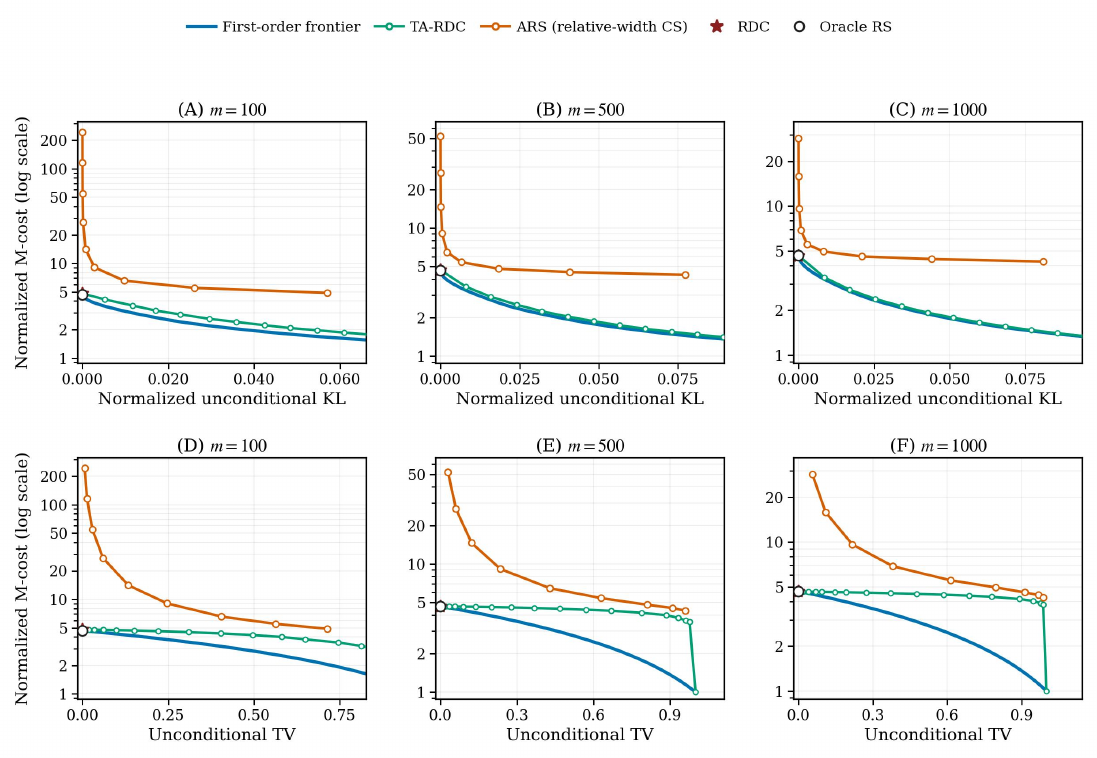}
    \caption{Cost--discrepancy profiles for approximate rejection sampling
    (ARS) as described in \cref{app:alg-approx-sampling-uncond-discrep}.
    Here, the setting is exactly that of the synthetic experiments in the main text, with $k=8$ attributes, non-uniform source and target rates, and output sizes $m \in \{100,500,1000\}$ (cf.~\cref{sec:synth-exp}).
    The top row reports unconditional KL discrepancy normalized by the output size $m$
    and the
    bottom row reports unconditional TV distance.
    The orange points of ARS vary the threshold \(\gamma\) of the relative width stopping time rule, with smaller thresholds corresponding to more  pilot samples and thus a larger M-cost.
    }
    \label{fig:ars-relative-width-fresh}
\end{figure}

\subsubsection{General implementation details for empirical experiments}\label{app:exp:details-emp-exp-general}

{\bf T2I models.} %
We use open-weight text-to-image models because publicly available and
open-weight diffusion models
have found recent, widespread 
use for synthetic data
generation and augmentation applications 
across various domains, 
including representation learning,
medical imaging, 
earth observation, and industrial inspection
\citep{tian2023stablerep,bluethgen2025vision,
sagers2023augmenting,sousa2025data,deng2025weed,valvano2024controllable}.

All image-generation calls are made through \texttt{fal}-hosted endpoints, with one PNG
image requested per prompt.
\begin{enumerate}
    \item[(i)] {\texttt{Flux-2-dev}} uses \texttt{fal} endpoint \texttt{fal-ai/flux-2}.
    We use portrait $4{:}3$ output, 28 inference steps, guidance scale 2.5,
    regular acceleration, and the safety checker enabled
    \citep{blackforestlabs2026flux2dev}.

    \item[(ii)] {\texttt{Qwen-Image-2512}} uses \texttt{fal} endpoint \texttt{fal-ai/qwen-image-2512}.
    We use portrait $4{:}3$ output, 28 inference steps, guidance scale 4.0,
    regular acceleration, and the safety checker enabled
    \citep{wu2025qwen, qwen2025qwenimage2512}.

    \item[(iii)] {\texttt{HiDream-O1-Image}} uses \texttt{fal} endpoint \texttt{fal-ai/hidream-o1-image}.
    We use portrait $4{:}3$ output, 50 inference steps, guidance scale 5.0,
    and the safety checker enabled
    \citep{hidream2026o1image}.
\end{enumerate}

{\bf Annotation and label construction for T2I experiments.} %
We annotate generated images using OpenAI \texttt{GPT-5.5} as a vision-language annotator \citep{openai2026gpt55docs}. 
For all T2I annotation runs, we authenticate using an API key supplied at run time and use the OpenAI Batch API to upload a JSONL file containing one \texttt{POST} request to \texttt{/v1/responses} per image, with an HTTPS image URL, a fixed annotation prompt, a strict JSON Schema, and a 24-hour completion window. 
For all image requests, 
we explicitly set the reasoning effort to \texttt{medium} while not explicitly setting the temperature, \texttt{top\_p}, output-token limit, or random seed.

\emph{Justification for vision-language annotation.}
We use \texttt{GPT-5.5} only as an operational annotator $\phi:\mathcal Y\to\mathcal A$, not as an alignment intervention and not as a trained replacement for a task-specific computer-vision classifier. This distinction is important: in our experiments $\phi$ maps each generated image to a small, prespecified set of coarse perceived-attribute bins under a fixed prompt, fixed allowable categories, and a strict JSON response schema. Modern vision-language models are explicitly designed to analyze image inputs \citep{openai2026gpt55docs,openai2026visiondocs}, and recent empirical studies suggest that they can provide accurate and cost-effective labels for large scale visual annotation when the label space is fixed and the task is specified carefully. 

For example, \citet{lu2024vlmannotatorsceleba} evaluate VLM-based annotation on the CelebA face-attribute dataset and report $79.5\%$ agreement with the original human labels, increasing to $89.1\%$ after re-annotating disagreement cases by majority vote, while costing less than $1\%$ of manual annotation. 
In a satellite-image classification study, \citet{beck2025chatgptsatelliteannotation} compare ChatGPT-4V with expert and non-expert human annotators and find that a hybrid strategy which routes only low-certainty cases to humans can reduce total annotation costs by more than $50\%$ with minimal loss in label quality. 
More broadly, \citet{xie2026mllmannotators} report that MLLM-generated image-tagging labels can reduce annotation cost to about $10^{-3}$ of human annotation cost and can achieve over $90\%$ of the downstream training performance obtained from human labels; they also find that structured prompting and label disambiguation substantially improve annotation quality. 

These findings support our use of a frontier VLM as a scalable first-pass annotator for coarse, prespecified visual labels. To mitigate annotation noise, we use the same fixed annotation protocol for all methods, require structured outputs, locally aggregate and validate returned fields, and interpret the demographic variables as perceived visual annotations rather than ground-truth identity.

\subsubsection{Additional implementation
details for individual-headshot experiments}
\label{app:exp:details-t2i-individual-headshot}

{\bf T2I models and individual-headshot prompts.} %
We use the three T2I models described in \cref{app:exp:details-emp-exp-general}.
The
prompt families for the individual headshot experiment 
are:
\begin{enumerate}
    \item[(i)] {HPS}
    \begin{quote}
    \small\itshape
    Show a photorealistic professional headshot of a person working in
    medicine or healthcare,
    in professional attire.
    \end{quote}

    \item[(ii)] {TSI}:
    The TSI prompt uses the same base prompt as HPS and appends
    \begin{quote}
    \small\itshape
    The image should be a single centered head-and-shoulders portrait of one
    individual. For this single image, internally choose exactly one demographic
    category uniformly at random from the 16 categories formed by gender
    \textnormal{\texttt{\{man, woman\}}},
    age
    \textnormal{\texttt{\{young adult, older adult\}}},
    and race/ethnicity
    \textnormal{\texttt{\{White/Caucasian,}} 
    \textnormal{\texttt{Black/African descent, Asian,
    Hispanic/Latino\}}}.
    Depict this person as belonging to the chosen category. Across separate
    generations, each of the 16 gender-age-race categories should be equally likely.
    \end{quote}

    \item[(iii)] {Demographic HPS}
    \begin{quote}
    \small\itshape
    Show a photorealistic professional headshot of
    \textnormal{\texttt{\{age article\}}}
    \textnormal{\texttt{\{race\}}}
    \textnormal{\texttt{\{gender\}}}
    working in medicine or healthcare,
    in professional attire.
    \end{quote}
    Gender ranges over \{\texttt{man}, \texttt{woman}\};
    age over \{\texttt{young}, \texttt{old}\} with articles
    \{\texttt{a young}, \texttt{an old}\}; and race over
    \{\texttt{White/Caucasian}, \texttt{Black/African-descent},
    \texttt{Asian}, \texttt{Hispanic/Latino}\}.
\end{enumerate}

{\bf Annotation and label construction.} %
Each image is annotated
with one structured request using a fixed prompt, fixed label set, and strict
JSON response schema.

For individual headshots, the prompt asks for perceived
gender, perceived race/ethnicity presentation, and age bin:
\begin{quote}
\small\itshape\raggedright
Annotate the visible person in this professional headshot. Return only JSON. Use exactly one bin for each field. gender must be one of: man, woman. race should mean perceived race/ethnicity presentation, not true identity. race must be one of: White/Caucasian, Black/African descent, Asian, Hispanic/Latino. age must be one of: young, old. Use the closest bin when a demographic attribute is visually ambiguous.
\par
\end{quote}

\subsubsection{Additional results for individual-headshot experiments}
\label{app:exp:results-individual-headshot}

{\bf Hard prompt failures.} %
\Cref{fig:individual-headshot-hard-prompt-failures} 
illustrates demographic HPS failures for the individual headshot T2I experiment.
Hard-prompt failures are driven almost entirely by race/ethnicity annotation mismatches, especially when the prompt requests a Hispanic/Latino person. 
For \texttt{Flux}, \(46.9\%\) of prompted Hispanic/Latino images are annotated as Asian, with smaller off-diagonal mass to Black/African descent and White/Caucasian. 
\texttt{Qwen} and \texttt{HiDream} show the same dominant failure direction, with \(22.4\%\) and \(12.7\%\) 
of prompted Hispanic/Latino images annotated as Asian, respectively.
Thus, 
demographic HPS is highly effective at controlling gender and age in this experiment, and it greatly improves support coverage, but it still leaves ambiguity in race/ethnicity presentation.

\begin{figure}
    \centering
    \includegraphics[width=0.99\linewidth]{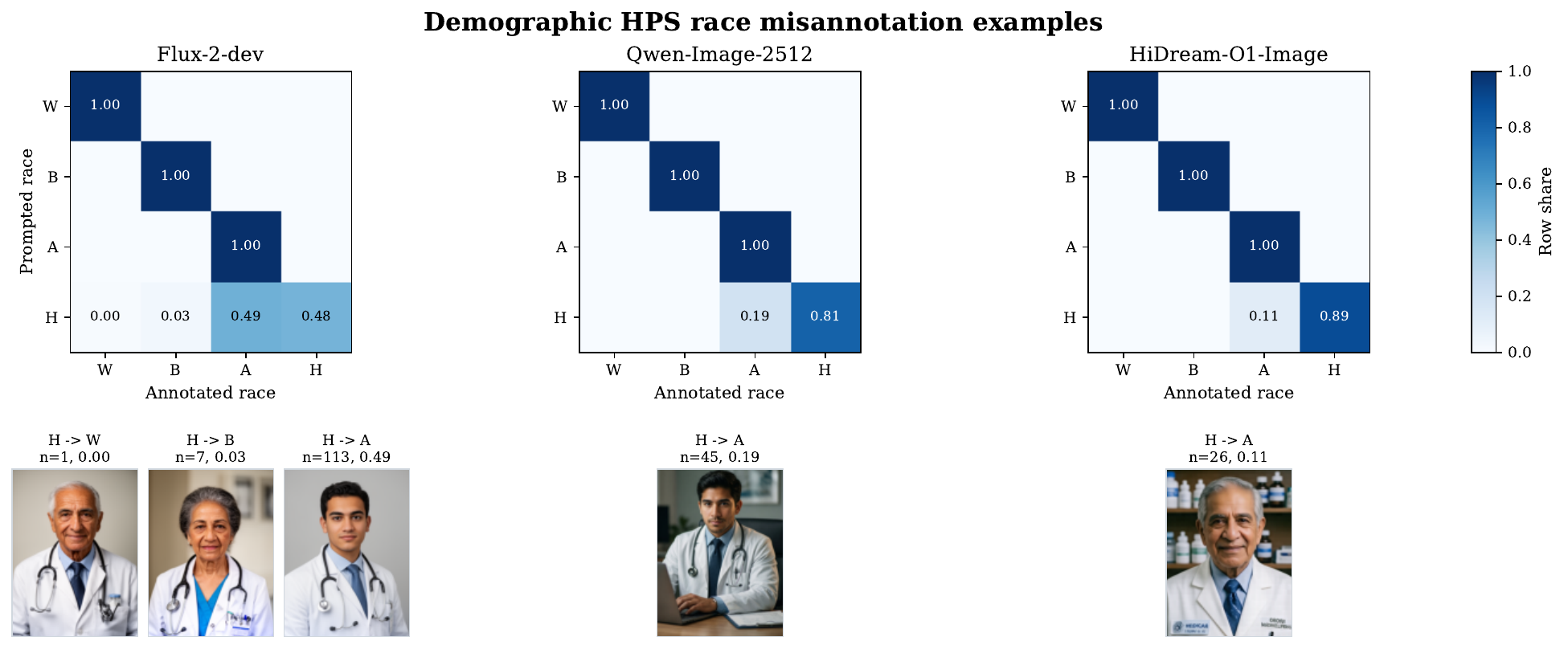}
    \caption{\footnotesize Race/ethnicity hard-prompt failure diagnostics for individual-headshot demographic HPS. Heatmaps show annotated race conditional on prompted race for each T2I model, and image panels show representative off-diagonal race/ethnicity annotation failures. Race abbreviations are W/B/A/H for White, Black, Asian, and Hispanic.}
    \label{fig:individual-headshot-hard-prompt-failures}
\end{figure}

{\bf Sensitivity analysis.} %
We vary the output size \(m\in\{25,50,100\}\) and report results for demographic HPS across all three T2I models. For each output size, we vary the normalized KL tolerance \(d\in[0,3]\), the TV tolerance \(\varepsilon\in[0,0.999]\), and the CA-RDC cap \(n/m\in[1,20]\). 

As shown in \cref{fig:individual-headshot-sensitivity-analysis}, both anytime methods consistently reduce the KL and TV discrepancies as the normalized M-cost increases. 
The cost-discrepancy 
profiles are stable across output sizes and models, although \texttt{Flux} generally requires a larger normalized M-cost than \texttt{Qwen} and \texttt{HiDream} to attain comparable discrepancies.

\begin{figure}
\centering
\includegraphics[width=\linewidth]{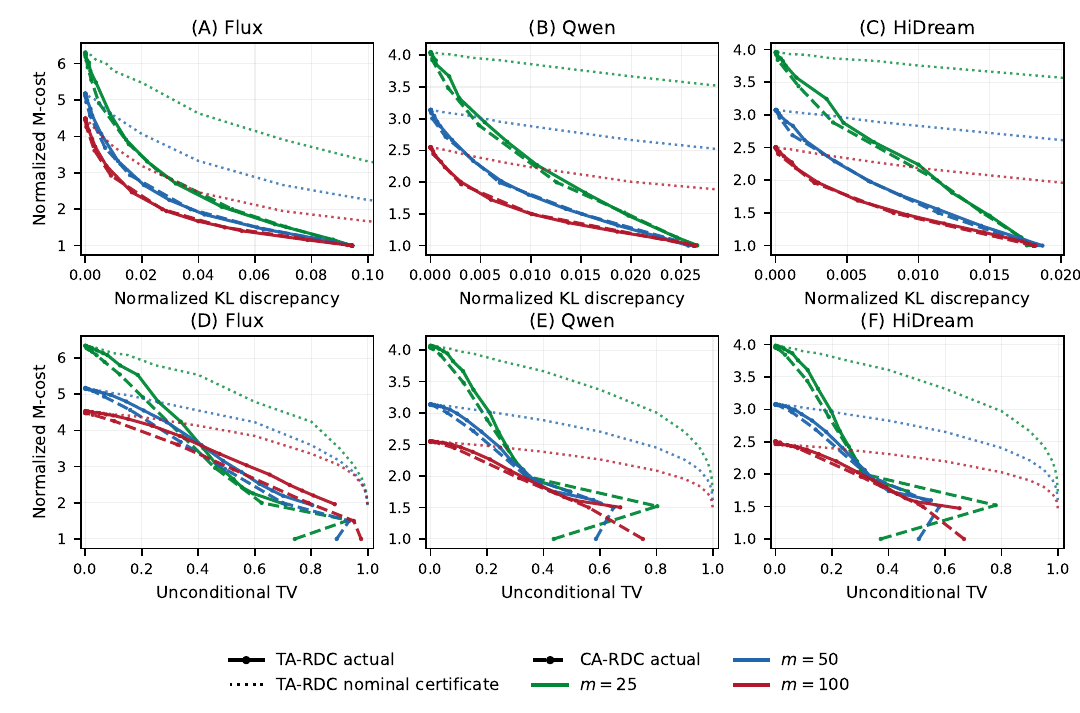}
\caption{\footnotesize
Output-size sensitivity analysis for individual-headshot demographic HPS.
Columns correspond to three T2I models, while the top and bottom rows report normalized
KL discrepancy and unconditional TV discrepancy, respectively.
Colors distinguish \(m\in\{25,50,100\}\); solid and dashed curves report
the actual discrepancies of TA-RDC and CA-RDC, respectively, and dotted
curves report the nominal TA-RDC certificates.
}
\label{fig:individual-headshot-sensitivity-analysis}
\end{figure}

\subsubsection{Additional implementation details for group-photo experiments}\label{app:exp:details-t2i-group-photo}
{\bf T2I models and group-photo prompts.} %
The same T2I models are used as in the individual-headshot photo experiments
(cf.~\cref{app:exp:details-emp-exp-general}).
In the main experiments, we consider HPS, TSI, and 
demographic HPS, all of 
which modify the input prompt to the T2I models.
In detail, their complete prompt
families are:
\begin{enumerate}
    \item[(i)] {HPS}
    \begin{quote}
    \small\itshape
    Generate a single-frame photorealistic candid scene centered on a
    foreground trio of exactly three adults in a
    medicine or 
    healthcare 
    setting. 
    The foreground trio should be
    the main subject, appear together in one camera shot, and be clearly
    countable as three distinct people. The scene should feel natural and in
    the moment, with realistic activity, posture, camera viewpoint, and
    interaction.
    \end{quote}

    \item[(ii)] {TSI}:
    The TSI prompt uses the HPS prompt and appends
    \begin{quote}
    \small\itshape
    For this image, first choose exactly one group composition uniformly at random
    from the target set of foreground-trio demographic count vectors: gender count
    is either 2 women and 1 man or 1 woman and 2 men; race/ethnicity count is
    either 1 White/Caucasian adult, 1 Black/African descent adult, and 1
    Hispanic/Latino adult, or 1 White/Caucasian adult, 1 Black/African descent
    adult, and 1 Asian adult; age count is either 2 young adults and 1 older adult,
    or 3 young adults. Depict the foreground trio so that its demographic counts
    match the chosen composition. These counts refer only to the three adults in
    the foreground trio. Across repeated independent generations, each target
    group-composition count vector should be equally likely.
    \end{quote}

    \item[(iii)] {Demographic HPS}:
    The demographic HPS 
    prompt uses the HPS prompt and appends
    \begin{quote}
    \small\itshape
    The foreground trio consists of:
    \textnormal{\texttt{\{gender count phrase\}}};
    \textnormal{\texttt{\{race count phrase\}}};
    \textnormal{\texttt{\{age count phrase\}}}. These counts refer only to
    the three adults in the foreground trio.
    \end{quote}
\end{enumerate}

For the three-person demographic HPS group-photo condition,
\[
    \begin{aligned}
    \{\text{gender count phrase}\} &\in
    \{\text{2 women and 1 man},\ \text{1 woman and 2 men}\},\\
    \{\text{race count phrase}\} &\in
    \left\{
    \begin{array}{l}
    \text{1 White/Caucasian adult, 1 Black/African descent adult,}\\
    \text{and 1 Hispanic/Latino adult},\\
    \text{1 White/Caucasian adult, 1 Black/African descent adult,}\\
    \text{and 1 Asian adult}
    \end{array}
    \right\},\\
    \{\text{age count phrase}\} &\in
    \{\text{2 young adults and 1 older adult},\ \text{3 young adults}\}.
    \end{aligned}
\]

{\bf Annotation and label construction.} %
We annotate generated images using OpenAI \texttt{GPT-5.5} as a
vision-language annotator \citep{openai2026gpt55docs} using the same API access settings as \cref{app:exp:details-emp-exp-general}.

For group photos, the annotator first counts the number of
people in the foreground. It then returns the gender, race/ethnicity
presentation, and age 
of each identified individual:
\begin{quote}
\small\itshape\raggedright
Annotate this group photo. Return only JSON. First count only the distinct visible people in the foreground/main group. Do not count background bystanders, distant incidental people, reflections, posters, or partial figures outside the main group. Then list each counted foreground person and use exactly one bin for each person's fields. gender must be one of: male, female. race should mean perceived race/ethnicity presentation, not true identity. race must be one of: White/Caucasian, Black/African descent, Asian, Hispanic/Latino. age must be one of: young, old. Base all demographic fields and group-level answers only on the counted foreground people.
\par
\end{quote}

After retrieving the annotations, we then locally aggregate these per-person
annotations into gender, race, and age count vectors. 
A group image is treated
as on-support only when the annotated number of foreground people equals the
requested group size and the resulting count vectors map to one of the
configured group-composition attributes (cf.~\cref{sec:exp-group-photos}).
Images outside this support are retained in the candidate pool but assigned
target mass zero for the group-composition target distribution.

\subsubsection{Additional results for group-photo experiments}\label{app:exp:results-group-photo}
{\bf Qualitative results.} %
\Cref{fig:group-photo-status-examples-tsi} and \cref{fig:group-photo-status-examples} 
show sampled images for various annotation categories.
On- and off-support images refer to those whose observed demographic count vector lies within or outside the support of the target rates, $\mathrm{supp}(q)$, respectively.
TSI can produce
valid group-composition attributes but often produces off-support images
by
realizing an invalid foreground count vector or adding/removing demographic
mass relative to the target support (cf.~\cref{fig:group-photo-status-examples-tsi}).

Images produced by demographic HPS directly specify the target demographic count vectors, so we further differentiate between hard-compliant and non-compliant images:
a hard-compliant image exactly matches the prompted count vector, while 
an on-support non-compliant image lies in the eight-attribute target support but matches a different count vector than requested (cf.~\cref{fig:group-photo-status-examples}). 
Post-processing methods such as A-RDC can use these non-compliant but on-support images to reduce the count-conditional statistical alignment discrepancy. 
Finally, in a similar fashion to TSI, 
off-support examples of HPS
often exhibit an incorrect 
foreground group size or demographic count vector.

\begin{figure}
    \centering
    \includegraphics[width=0.99\linewidth]{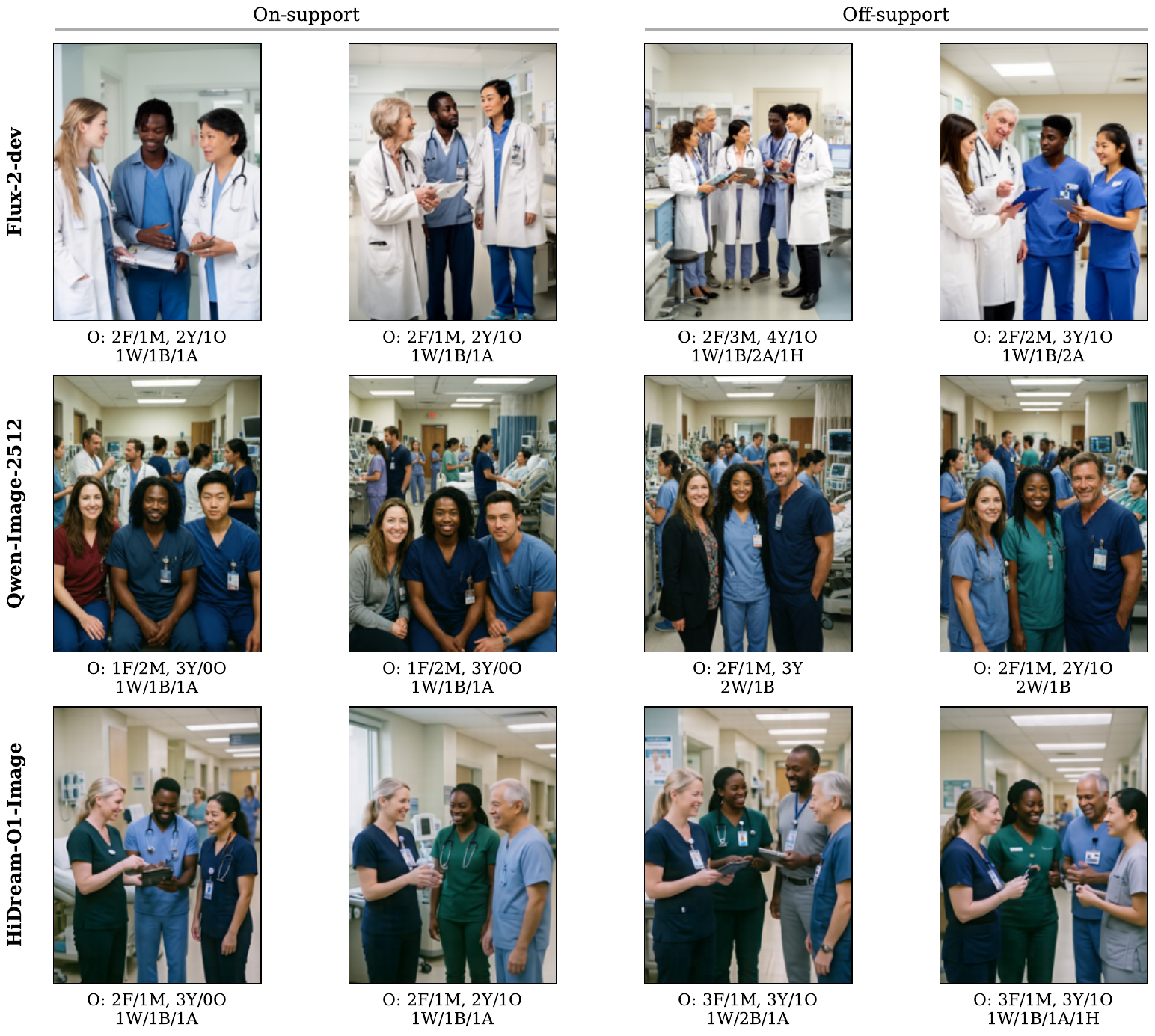}
    \caption{\footnotesize Examples of on- and off-support group-photo generations for each image model for the TSI method.
    Captions report the annotated observed count vector (O).
    The demographics are abbreviated as follows:
    F/M stands for female/male; 
    W/B/A/H stands for White, Black, Asian, and Hispanic;
    and Y/O stands for young/old.
    The number preceding each letter represents the number of foreground figures
    identified 
    as the associated demographic.}
    \label{fig:group-photo-status-examples-tsi}
\end{figure}

\begin{figure}
    \centering
    \includegraphics[width=0.99\linewidth]{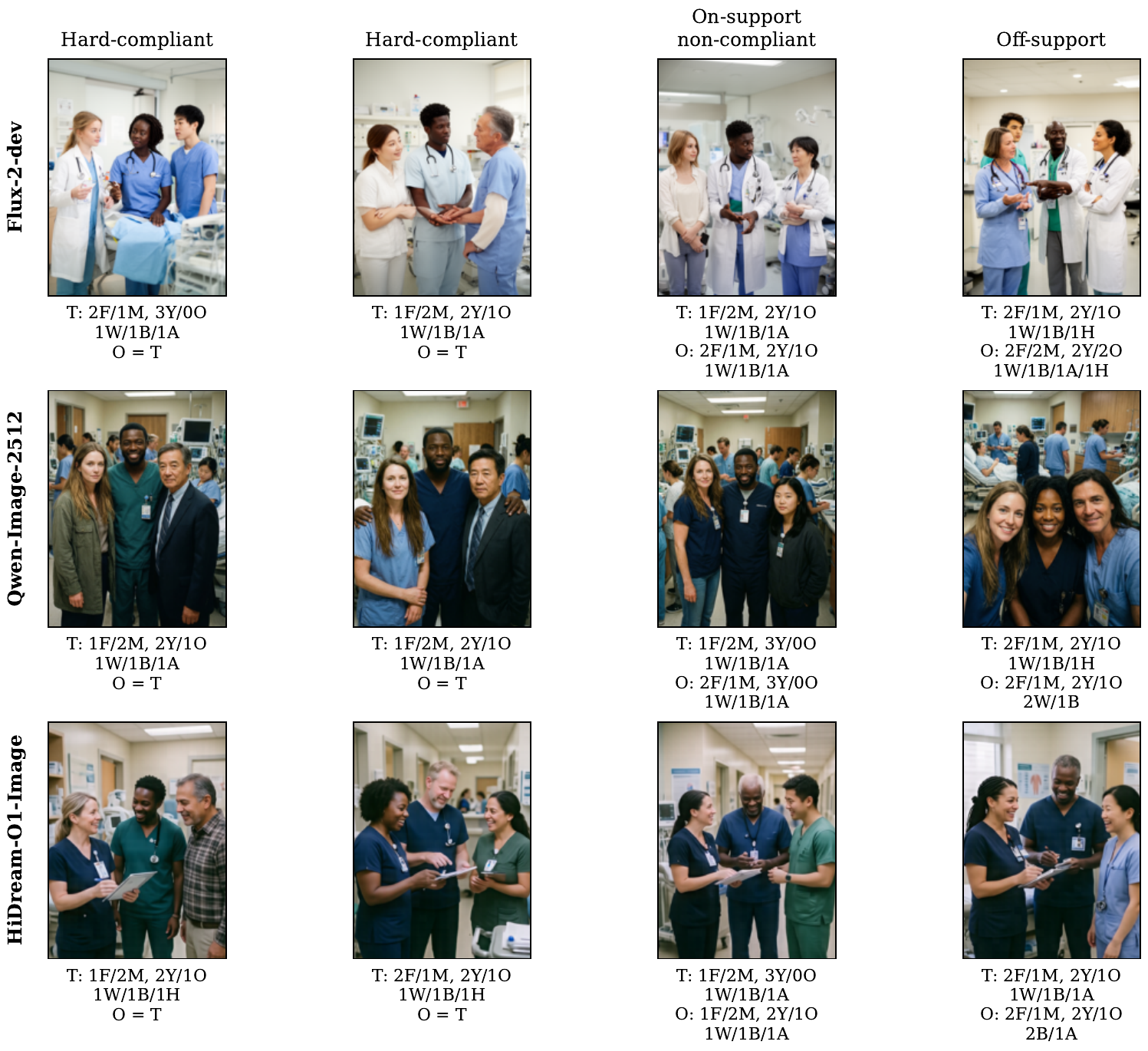}
    \caption{\footnotesize Examples of hard-compliant, on-support non-compliant, and off-support group-photo generations for each image model under demographic HPS. 
    Captions report the prompted target count vector (T) and annotated observed count vector (O).
    The demographic count vectors 
    are abbreviated using the same protocol 
    as \cref{fig:group-photo-status-examples-tsi}.
    }
    \label{fig:group-photo-status-examples}
\end{figure}

{\bf Sensitivity analysis.} %
We vary the output size \(m\in\{10,25,50\}\) and report results for demographic HPS across all three T2I models. 
For each output size, we vary the normalized KL tolerance \(d\in[0,3]\) and the TV tolerance \(\varepsilon\in[0,0.999]\).
For CA-RDC, we vary the capped-budget from \(n/m=1\) to \(22.32\), \(156.25\), and \(20\) for \texttt{Flux}, \texttt{Qwen}, and \texttt{HiDream}, respectively.
Feasible-run KL discrepancy is reported whenever there is a nonzero probability of a queried sample lying outside of the support of the target rates. 

As shown in \cref{fig:group-photo-sensitivity-analysis}, both anytime methods consistently reduce the KL and TV discrepancies as the normalized M-cost increases. 
The cost-discrepancy profiles are stable across output sizes, although \texttt{Qwen} requires substantially larger normalized M-costs because demographic HPS produces one target attribute at a particularly low rate of $\widehat p_{i} = 1/1000$.

\begin{figure}
\centering
\includegraphics[width=\linewidth]{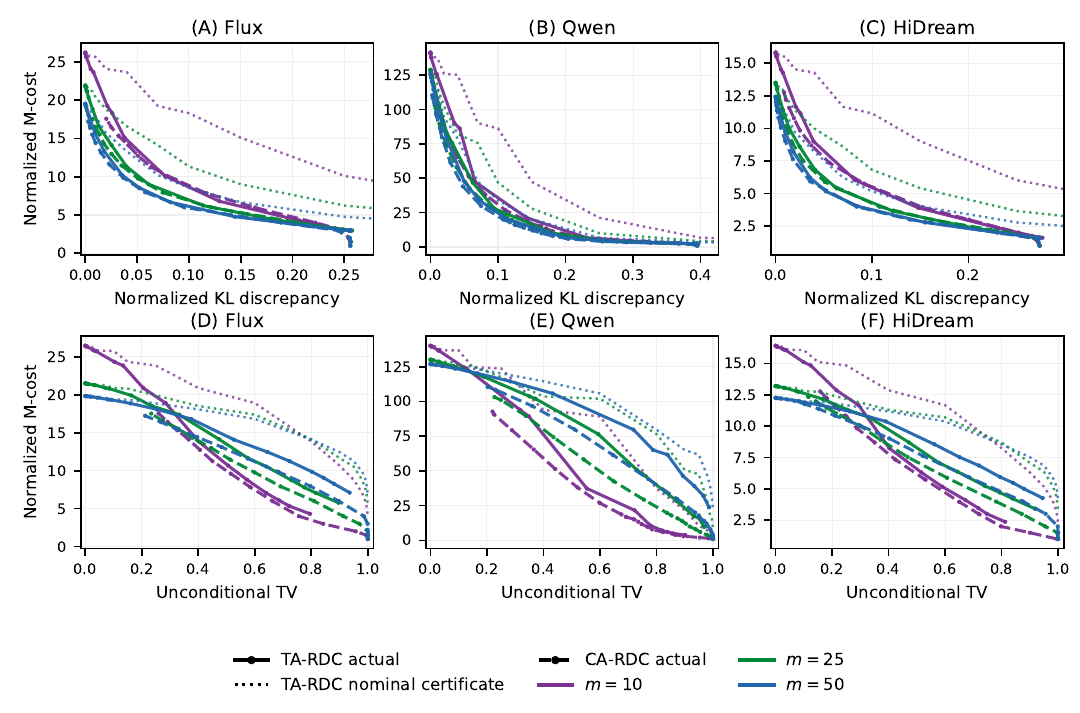}
\caption{\footnotesize
Output-size sensitivity analysis for group-photo demographic HPS.
Columns correspond to three T2I models, while the top row reports
unconditional KL discrepancy for TA-RDC and feasible-run KL discrepancy
for CA-RDC.
The bottom row reports unconditional TV discrepancy for both methods.
Colors distinguish \(m\in\{10,25,50\}\); solid and dashed curves report
the actual discrepancies of TA-RDC and CA-RDC, respectively, and dotted
curves report the nominal TA-RDC certificates.
}
\label{fig:group-photo-sensitivity-analysis}
\end{figure}

\subsubsection{Additional implementation details
for geocoded persona generation}\label{app:exp:details-svi}

{\bf SVI persona-generation model and prompts.} %
For the geocoded persona experiment, we call \texttt{GPT-5.5} and input a
text prompt asking the model to return one synthetic JSON patient persona.
The shared base prompt specifies the JSON schema, fixes the non-geographic
case-seed fields, and asks the model to generate the remaining fields:
\begin{quote}
\small\itshape
Generate one realistic synthetic adult patient persona for testing a
patient-facing digital health assistant.

Return strict JSON only with exactly these fields:
\textnormal{\texttt{sample\_id}},
\textnormal{\texttt{prompt\_variant}},
\textnormal{\texttt{requested\_label}},
\textnormal{\texttt{case\_seed\_id}},
\textnormal{\texttt{age\_band}},
\textnormal{\texttt{care\_setting}},
\textnormal{\texttt{condition\_category}},
\textnormal{\texttt{request\_type}},
\textnormal{\texttt{message\_tone}},
\textnormal{\texttt{city}},
\textnormal{\texttt{state}},
\textnormal{\texttt{zip\_code}},
\textnormal{\texttt{patient\_portal\_message}},
\textnormal{\texttt{barriers\_or\_preferences}}.

Set \textnormal{\texttt{sample\_id}} to
\textnormal{\texttt{\{sample id\}}}.
Set \textnormal{\texttt{prompt\_variant}} to
\textnormal{\texttt{\{prompt variant\}}}.
Set \textnormal{\texttt{requested\_label}} to
\textnormal{\texttt{\{requested label or null\}}}.

Use these controlled case-seed fields exactly:
\textnormal{\texttt{\{case\_seed\_id, age\_band, care\_setting,}}
\textnormal{\texttt{condition\_category, request\_type, message\_tone\}}}.
\end{quote}

The two prompt families used in the main comparison are:
\begin{enumerate}
    \item[(i)] {Target-label-blind county-seeded source.}
    The \texttt{TLB3\_county\_seeded} prompt uses the shared base prompt, sets
    \textnormal{\texttt{requested\_label}} to \textnormal{\texttt{null}}, and
    appends:
    \begin{quote}
    \small\itshape
    Location instruction:
    The patient is located in or near
    \textnormal{\texttt{\{county/state seed\}}}.
    Choose a plausible city and valid ZIP code in or near that county.
    Do not name hidden evaluation categories in the message body.
    \end{quote}
    This prompt never mentions SVI, social vulnerability, or the target
    quartile. The SVI quartile used in evaluation is assigned after generation
    by matching the generated ZIP to the CDC/ATSDR SVI file.

    \item[(ii)] {Hard SVI prompt baseline.}
    The \texttt{HARD\_SVI4\_DEFINITION} prompt uses the shared base prompt, sets
    \textnormal{\texttt{requested\_label}} to the requested SVI quartile, and
    appends:
    \begin{quote}
    \small\itshape
    Target-location instruction:
    The patient's ZIP should correspond to a place whose CDC/ATSDR
    Social Vulnerability Index percentile is in
    \textnormal{\texttt{\{quartile range\}}}.
    Do not mention SVI, vulnerability, deprivation, poverty, or the quartile
    in the message body unless it naturally follows from the case seed.
    \end{quote}
    The quartile ranges are
    \([0,.25)\) for \texttt{SVI\_Q1\_lowest},
    \([.25,.50)\) for \texttt{SVI\_Q2},
    \([.50,.75)\) for \texttt{SVI\_Q3}, and
    \([.75,1]\) for \texttt{SVI\_Q4\_highest}.
\end{enumerate}

\begin{table}[t]
\centering
\small
\caption{\footnotesize Representative generated persona extracts from the county-seeded source, one from each observed SVI quartile. The observed label is assigned after generation by matching the generated ZIP to the CDC/ATSDR SVI file.}
\label{tab:app-svi-persona-examples}
\begin{tabular}{@{}lllll@{}}
\toprule
Observed label & Age band & City, state, ZIP code & Care setting & Request type \\
\midrule
SVI Q1 & 18--29 & Utica, KS 67584 & Cardiology & Medication question \\
SVI Q2 & 45--64 & Lidgerwood, ND 58053 & Cardiology & Symptom question \\
SVI Q3 & 45--64 & Alpharetta, GA 30009 & Cardiology & Symptom question \\
SVI Q4 & 65+ & Algodones, NM 87001 & Cardiology & Symptom question \\
\bottomrule
\end{tabular}
\end{table}

{\bf Annotation and label construction.} %
The SVI experiment uses the CDC/ATSDR 2022 ZCTA-level Social Vulnerability Index file.
SVI percentiles are relative area-level rankings: an overall percentile of $0.80$ means that the ZCTA ranks above roughly $80\%$ of ZCTAs on that index, not that an individual patient has an $80\%$ vulnerability level.
The quartiles move from lower to higher area-level vulnerability, but the index is multidimensional; a ZIP code can have a high overall SVI because of different combinations of socioeconomic, household, racial/ethnic-minority, and housing/transportation conditions.
\Cref{fig:app-svi-quartile-context} summarizes the same national ZCTA file used for the audit.

\begin{figure}[tbp]
\centering
\includegraphics[width=0.66\linewidth]{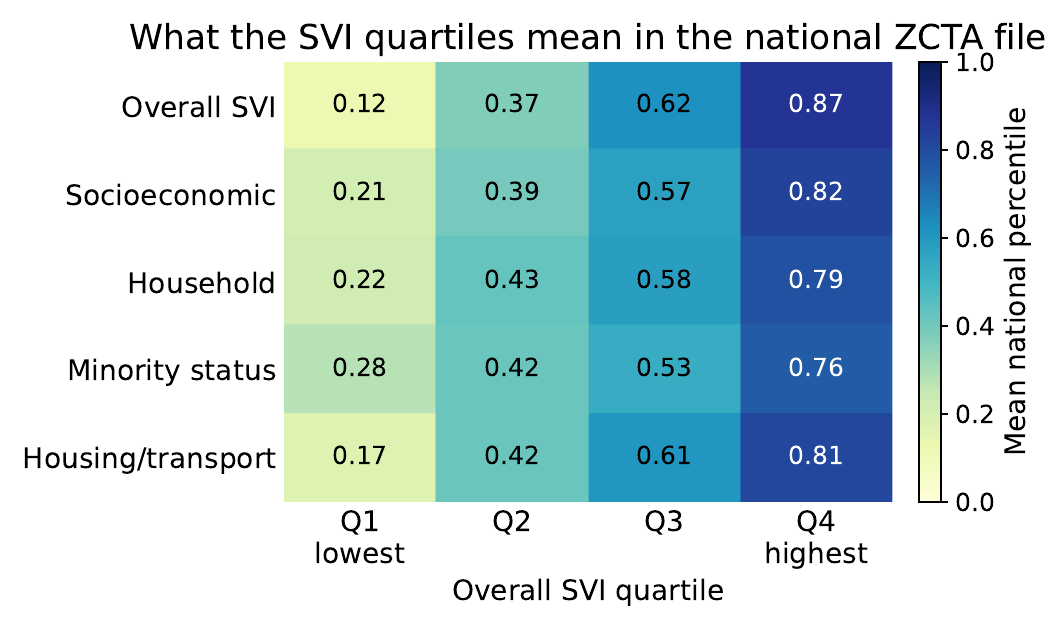}
\caption{\footnotesize Mean overall and theme-specific SVI percentiles for all matched ZCTAs in the 2022 CDC/ATSDR file, grouped by overall SVI quartile.}
\label{fig:app-svi-quartile-context}
\end{figure}

\subsubsection{Additional results
for geocoded persona generation}\label{app:exp:results-svi}

{\bf Representative generated locations.} %
The examples in \cref{tab:app-svi-persona-examples} illustrate how the deterministic ZIP-level audit can convert ordinary-looking location fields into SVI quartile labels.
For example, Utica, KS 67584 is classified as Q1 because its overall SVI percentile is $0.112$; its theme percentiles are also low on the national scale, including $0.145$ for socioeconomic status, $0.141$ for racial/ethnic-minority status, and $0.184$ for housing/transportation.
In the  SVI indicators, this ZCTA has low unemployment in the SVI file, a small racial/ethnic-minority share ($2.9\%$), and few no-vehicle households ($3.8\%$), so even though it has a small population and some older-adult and disability prevalence, the combined national ranking remains low.
Algodones, NM 87001 is classified as Q4 for the opposite reason: its overall SVI percentile is $0.890$, with very high theme percentiles for socioeconomic status ($0.906$), household characteristics ($0.955$), and racial/ethnic-minority status ($0.977$).
The  indicators report $37.4\%$ of residents below 150\% of poverty, $15.3\%$ unemployment, $24.0\%$ without a high-school diploma, and a $94.2\%$ racial/ethnic-minority share.

{\bf Hard SVI prompting diagnostics.} %
We diagnose the hard SVI prompt by comparing the requested quartile to the audited quartile of the generated ZIP.
Across the 1000 hard prompted candidates in the frozen candidate pool, all generated ZIP codes are matched to the SVI file so that all generated personas are on-support.
However, the requested-to-observed compliance is only $0.356$ and,
among the 644 failed hard prompts, $84.0\%$ are off by one quartile and $16.0\%$ are off by two quartiles.
Moreover, failures are almost all upward misses: $99.5\%$ of failed generations have an audited SVI quartile above the requested quartile.

\begin{figure}[tbp]
\centering
\includegraphics[width=0.66\linewidth]{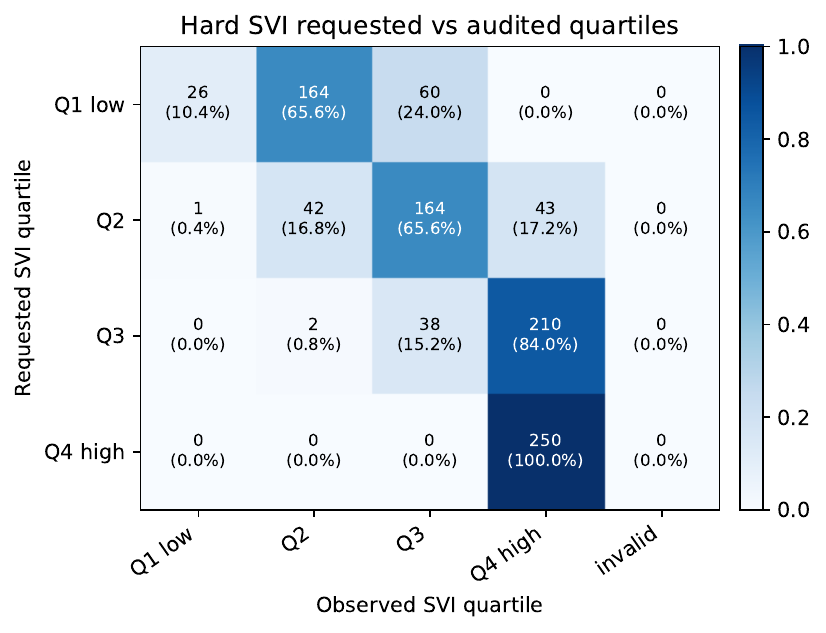}
\caption{\footnotesize 
Hard SVI prompting confusion matrix over the full frozen hard-prompt candidate pool of 1000 personas.
Each cell reports the integer count,
with the row percentage in parentheses. 
Cell shading encodes the row percentage.}
\label{fig:app-hard-svi-confusion}
\end{figure}

In more detail, \cref{fig:app-hard-svi-confusion} shows the asymmetry when hard prompting fails. 
Requests for the highest-vulnerability quartile are much more successful than requests for lower quartiles, whereas low- and middle-SVI requests are systematically shifted upward. Altogether, hard SVI prompting responds directionally to the requested quartile but exhibits a strong upward bias.

{\bf Sensitivity analysis.} %
We vary the output size \(m\in\{25,50,100\}\) and report results for county-seeded prompting.
For each output size, we vary the normalized KL tolerance \(d\in[0,3]\), the TV tolerance \(\varepsilon\in[0,0.999]\), and the CA-RDC capped-budget from \(n/m=1\) to \(20\).
Because county-seeded prompting has a nonzero probability of producing a persona outside the support of the target rates, we report feasible-run KL discrepancy for CA-RDC.

As shown in \cref{fig:svi-persona-sensitivity-analysis}, both anytime methods consistently reduce the KL and TV discrepancies as the normalized M-cost increases.
The cost-discrepancy profiles are stable across output sizes, with normalized KL discrepancies generally decreasing as \(m\) increases, while TV discrepancies tend to be larger for larger returned batches.

\begin{figure}
\centering
\includegraphics[width=\linewidth]{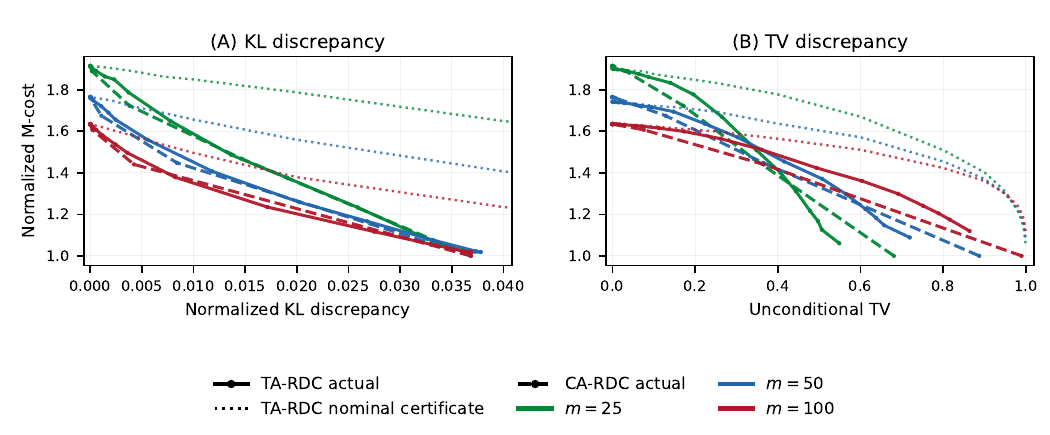}
\caption{\footnotesize
Output-size sensitivity analysis for county-seeded SVI persona generation.
The left panel reports unconditional KL discrepancy for TA-RDC and
feasible-run KL discrepancy for CA-RDC, while the right panel reports
unconditional TV discrepancy for both methods.
Colors distinguish \(m\in\{25,50,100\}\); solid and dashed curves report
the actual discrepancies of TA-RDC and CA-RDC, respectively, and dotted
curves report the nominal TA-RDC certificates.
}
\label{fig:svi-persona-sensitivity-analysis}
\end{figure}

\subsubsection{Implementation details and results for the synthetic pairwise audit}
\label{app:exp:details-results-pairwise-audit-ca-rdc-vs-pm}
\textbf{Methodology for post-processing.}
We use the demographic HPS individual-headshot pools for \texttt{Flux}, \texttt{Qwen}, and \texttt{HiDream} in the synthetic data task of
\cref{sec:exp-pairwise-audit}. 
For each model, we sample ten candidate batches of $n=200$ images without replacement from its frozen pool of 1000 images, and return $m=50$ images for each post-processing algorithm with target joint law $\nu=q^{\otimes m}$, where $q$ is uniform over the $k=16$ demographic attributes. 

Conditional on the source count process $\vec C(n)$, proportion matching (PM) returns an attribute sequence 
that minimizes the TV distance between the empirical distribution of its attributes and target marginal proportions; in our setting, for a target joint law, these target proportions 
are the target rates $q \in \Delta^{k-1}$.
Like RDC and its anytime variants, it then uniformly 
selects returned outputs selected within each attribute (cf.~\cref{alg:rdc,alg:anytime-rdc}).

{\bf False-match rate audit statistic.}
For $\mathrm{Alg}\in\{\textrm{CA-RDC}, \textrm{PM}\}$, write $\widehat L_{i,\mathrm{Alg}} = \sum_{s=1}^n 1\{\phi(\widetilde Y_s)=i\}$ for the number of outputs with annotated attribute $i$ selected by the post-processing algorithm. 
Let 
\[
    \widehat S_{2,\mathrm{Alg}}
    =\sum_{i=1}^k\binom{\widehat L_{i,\mathrm{Alg}}}{2},
    \qquad T_2=\binom{m}{2}=1225,
\]
for the numbers of same-attribute and total unordered pairs.

Within each model's frozen pool, we estimate the same- and different-demographic-attribute match rates, $\widehat\alpha_{\mathrm{same}}^{(2)}(t)$ and $\widehat\alpha_{\mathrm{diff}}^{(2)}(t)$, from all pairs with valid embeddings. Holding these plug-in rates fixed, we calculate
\begin{equation}
\label{eq:pairwise-synthetic-audit-rate}
    \widehat r_{2,\mathrm{Alg}}(t)
    =\frac{\widehat S_{2,\mathrm{Alg}}}{T_2}
        \widehat\alpha_{\mathrm{same}}^{(2)}(t)
    +\left(1-\frac{\widehat S_{2,\mathrm{Alg}}}{T_2}\right)
        \widehat\alpha_{\mathrm{diff}}^{(2)}(t).
\end{equation}

The oracle has output attribute law exactly equal to the target joint law $\nu = q^{\otimes m}$, and thus has
\(\E[\widehat S_{2,\mathrm{oracle}}]
    =
    \binom{m}{2}\sum_{i=1}^k q_i^2
    =
    T_2 \cdot 1/16\) same-attribute pairs.
We report its expected false-match rate using 
$\widehat S_{2,\mathrm{Alg}}/T_2 = 1/16$ in \cref{eq:pairwise-synthetic-audit-rate}.

\textbf{Results.}
\Cref{tab:pairwise-identity-audit-cross-model} in the main text 
reports the results at $t=0.45$. Same-attribute match rates are substantially larger than different-attribute rates for all three image models. PM suppresses same-attribute pairs relative to independent target sampling, thereby underestimating the oracle audit rate. 
In contrast, CA-RDC approximates the target joint law directly and better preserves pairwise statistics. 
For \texttt{HiDream}, CA-RDC selects an average of $72.6\;(2.03)$ same-attribute pairs per returned batch, compared with $54.0\;(0.00)$ for PM.
Its audit rate is $0.305$ percentage points away from that of the oracle, compared with $1.734$ percentage points for PM.

\Cref{fig:pairwise-identity-examples-all-models} provides examples of positive and negative pairwise match decisions, and
\Cref{fig:pairwise-identity-threshold-gap} shows results for a sensitivity analysis on the cosine threshold. 
The gap between the downstream false-match rate is largest when same-attribute pairs frequently exceed the threshold while different-attribute pairs rarely do. 
As the threshold becomes sufficiently low or high, nearly every pair is classified as a (non) false-match and thus the audited rates approach one another.

\begin{figure}[tbp]
\centering
\includegraphics[width=0.99\linewidth]{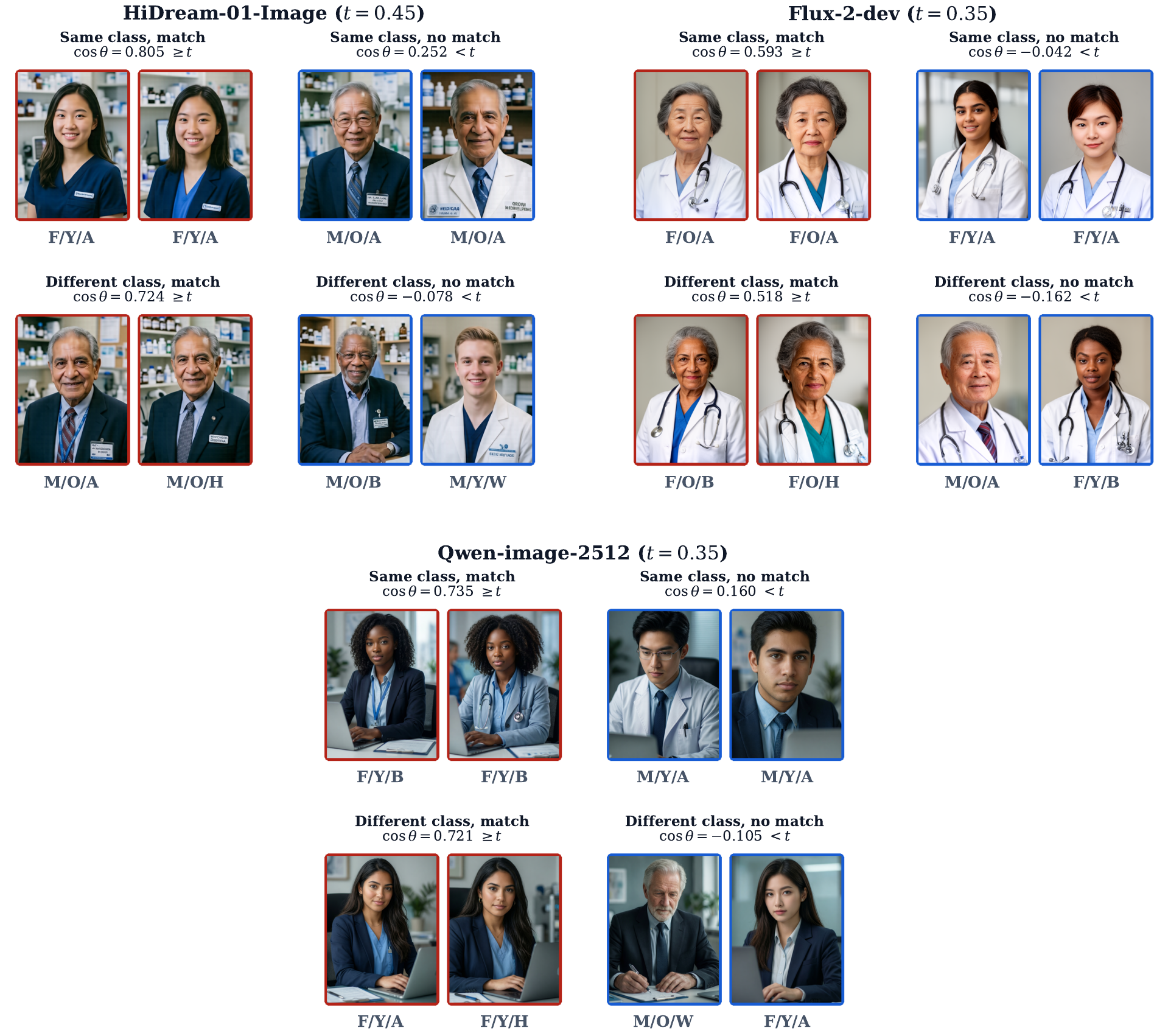}
\caption{\footnotesize
Example pairwise decisions on generated headshots, grouped by whether the annotated demographic attributes agree. The thresholds are $t=0.45$ for \texttt{HiDream} and $t=0.35$ for \texttt{Flux} and \texttt{Qwen}, as shown in the panels.}
\label{fig:pairwise-identity-examples-all-models}
\end{figure}

\begin{figure}[tbp]
\centering
\includegraphics[width=0.66\linewidth]{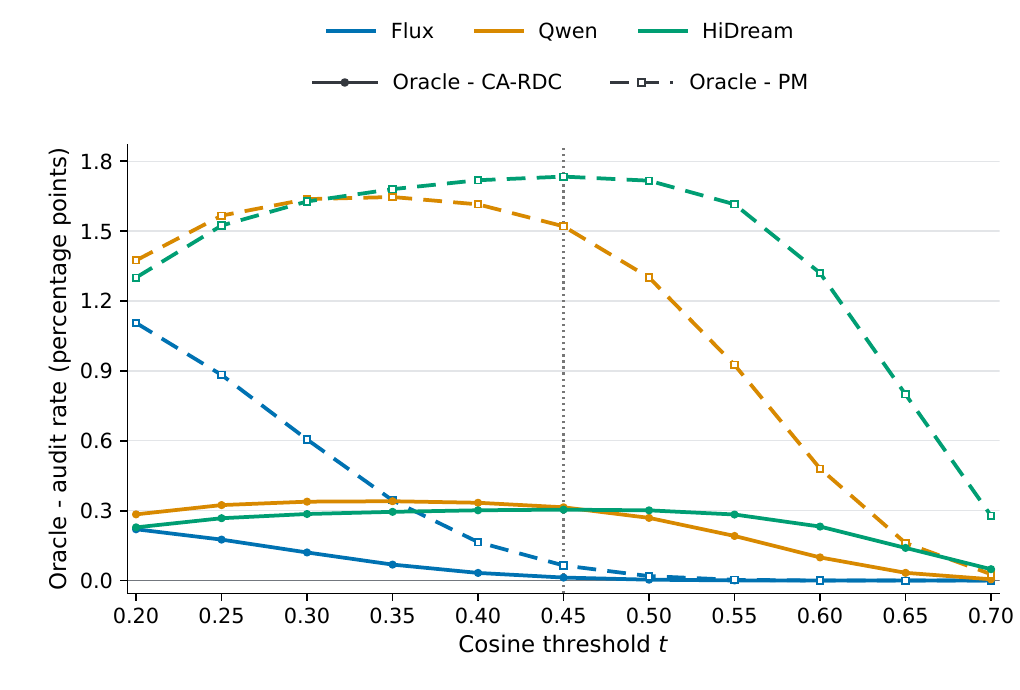}
\caption{\footnotesize
Threshold sensitivity on the cosine threshold for the pairwise audit task of \cref{sec:exp-pairwise-audit}.
The gap in the false match rate between the oracle and CA-RDC (solid) and proportion matching (dashed) are plotted across all three T2I models.
The vertical line denotes the cosine threshold $t=0.45$ reported in the main text.
The gap between the Oracle and PM procedure remains large through non-trivial intermediate ranges of the threshold. 
In contrast, CA-RDC better captures pairwise interactions in the same- and different-attribute class counts by targeting a joint law. This results in better estimates of the false-match rate. 
}
\label{fig:pairwise-identity-threshold-gap}
\end{figure}

\FloatBarrier
\subsection{Proofs}

\subsubsection{Proof of \texorpdfstring{\Cref{thm:rdc-exact-main}}{Theorem \getrefnumber{thm:rdc-exact-main}}}
\label{prf:rdc-exact-main}
\begin{proof}
Let $L_i=\sum_{j=1}^m 1\{A_j^\star=i\}$ be the attribute counts of the target attribute sequence $\vec A^\star \sim \nu$.
By definition of the feasible support (cf.~\cref{eq:feasible-support}),
\begin{equation}\label{eq:rdc-stopping-time-equivalence}
    T_\nu^\star
    =\inf\{t\ge m: \vec A^\star \in \Omega_t\}
    =\inf\{t\ge m:C_i(t)\ge L_i,\ \forall i\in[k]\}
    \eqqcolon T_{\vec L}.
\end{equation}
For each $i$ with $p_i = 0$,
the support condition $\operatorname{supp}(\nu)\subseteq(\operatorname{supp}(p))^m$
implies $L_i=0$ almost surely.
For each $i$ with $p_i>0$, we have $C_i(t)\to\infty$ almost surely.
Since there are finitely many attributes, all attribute demands are met in
finite time, so \(T_{\nu}^\star<\infty\) almost surely.
Moreover, 
at completion time \(T_\nu^\star\), RDC selects outputs whose attribute sequence is
\(\vec A^\star\) by construction. 
Therefore, its output attribute law is
\(\vec A^\star \sim \nu\), and RDC is universally exact.

\end{proof}

\subsubsection{Proof of \texorpdfstring{\Cref{thm:rdc-m1-optimal}}{Theorem \getrefnumber{thm:rdc-m1-optimal}}}
\label{prf:rdc-m1-optimal}
We first prove a lemma that generalizes the Wald identity to a conditional statement on the returned attribute sequence for universally exact algorithms.

\begin{lemma}[Conditional Wald identity]
\label{lem:conditional-wald-main}
Let a source-rate-free algorithm be universally exact for a target joint
law $\nu$ on $[k]^m$,
and let
\(\widetilde A\in[k]^m\) denote its returned attribute sequence.
If $\nu((\operatorname{supp}(p))^m)=1$ and
$\mathbb E_p[N]<\infty$, then, for every $i\in[k]$,
\begin{equation}\label{eq:conditional-wald-main}
    \mathbb E_p[C_i(N)\mid\widetilde A]
    =p_i\mathbb E_p[N\mid\widetilde A]
    \qquad\text{almost surely}.
\end{equation}
\end{lemma}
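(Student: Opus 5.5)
The plan is to obtain \cref{eq:conditional-wald-main} by differentiating the identity $\Pr_p(\widetilde A=\vec a)=\nu(\vec a)$ along a one-parameter family of source rates. Since the algorithm is universally exact, this identity holds at every admissible $p$. Since it is source-rate-free, its selection mechanism, including the stopping rule and all E-sampling, is the same for every $p$; only the law of the M-sample stream changes. Throughout, fix $i\in[k]$ with $p_i>0$. If $p_i=0$, then $C_i(N)=0$ almost surely and both sides of \cref{eq:conditional-wald-main} vanish.

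\textbf{Step 1: the tilted family and its likelihood ratio.} For $\theta\in\mathbb R$, define the exponentially tilted rates $p_j(\theta)\propto p_j e^{\theta 1\{j=i\}}$, with normalizer $Z(\theta)=1-p_i+p_ie^{\theta}$. These rates have the same support as $p$, so the support condition $\nu((\operatorname{supp}(p(\theta)))^m)=1$ still holds. Hence universal exactness gives $\Pr_{p(\theta)}(\widetilde A=\vec a)=\nu(\vec a)$ for all $\theta$. I also need that the algorithm terminates almost surely under $p(\theta)$; this follows from membership in $\mathfrak A_m(\nu)$, since the algorithm must return an output with probability one at every admissible source rate. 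A history of length $n$ with count vector $\vec c$ has $p(\theta)$-probability $p^{\vec c}e^{\theta c_i}Z(\theta)^{-n}$, and the E-sampling kernels do not depend on $\theta$. Because $N$ is a stopping time, a change of measure on $\{N<\infty\}$ then gives
\[
\nu(\vec a)=\mathbb E_p\bigl[1\{\widetilde A=\vec a\}\,\Lambda_N(\theta)\bigr],\qquad \Lambda_n(\theta)\coloneqq e^{\theta C_i(n)}Z(\theta)^{-n}.
\]

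\textbf{Step 2: differentiate at $\theta=0$ via convexity.} Since $\Lambda_N(0)=1$, for every $\theta\neq0$ we get
\[
\mathbb E_p\Bigl[1\{\widetilde A=\vec a\}\,\tfrac{\Lambda_N(\theta)-1}{\theta}\Bigr]=0.
\]
The derivative of $\Lambda_N$ at $\theta=0$ is $C_i(N)-p_iN$. The remaining task is to pass the limit $\theta\downarrow0$ inside the expectation. This is the main obstacle, since $E_p[N]<\infty$ alone does not obviously dominate the likelihood ratios. I plan to avoid needing a dominating function by using convexity. The map $\theta\mapsto\log\Lambda_N(\theta)=\theta C_i(N)-N\log Z(\theta)$ is convex, because $\log Z$ is concave in $\theta$ (it is a log-sum-exp with a negative sign). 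Hence $\Lambda_N(\theta)$ is convex in $\theta$, and the difference quotients $(\Lambda_N(\theta)-1)/\theta$ decrease monotonically as $\theta\downarrow0$, converging to $C_i(N)-p_iN$.

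These quotients have an integrable upper bound at any fixed $\theta_0>0$, because $\mathbb E_p[\Lambda_N(\theta_0)1\{\widetilde A=\vec a\}]=\nu(\vec a)\le1$. They also have an integrable lower bound, since $C_i(N)-p_iN\ge -p_iN$ and $\mathbb E_p[N]<\infty$. Monotone convergence then lets me pass to the limit and conclude
\[
\mathbb E_p\bigl[(C_i(N)-p_iN)\,1\{\widetilde A=\vec a\}\bigr]=0\qquad\text{for every }\vec a.
\]

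\textbf{Step 3: conclude.} Since $\widetilde A$ takes finitely many values, the display from Step 2, holding for every $\vec a$, is exactly the statement $\mathbb E_p[C_i(N)\mid\widetilde A]=p_i\,\mathbb E_p[N\mid\widetilde A]$ almost surely, with both conditional expectations finite because $C_i(N)\le N$ and $\mathbb E_p[N]<\infty$. Summing over $\vec a$ recovers the ordinary Wald identity as a consistency check.
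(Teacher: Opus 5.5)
\textbf{Gap in Step 2.} The interchange of limit and expectation rests on a false claim. The function $\log Z(\theta)=\log\bigl((1-p_i)+p_ie^{\theta}\bigr)$ is a log-sum-exp of affine functions of $\theta$, so it is \emph{convex}, not concave. Hence $\log\Lambda_N(\theta)=\theta C_i(N)-N\log Z(\theta)$ is concave, and $\Lambda_N$ itself is in general neither convex nor concave.

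Concretely, on a path with $C_i(N)=0$ you have $\Lambda_N=Z^{-N}$ and $\Lambda_N''(0)=Np_i\bigl((N+1)p_i-1\bigr)<0$ whenever $p_i<1/(N+1)$. There the right difference quotients \emph{increase} as $\theta\downarrow0$. On paths with $C_i(N)=N$ and $N$ large, $\Lambda_N''(0)>0$ and the quotients decrease.

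So there is no pathwise monotonicity. The quotient at $\theta_0$ is not an upper bound for the quotients at smaller $\theta$, and monotone convergence does not apply. There is also no obvious dominating function: for $\theta>0$ and $C_i(N)=N$, $\Lambda_N(\theta)=(e^{\theta}/Z(\theta))^{N}$ grows exponentially in $N$, while only $\mathbb E_p[N]<\infty$ is assumed. Since this interchange is the one nontrivial step, the proof as written is incomplete.

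\textbf{Two repairs.} The first keeps your route but uses only lower bounds and Fatou's lemma on both sides. From $e^x-1\ge x$, together with $\log Z(\theta)\le\theta$ for $\theta>0$ and $\log Z(\theta)\le 0$ for $\theta<0$, you get two bounds:
\begin{itemize}
\item $(\Lambda_N(\theta)-1)/\theta\ge C_i(N)-N\ge -N$ for $\theta>0$;
\item $(\Lambda_N(-\eta)-1)/\eta\ge -C_i(N)\ge -N$ for $\eta>0$.
\end{itemize}
Fatou along $\theta\downarrow 0$ then gives $\mathbb E_p[(C_i(N)-p_iN)1\{\widetilde A=\vec a\}]\le 0$. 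Fatou along $\theta=-\eta\uparrow 0$ gives the reverse inequality.

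The second repair is simpler: take logarithms before differentiating, which is what the paper does. For $\nu(\vec a)>0$, apply Jensen to $\mathbb E_p[\Lambda_N(\theta)\mid\widetilde A=\vec a]=1$. This gives $\theta\,\mathbb E_p[C_i(N)\mid\widetilde A=\vec a]-\mathbb E_p[N\mid\widetilde A=\vec a]\log Z(\theta)\le 0$ for all $\theta$, with equality at $\theta=0$. The left side is a differentiable function of $\theta$ with finite coefficients, maximized at $0$. So its derivative $\mathbb E_p[C_i(N)\mid\widetilde A=\vec a]-p_i\,\mathbb E_p[N\mid\widetilde A=\vec a]$ vanishes.

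The paper runs this argument over all $p'$ with the same support and then uses Lagrange stationarity on the simplex. Your one-coordinate tilt reaches the same conclusion without a multiplier. Steps 1 and 3 are fine as they stand.
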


\begin{proof}
For any $p'$ with
$\operatorname{supp}(p')=\operatorname{supp}(p)$, 
the likelihood ratio of observing a 
sequence of 
$N$ M-samples is
$\prod_{i\in \operatorname{supp}(p)}(p_i'/p_i)^{C_i(N)}$.
For a fixed $\vec a$ with
$\nu(\vec a)>0$, 
applying a change of measure on each event
$\{N=n,\widetilde A=\vec a\}$ and summing over $n$
gives
\[
    \Pr_{p'}(\widetilde A=\vec a)
    =\mathbb E_p\left[
        1\{\widetilde A=\vec a\}
        \prod_{i\in \operatorname{supp}(p)}\left(\frac{p_i'}{p_i}\right)^{C_i(N)}
      \right].
\]
Universal exactness
implies $\Pr_{p'}(\widetilde A=\vec a)
=\Pr_p(\widetilde A=\vec a)=\nu(\vec a)>0$ and therefore
\begin{equation}\label{eq:conditional-stopped-lr}
    \mathbb E_p\left[
        \prod_{i\in \operatorname{supp}(p)}\left(\frac{p_i'}{p_i}\right)^{C_i(N)}
        \middle|\widetilde A=\vec a
      \right]=1.
\end{equation}
Jensen's inequality then yields
\[
    g_{\vec a}(p')
    \coloneq \sum_{i\in \operatorname{supp}(p)}
    \mathbb E_p[C_i(N)\mid\widetilde A=\vec a]
    \cdot \log\frac{p_i'}{p_i}\le0 = g_{\vec a}(p).
\]
Thus, \(p\) maximizes the differentiable function \(g_{\vec a}(p')\) 
subject
to \(\sum_{i\in \operatorname{supp}(p)} p_i'=1\). 
Stationarity at the optimizer \(p\) then
implies that there is a constant \(\lambda_{\vec a}\) such that
\[
    \frac{\mathbb E_p[C_i(N)\mid\widetilde A=\vec a]}{p_i}
    =\lambda_{\vec a},
    \qquad i\in \operatorname{supp}(p).
\]
Since \(\sum_{i\in \operatorname{supp}(p)} C_i(N)=N\) almost surely and \(\sum_{i \in \operatorname{supp}(p)} p_i = 1\), multiplying by $p_i$ and summing over
$i\in\operatorname{supp}(p)$ identifies this constant as
\(\lambda_{\vec a}=\mathbb E_p[N\mid\widetilde A=\vec a]\).
For $i\notin \operatorname{supp}(p)$, both sides of \eqref{eq:conditional-wald-main} are exactly zero.
\end{proof}

\begin{proof}[Proof of \Cref{thm:rdc-m1-optimal}]
For any $\mathrm{Alg}_1\in\mathfrak A_1^o(\nu)$ and any $i$ with
$\nu_i>0$, \cref{lem:conditional-wald-main} gives
\[
    p_i\mathbb E_p[N\mid\widetilde A_1=i]
    =\mathbb E_p[C_i(N)\mid\widetilde A_1=i]\ge1.
\]
Averaging over $\widetilde A_1\sim\nu$ shows that its M-cost is at least
$\sum_{i:\nu_i>0}\nu_i/p_i$. 
For RDC, conditional on the target draw $A_1^\star=i$,
the stopping time is geometric with success probability $p_i$ and mean
$1/p_i$. Again, averaging over $ A^\star_1\sim\nu$ then shows
its M-cost equals the lower bound, proving the result.
\end{proof}

\subsubsection{Proof of \texorpdfstring{\Cref{thm:rdc-witness-and-upper}}{Theorem \getrefnumber{thm:rdc-witness-and-upper}}}
\label{prf:rdc-witness-and-upper}
\begin{proof}
Let $\mathrm{Alg}_m\in\mathfrak A_m^o(\nu)$.
On the event the returned 
attribute sequence of $\mathrm{Alg}_m$
is
$\widetilde A=\vec a$, 
feasibility (cf.~\cref{eq:feasible-support}) implies
$C_i(N)\ge \sum_{j=1}^m1\{a_j=i\}$. 
Hence,
\Cref{lem:conditional-wald-main} and averaging over $\widetilde A\sim\nu$
gives
\begin{equation}\label{eq:rdc-demand-lower-bound}
    \mathbb E_p[N]
    \ge
    \mathbb E_\nu
    \left[\max_{i \in \operatorname{supp}(p)}\frac{L_i}{p_i}\right],
\end{equation}
where 
$L_i=\sum_{j=1}^m1\{\widetilde A_j=i\}$ are the attribute counts associated with an output attribute sequence $\widetilde A\sim\nu$.
Taking the
infimum gives a lower bound for
$\mathcal C_m^{\mathrm{univ}}(p,\nu)$.

For RDC, let $L_i=\sum_{j=1}^m 1\{A_j^\star=i\}$ be the attribute counts of the target attribute sequence $\vec A^\star \sim \nu$.
Define $T_i=\inf\{t\ge0:C_i(t)\ge L_i\}$ for $i$ with $p_i>0$,
so that $T_i=0$ when $L_i=0$ and
$T_\nu^\star=
\max_{i \in \operatorname{supp}(p)}T_i$ almost surely.
Conditional
on $\vec L$, $T_i$ is thus a negative binomial random variable with success rates $p_i$ and number of successes $L_i$.
Consequently,
\begin{equation}\label{eq:rdc-attribute-arrival-moments}
    \mathbb E_p[T_i\mid\vec L]=\frac{L_i}{p_i},
    \qquad
    \operatorname{Var}_p(T_i\mid\vec L)
    =\frac{L_i(1-p_i)}{p_i^2}.
\end{equation}
Furthermore,
\[
    T_{\nu}^\star
    \le\max_{i \in \operatorname{supp}(p)}\frac{L_i}{p_i}
       +\left(\sum_{i \in \operatorname{supp}(p)}
          \left(T_i-\frac{L_i}{p_i}\right)^2\right)^{1/2}
\]
by the deterministic inequality \(\max_i x_i
    \le \max_i y_i+\Bigl\{\sum_i(x_i-y_i)^2\Bigr\}^{1/2}\)
for real vectors with coordinates $x_i=T_i$ and $y_i=L_i/p_i$.
Jensen's inequality and \cref{eq:rdc-attribute-arrival-moments}
gives
\[
    \mathbb E_p[T_{\nu}^\star\mid\vec L]
    \le\max_{i \in \operatorname{supp}(p)}\frac{L_i}{p_i}
       +\left(\sum_{i \in \operatorname{supp}(p)}
          \frac{L_i(1-p_i)}{p_i^2}\right)^{1/2}.
\]
Averaging over $\vec L$ and applying Jensen's inequality again yields
\[
    \mathcal C_m^{\mathrm{RDC}}(p,\nu)
    \le
    \mathbb E_\nu
       \left[\max_{i \in \operatorname{supp}(p)}\frac{L_i}{p_i}\right]
    +\left(\sum_{i \in \operatorname{supp}(p)}
       \frac{\mathbb E_\nu[L_i](1-p_i)}{p_i^2}\right)^{1/2}.
\]
Combining this with \eqref{eq:rdc-demand-lower-bound} 
and noting $\E_\nu[L_i] = \sum_{j=1}^m \nu_j(i)$
gives the final result
\begin{equation}\label{eq:rdc-general-gap}
    \mathcal C_m^{\mathrm{RDC}}(p,\nu)
       -\mathcal C_m^{\mathrm{univ}}(p,\nu)
    \le\left(
           \sum_{i\in \operatorname{supp}(p)} \left( \sum_{j=1}^m \nu_j(i) \right) \cdot  \frac{1-p_i}{p_i^2}
       \right)^{1/2}.
\end{equation}
\end{proof}

\subsubsection{Proof of \texorpdfstring{\Cref{thm:universal-exact-m-cost-lower}}{Theorem \getrefnumber{thm:universal-exact-m-cost-lower}}}
\label{prf:universal-exact-m-cost-lower}
\begin{proof}
Let $\textrm{Alg}_m\in\mathfrak A_m(\nu)$. Denote its stopping time as $N$ and its returned attribute sequence \(\widetilde A \sim \nu\).
The process $C_i(t)-p_i t$ is a martingale
with bounded increments
for
the source count filtration \(
\mathcal G_t\coloneqq\sigma\{\vec C(s):0\le s\le t\}.
\)
Doob's optional stopping theorem applied at the bounded stopping
time $N\wedge n$ gives
$\mathbb E_p[C_i(N\wedge n)]=p_i\mathbb E_p[N\wedge n]$.
Letting $n\to\infty$ and applying the monotone convergence theorem
then yields
$\mathbb E_p[C_i(N)]=p_i\mathbb E_p[N]$.

Universal exactness 
requires
$\Pr_p(\widetilde A_j=i)=\nu_j(i)$ for each $j\in[m]$, so that 
feasibility (cf.~\cref{eq:feasible-support}) implies
\[
    \sum_{j=1}^m\nu_j(i)
    =\mathbb E_p\left[\sum_{j=1}^m1\{\widetilde A_j=i\}\right]
    \le\mathbb E_p[C_i(N)]
    =p_i\mathbb E_p[N].
\]
Dividing by $p_i$ and maximizing over $i \in \operatorname{supp}(p)$ 
gives the final result
\begin{equation}\label{eq:general-nu-marginal-cost-lower}
    \mathbb E_p[N]
    \ge\max_{i \in \operatorname{supp}(p)}
       \frac{\sum_{j=1}^m\nu_j(i)}{p_i}.
\end{equation}
\end{proof}

\subsubsection{Minimal sufficiency of source counts}
\label{prf:count-minimal-sufficient}

\begin{proposition}[Minimal sufficiency of source counts]
\label{prop:count-minimal-sufficient}
Fix a deterministic sample size $n\ge m$. Under
$A_1,\ldots,A_n\overset{\mathrm{i.i.d.}}{\sim}\mathrm{Categorical}(p)$,
the count vector $\vec C(n)$ is minimal sufficient for
$p\in\operatorname{int}(\Delta^{k-1})$.
\end{proposition}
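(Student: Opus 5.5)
The plan is to use the standard likelihood-ratio characterization of minimal sufficiency (Lehmann--Scheff\'e): a statistic $T$ is minimal sufficient when, for any two samples $\vec a,\vec b\in[k]^n$, the ratio of their likelihoods does not depend on $p$ if and only if $T(\vec a)=T(\vec b)$. The argument therefore has two parts: sufficiency, then minimality.

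\emph{Sufficiency.} For $p\in\operatorname{int}(\Delta^{k-1})$ the likelihood of an attribute sequence $\vec a=(a_1,\ldots,a_n)$ is
\[
    \Pr_p(A_1=a_1,\ldots,A_n=a_n)=\prod_{i=1}^k p_i^{c_i(\vec a)},
    \qquad c_i(\vec a)=\sum_{t=1}^n 1\{a_t=i\}.
\]
This depends on $\vec a$ only through $\vec c(\vec a)$. The Fisher--Neyman factorization theorem, with $h\equiv 1$, then shows that $\vec C(n)$ is sufficient. Equivalently, conditional on $\vec C(n)=\vec c$, the sequence is uniform on the type class $T(\vec c)$ of \cref{eq:type-attribute}, and this law is free of $p$.

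\emph{Minimality.} Take two sequences $\vec a,\vec b$ with count vectors $\vec c,\vec c\,'$. Their likelihood ratio is
\[
    \prod_{i=1}^k p_i^{c_i-c_i'} .
\]
If $\vec c=\vec c\,'$, this ratio is identically $1$. For the converse, suppose the ratio is constant in $p$ over the interior of the simplex. Parametrize the interior by $p_i=e^{\theta_i}/\sum_j e^{\theta_j}$ with $\theta\in\mathbb R^k$. Since $\sum_i c_i=\sum_i c_i'=n$, the normalizing factors cancel, and the log-ratio becomes
\[
    \sum_{i=1}^k (c_i-c_i')\theta_i .
\]
A linear function of $\theta\in\mathbb R^k$ is constant only if all its coefficients vanish, so $\vec c=\vec c\,'$.

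The only delicate point is this last step: one must use the full $(k-1)$-dimensional interior, not a single ray of $p$'s, to conclude that all coefficients vanish. The softmax reparametrization handles this cleanly, because it maps onto the interior and the constraint $\sum_i(c_i-c_i')=0$ removes the normalization. Alternatively, one can fix $p_k=1-\sum_{i<k}p_i$, write the ratio as a function on an open subset of $\mathbb R^{k-1}$, and vary one coordinate at a time.
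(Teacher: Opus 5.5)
Your proof is correct and follows the same route as the paper: the factorization theorem gives sufficiency, and the likelihood-ratio characterization applied to $\prod_i p_i^{c_i-c_i'}$ gives minimality. The only difference is the final step. The paper sends a coordinate with nonzero exponent to zero, which forces the ratio to $0$ or $\infty$. Your softmax reparametrization uses $\sum_i(c_i-c_i')=0$ to cancel the normalizer and reduces the question to the vanishing of a linear form on $\mathbb{R}^k$, which is equally valid and slightly cleaner.
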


\begin{proof}
An ordered attribute sample with counts $\vec c$ has likelihood
$\prod_i p_i^{c_i}$, so the factorization theorem immediately 
gives sufficiency.
For two samples with counts $\vec c$ and $\vec c'$, their likelihood
ratio is $\prod_i p_i^{c_i-c_i'}$. This ratio is constant in $p$
if and only if $\vec c=\vec c'$, since we may otherwise let a coordinate
with nonzero exponent tend to zero while fixing the remaining coordinates
to be positive.
This shows that the ratio cannot be constant.
The likelihood-ratio characterization of minimal sufficiency
therefore yields the claim.
\end{proof}

\subsubsection{Proof of \texorpdfstring{\Cref{thm:opt-count-cond-f-curve}}{Theorem \getrefnumber{thm:opt-count-cond-f-curve}}}
\label{prf:optimal-f-curve}
\begin{proof}
Fix $t\ge m$ and condition on the feasible support $\Omega_t \equiv \Omega_m(\vec c(t))$.
On the event $\{T_\nu^\star>t\}$,
A-RDC
selects the newly feasible support
$\mathcal D_s \in \{\mathcal{D}_m,\ldots, \mathcal{D}_t\}$ with probability
$\nu(\mathcal D_s)/\alpha_t$ by the memoryless property of exponential random variables, whenever $\alpha_t>0$.
Since
$\{T_\nu^\star\le t\}=\{\vec A^\star\in\Omega_t\}$ and $\vec A^\star \sim \nu$, we have
$\Pr(T_\nu^\star \le t \mid \Omega_t) = \nu(\Omega_t)$.
Then, 
for $0<\alpha_t\le 1$
and $\vec a\in\Omega_t$, the output attribute law of A-RDC conditional on the feasible support is
\begin{equation}\label{eq:ardc-history-conditional-law}
    \Pr_p(\widehat A_t=\vec a\mid\Omega_t)
    =\nu(\vec a)+(1-\alpha_t)\frac{\nu(\vec a)}{\alpha_t}
    =\frac{\nu(\vec a)}{\alpha_t}=\nu(\cdot\mid\Omega_t).
\end{equation}
Since the feasible support depends on the source-count process only through the count vector $\vec C$, this shows \cref{eq:a-rdc-count-cond-output-label-law}.

For a realized count vector $\vec C(t)=\vec c$, feasibility implies the count-conditional output attribute law $\mu$ of any algorithm $\mathrm{Alg}_m\in\mathfrak{A}_m$ is supported
on $\Omega_m(\vec c)$.
Applying the data processing inequality 
to the indicator function $\vec a \mapsto 1\{\vec a \in \Omega_m(\vec c)\}$ then yields
\begin{equation}\label{eq:statewise-support-projection-bound}
    D_f(\mu\|\nu)
    \ge
    D_f\bigl((1,0)\|(\alpha_t,1-\alpha_t)\bigr)
    =\Psi_f(\alpha_t).
\end{equation}
When $\alpha_t>0$, the law $\mu = \nu(\cdot\mid\Omega_t)$ attains the lower
bound.
When $\alpha_t=0$, both sides of the inequality are
$f_\infty+f(0)=\Psi_f(0)$, which may be infinite.
Averaging \eqref{eq:statewise-support-projection-bound} over the randomness of the source-count process $\vec C(t)$ and taking the infimum over $\mu$
proves the lower bound of \cref{eq:count-conditional-optimal-lb}.

Finally, \eqref{eq:ardc-history-conditional-law}
shows that the count-conditional output attribute law of A-RDC is $\nu(\cdot \mid \Omega_t)$ and thus
attains this lower bound
simultaneously for each $t \ge m$.
\end{proof}

\subsubsection{Proof of \texorpdfstring{\Cref{thm:thresholded-anytime-rdc}}{Theorem \getrefnumber{thm:thresholded-anytime-rdc}}}
\label{prf:thresholded-anytime-rdc}
\begin{proof}
Since $p\in\operatorname{int}(\Delta^{k-1})$, there is almost surely
a finite time $t_0$ such that $C_i(t_0)\ge m$ for every $i\in[k]$.
At this time, $\Omega_{t_0}=[k]^m$ and $\alpha_{t_0}=1$,
so both $T_\nu^\star$ and $\tau_{f,\varepsilon}$ are bounded
above by $t_0$ and hence almost surely finite, i.e. terminates almost surely.

Observe that TA-RDC and A-RDC agree pathwise:
$\widehat A_{T_\nu^\star\wedge\tau_{f,\varepsilon}}
=\widehat A_{\tau_{f,\varepsilon}}$,
since both equal $\vec A^\star$ when
$T_\nu^\star\le\tau_{f,\varepsilon}$.
Then,
by \cref{eq:a-rdc-count-cond-output-label-law} in \cref{thm:opt-count-cond-f-curve},
TA-RDC's
output attribute law conditional on the feasible support $\Omega_{\tau_{f,\varepsilon}}$
is 
$\nu(\cdot\mid\Omega_{\tau_{f,\varepsilon}})$, and thus has divergence $\Psi_f(\alpha_{\tau_{f,\varepsilon}})$
from $\nu$. 
By convexity of $f$-divergences and construction of the certification stopping time, we then have
\[
    \Delta_f^{\mathrm{un}}(\mathrm{TA\text{-}RDC};\nu;p)
    \le \mathbb E_p[\Psi_f(\alpha_{\tau_{f,\varepsilon}})]
    \le \varepsilon.
\]
\end{proof}

\subsubsection{Proof of \texorpdfstring{\Cref{thm:first-order-uni-opt-ta-rdc}}{Theorem \getrefnumber{thm:first-order-uni-opt-ta-rdc}}}
\label{prf:first-order-uni-opt-ta-rdc}

\begin{lemma}[Bounds for a capped KL projection]
\label{lem:capped-kl-projection-bounds}
Fix $p,q\in\operatorname{int}(\Delta^{k-1})$, and write
$p_{\min}=\min_i p_i$ and $\kappa=\max_i q_i/p_i$.
For $1<R<\kappa$, let $s^\star$ be the unique minimizer of
the capped KL projection problem
\[
    \min_{\substack{s\in\Delta^{k-1}\\
                    s_i\le Rp_i,\ i\in[k]}}
    D_{\mathrm{KL}}(s\|q).
\]
Its coordinates have the form
\begin{equation}\label{eq:ta-rdc-kl-projection-form}
    s_i^\star=\min\{\lambda q_i,Rp_i\},
    \qquad
    \sum_i s_i^\star=1,
    \qquad
    \lambda>1.
\end{equation}
Define the weighted sum of the KKT multipliers
\[
    g\coloneqq
    \sum_i p_i\log\frac{\lambda q_i}{s_i^\star}.
\]
Then, for any $1\le\rho<R$ and $r\in\Delta^{k-1}$
satisfying $r_i\le\rho p_i$ for every $i$,
\begin{equation}\label{eq:ta-rdc-projection-convexity}
    D_{\mathrm{KL}}(s^\star\|q)+(R-\rho)g
    \le D_{\mathrm{KL}}(r\|q).
\end{equation}
Moreover,
\begin{equation}\label{eq:ta-rdc-projection-varentropy}
    g\ge4p_{\min}^2(\kappa-R),
    \qquad\text{and}\qquad
    \operatorname{Var}_{A\sim s^\star}
    \left(\log\frac{s_A^\star}{q_A}\right)
    \le\frac{g^2}{4p_{\min}^2}.
\end{equation}
\end{lemma}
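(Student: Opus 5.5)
The plan is to read everything off the KKT conditions of this convex program, and then use the first-order convexity inequality for $D_{\mathrm{KL}}(\cdot\|q)$.

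\emph{Existence and the form \eqref{eq:ta-rdc-kl-projection-form}.} The feasible set is compact and convex, and it is nonempty because $R>1$ makes $s=p$ feasible. The map $s\mapsto D_{\mathrm{KL}}(s\|q)$ is strictly convex, so a unique minimizer $s^\star$ exists. Slater's condition holds because $s=p$ satisfies every cap strictly, so the KKT conditions are necessary and sufficient. I will write them with a multiplier $\eta$ for $\sum_i s_i=1$ and multipliers $\mu_i\ge0$ for $s_i\le Rp_i$:
\[
\log(s_i/q_i)+1+\eta+\mu_i=0,\qquad \mu_i(s_i-Rp_i)=0.
\]
Two consequences follow.
\begin{itemize}
\item Every coordinate is positive, since the gradient of $D_{\mathrm{KL}}$ blows up at the boundary.
\item With $\lambda\coloneqq e^{-1-\eta}$ we get $s_i^\star=\lambda q_ie^{-\mu_i}\le\lambda q_i$. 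If a cap is inactive, then $\mu_i=0$ and $s_i^\star=\lambda q_i$; if it is active, then $s_i^\star=Rp_i\le\lambda q_i$. Hence $s_i^\star=\min\{\lambda q_i,Rp_i\}$.
\end{itemize}
To get $\lambda>1$, pick $i^\star$ with $q_{i^\star}/p_{i^\star}=\kappa>R$. If $\lambda\le1$, then $\sum_i s_i^\star\le\sum_i q_i$, and the inequality is strict at $i^\star$ because $s^\star_{i^\star}\le Rp_{i^\star}<q_{i^\star}$. This gives $\sum_i s_i^\star<1$, a contradiction. The multipliers are then explicit: $\mu_i=\log(\lambda q_i/s_i^\star)\ge0$, and $g=\sum_i p_i\mu_i$.

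\emph{Inequality \eqref{eq:ta-rdc-projection-convexity}.} By convexity,
\[
D_{\mathrm{KL}}(r\|q)\ge D_{\mathrm{KL}}(s^\star\|q)+\sum_i (r_i-s_i^\star)\bigl(\log(s_i^\star/q_i)+1\bigr).
\]
Substitute $\log(s_i^\star/q_i)=\log\lambda-\mu_i$ and use $\sum_i(r_i-s_i^\star)=0$; the linear term becomes $\sum_i(s_i^\star-r_i)\mu_i$. Now $\mu_i>0$ only where the cap is active, i.e.\ $s_i^\star=Rp_i$. There, $s_i^\star-r_i\ge(R-\rho)p_i$ by the hypothesis $r_i\le\rho p_i$. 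Summing gives the term $(R-\rho)g$.

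\emph{Variance bound in \eqref{eq:ta-rdc-projection-varentropy}.} Since $\log(s_A^\star/q_A)=\log\lambda-\mu_A$, the variance equals $\operatorname{Var}_{s^\star}(\mu_A)$. Each $\mu_A$ lies in $[0,M]$ with $M=\max_i\mu_i$, and $p_{\min}M\le\sum_i p_i\mu_i=g$. Popoviciu's inequality then gives $\operatorname{Var}\le M^2/4\le g^2/(4p_{\min}^2)$.

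\emph{Lower bound on $g$, the expected obstacle.} The first attempt is $g\ge p_{i^\star}\mu_{i^\star}$. Here $\mu_{i^\star}=\log(\lambda\kappa/R)>\log(\kappa/R)$, and $\log x\ge(x-1)/x$ applied to $x=\kappa/R$ gives $\log(\kappa/R)\ge(\kappa-R)/\kappa$. Since $\kappa\le1/p_{\min}$, this yields $g\ge p_{\min}^2(\kappa-R)$, which is short of the stated constant by a factor $4$.

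To recover the $4$, I would first try to exploit $\lambda>1$ quantitatively. Mass balance gives $\sum_{\text{active}}(\lambda q_i-Rp_i)\ge$ the amount by which the capped coordinates fall short of $q$. This should push $\log\lambda$ away from $0$ and add to $\log(\kappa/R)$.

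If that is insufficient, the second route is to sharpen $\kappa\le1/p_{\min}$. Because $q$ is interior, $\kappa\le(1-(k-1)\min_j q_j)/p_{\min}$, and $p_{\min}\le1/2$. Alternatively, one can sum over all active coordinates rather than only $i^\star$. Getting exactly the factor $4$ is the step where I would expect the constant-tracking to be delicate.
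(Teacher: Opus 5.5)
\paragraph{Verdict.} Your treatment of the KKT form, of $\lambda>1$, of inequality \eqref{eq:ta-rdc-projection-convexity}, and of the variance bound is correct. For \eqref{eq:ta-rdc-projection-convexity} you use first-order convexity plus complementary slackness. For the variance bound you note that $\mu_A$ ranges over an interval of length at most $g/p_{\min}$ and apply Popoviciu. All of this is essentially the paper's argument.

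\paragraph{The gap.} It is in the bound $g\ge 4p_{\min}^2(\kappa-R)$. You only prove $g\ge p_{\min}^2(\kappa-R)$ and leave the factor $4$ to two unexecuted ideas. The second idea cannot work. Take $k=2$, $p=(0.1,0.9)$, $q=(1/2,1/2)$. Then $\kappa=5$ is attained at $j=1$, only coordinate $1$ is capped, and $g=0.1\log\frac{1-0.1R}{0.1R}\approx 0.04(\kappa-R)$ as $R\uparrow\kappa$, so the lemma is tight to first order. Any estimate that uses only $\lambda>1$ yields at best $p_j\log(\kappa/R)\approx 0.02(\kappa-R)$, no matter how sharply you bound $\kappa$. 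There is also no second active coordinate to sum over. Likewise, the mass-balance inequality you write, $\sum_{\text{active}}(\lambda q_i-Rp_i)\ge\sum_{\text{active}}(q_i-Rp_i)$, just restates $\lambda\ge 1$. The size of $\lambda$ has to enter quantitatively.

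\paragraph{The missing step.} Let $j$ be the index attaining $\kappa$. Bound \emph{every} coordinate other than $j$ by $s_i^\star\le\lambda q_i$; this holds for capped coordinates too, since $s_i^\star=\min\{\lambda q_i,Rp_i\}$. Summing over $i\neq j$ gives $1-Rp_j\le\lambda(1-q_j)$, that is, $\lambda\ge(1-Rp_j)/(1-q_j)$. Therefore
\[
\mu_j=\log\frac{\lambda q_j}{Rp_j}\ge\log\frac{q_j(1-Rp_j)}{Rp_j(1-q_j)}=\int_{Rp_j}^{q_j}\frac{du}{u(1-u)}\ge 4(q_j-Rp_j)=4p_j(\kappa-R),
\]
where the last inequality uses $u(1-u)\le 1/4$. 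Hence $g\ge p_j\mu_j\ge 4p_j^2(\kappa-R)\ge 4p_{\min}^2(\kappa-R)$.

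The factor $4$ comes from this log-odds form; replacing $\log(\lambda\kappa/R)$ by $\log(\kappa/R)$ discards it. The constant matters downstream. In the proof of \cref{thm:first-order-uni-opt-ta-rdc}, the case $\kappa-R>1/(2p_{\min}\sqrt m)$ needs $\frac{\sqrt m}{2p_{\min}}g>\log 2$. Your constant would only bound this quantity below by $1/4$.
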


\begin{proof}
A minimizer exists by compactness and continuity, and is unique
by strict convexity of the KL objective.
The KKT conditions \citep[see, e.g.,][]{boyd2004convex}
give the form in \eqref{eq:ta-rdc-kl-projection-form},
with $\lambda>1$ because $q$ is infeasible for $1< R <\kappa$.

Write
$\eta_i=\log(\lambda q_i/s_i^\star)\ge0$,
so that $g=\sum_i p_i\eta_i$.
For the comparison bound, convexity gives
\begin{align*}
    D_{\mathrm{KL}}(r\|q)
    &\ge D_{\mathrm{KL}}(s^\star\|q)
       +\sum_i\log\frac{s_i^\star}{q_i}(r_i-s_i^\star)\\
    &=D_{\mathrm{KL}}(s^\star\|q)
       +\sum_i\eta_i(s_i^\star-r_i)\\
    &=D_{\mathrm{KL}}(s^\star\|q)
       +\sum_i\eta_i(Rp_i-r_i)\\
    &\ge D_{\mathrm{KL}}(s^\star\|q)+(R-\rho)g,
\end{align*}
where the first equality uses $\sum_i(r_i-s_i^\star)=0$;
the second equality uses
$\eta_i(s_i^\star-Rp_i)=0$ for every $i$;
and the final inequality uses
$r_i\le\rho p_i$ and $\eta_i\ge0$.
This proves \eqref{eq:ta-rdc-projection-convexity}.

Choose $j$ with $q_j/p_j=\kappa$.
Since $Rp_j<q_j<\lambda q_j$, we have $s_j^\star=Rp_j$.
Moreover,
\[
    1-Rp_j=\sum_{i\ne j}s_i^\star
    \le\lambda(1-q_j).
\]
Thus, using $0<Rp_j<q_j<1$,
\begin{align*}
    \eta_j
    &\ge
    \log\frac{q_j(1-Rp_j)}{Rp_j(1-q_j)}\\
    &=\int_{Rp_j}^{q_j}\frac{du}{u(1-u)}\\
    &\ge4(q_j-Rp_j)
     =4p_j(\kappa-R),
\end{align*}
where the last inequality follows from $u(1-u)\le1/4$.
Consequently,
\[
    g\ge p_j\eta_j
    \ge4p_j^2(\kappa-R)
    \ge4p_{\min}^2(\kappa-R),
\]
proving the first assertion of
\eqref{eq:ta-rdc-projection-varentropy}.

For the variance bound, nonnegativity of the $\eta_i$ gives
\[
    0\le\eta_i\le\frac{g}{p_i}\le\frac{g}{p_{\min}}
    \qquad\text{for every }i.
\]
Since
\[
    \log\frac{s_A^\star}{q_A}=\log\lambda-\eta_A,
\]
this random variable lies in an interval of length at most
$g/p_{\min}$. Its variance is therefore at most
$g^2/(4p_{\min}^2)$, proving the second assertion of
\eqref{eq:ta-rdc-projection-varentropy}.
\end{proof}

\begin{proof}[Proof of \texorpdfstring{\Cref{thm:first-order-uni-opt-ta-rdc}}]
\emph{A lower bound on the universal source-rate-free M-cost.}
Fix $m\ge1$ and $\varepsilon\ge0$.
Let $\textrm{Alg}_m\in\mathfrak{A}_m^o$ be universally
$(\mathrm{KL},\varepsilon)$-approximate, and let
$P_{\textrm{Alg}_m}(\widetilde A)$ be its output attribute law
(cf.~\cref{eq:output-label-law}).
Write $N$ for its stopping time.
We may assume $\mathbb E_p[N]<\infty$, since otherwise
the lower bound is immediate.
Define its average marginal attribute probabilities as
\[
    c_i=\frac1m\sum_{j=1}^m
    P_{\textrm{Alg}_m}(\widetilde A_j=i),
    \qquad i\in[k].
\]
Further denote
$p_{\min}=\min_i p_i$,
$\rho=\rho_{p,q}^\star(\varepsilon/m)$, and
$\kappa=\max_i q_i/p_i$.
By definition, $1\le\rho\le\kappa$.
The KL chain rule and convexity of KL divergence imply
\begin{align}\label{eq:marginal-attribute-countkl-divergence}
    \varepsilon
    &\ge D_{\mathrm{KL}}\left(
       P_{\textrm{Alg}_m}(\widetilde A)
       \middle\|q^{\otimes m}\right) \notag\\
    &\ge\sum_{j=1}^m
       D_{\mathrm{KL}}\left(
         P_{\textrm{Alg}_m}(\widetilde A_j)
         \middle\|q\right)
    \ge mD_{\mathrm{KL}}(c\|q).
\end{align}
Feasibility and Wald's identity then give
\[
    mc_i
    =\mathbb E_p\left[
       \sum_{j=1}^m1\{\widetilde A_j=i\}\right]
    \le p_i\mathbb E_p[N],
    \qquad i\in[k];
\]
see, also, the proof of
\Cref{thm:universal-exact-m-cost-lower}.
Hence, by \cref{eq:marginal-attribute-countkl-divergence},
$\mathbb E_p[N]\ge m\max_i(c_i/p_i)
\ge m\rho_{p,q}^\star(\varepsilon/m)$,
and taking the infimum gives
\begin{equation}\label{eq:ta-rdc-universal-marginal-kl-converse}
    \mathcal C_{m,\varepsilon}^{\mathrm{univ}}(p,q)
    \ge m\rho_{p,q}^\star(\varepsilon/m).
\end{equation}

\emph{Completion time bounds for multinomial demands.}
For an arbitrary $s\in\Delta^{k-1}$, let
$\vec L\sim\mathrm{Mult}(m,s)$ be an auxiliary demand
sampled independently of the source-count process $\vec C(t)$
(cf.~\cref{eq:feasible-support}).
Define the attribute completion times
$T_i=\inf\{t\ge0:C_i(t)\ge L_i\}$
(as in the proof of \cref{thm:rdc-witness-and-upper})
and the completion time $T_s\coloneqq\max_i T_i$.
Then, from \eqref{eq:rdc-attribute-arrival-moments}
and the law of total variance,
\[
    \mathbb E_{p,s}[T_i]=\frac{ms_i}{p_i},
    \qquad
    \operatorname{Var}_{p,s}(T_i)
    =\frac{ms_i(2-p_i-s_i)}{p_i^2}.
\]
Furthermore, since $T_s=\max_i T_i$,
\[
    \left(T_s-m\max_i\frac{s_i}{p_i}\right)_+^2
    \le\sum_i\left(T_i-\frac{ms_i}{p_i}\right)^2,
\]
and taking expectations gives
\begin{equation}\label{eq:ta-rdc-auxiliary-completion-second-moment}
    \mathbb E_{p,s}\left[
       \left(T_s-m\max_i\frac{s_i}{p_i}\right)_+^2
    \right]
    \le m\sum_i\frac{s_i(2-p_i-s_i)}{p_i^2}
    \le\frac{2m}{p_{\min}^2}.
\end{equation}
In particular, the Cauchy--Schwarz inequality gives
\begin{align}\label{eq:ta-rdc-auxiliary-completion-mean}
    \mathbb E_{p,s}[T_s]
    &\le m\max_i\frac{s_i}{p_i}
       +\mathbb E_{p,s}\left[
          \left(T_s-m\max_i\frac{s_i}{p_i}\right)_+
        \right] \notag\\
    &\le m\max_i\frac{s_i}{p_i}
       +\sqrt{\mathbb E_{p,s}\left[
          \left(T_s-m\max_i\frac{s_i}{p_i}\right)_+^2
        \right]} \notag\\
    &\le m\max_i\frac{s_i}{p_i}
       +\frac{\sqrt{2m}}{p_{\min}}.
\end{align}

Define the source-count stopping time
$\sigma_s=\inf\{t\ge m:s^{\otimes m}(\Omega_t)\ge1/2\}$.
Conditional on the entire source-count process
$\mathcal G_\infty\coloneqq\sigma\{\vec C(u):u\ge0\}$,
the completion time has distribution function
\[
    \Pr_{p,s}(T_s\le t\mid\mathcal G_\infty)
    =s^{\otimes m}(\Omega_t),
    \qquad t\ge m.
\]
Then, by definition of $\sigma_s$ and since $T_s\ge m$,
\[
    \Pr_{p,s}(T_s\ge\sigma_s\mid\mathcal G_\infty)
    \ge\frac12.
\]
It follows that
\[
    \frac12
    \left(\sigma_s-m\max_i\frac{s_i}{p_i}\right)_+^2
    \le
    \mathbb E_{p,s}\left[
       \left(T_s-m\max_i\frac{s_i}{p_i}\right)_+^2
       \middle|\mathcal G_\infty
    \right].
\]
Taking expectations, using Cauchy--Schwarz, and using
\eqref{eq:ta-rdc-auxiliary-completion-second-moment} yields
\begin{equation}\label{eq:ta-rdc-auxiliary-median-time}
    \mathbb E_p[\sigma_s]
    \le m\max_i\frac{s_i}{p_i}
       +\frac{2\sqrt m}{p_{\min}}.
\end{equation}

\emph{The upper bound for TA-RDC.}
First suppose
$\kappa-\rho\le3/(2p_{\min}\sqrt m)$.
TA-RDC stops no later than RDC, so applying
\eqref{eq:ta-rdc-auxiliary-completion-mean} with $s=q$ gives
\begin{equation}\label{eq:ta-rdc-near-exact-branch}
\begin{split}
    \mathcal C_{m,\varepsilon}^{\mathrm{TA\text{-}RDC}}(p,q)
    &\le m\kappa+\frac{\sqrt{2m}}{p_{\min}}\\
    &\le m\rho+
       \left(\frac32+\sqrt2\right)\frac{\sqrt m}{p_{\min}}
     <m\rho+\frac{3\sqrt m}{p_{\min}}.
\end{split}
\end{equation}

Suppose now that
$\kappa-\rho>3/(2p_{\min}\sqrt m)$, and set
\[
    R=\rho+\frac1{p_{\min}\sqrt m},
    \qquad 1<R<\kappa.
\]
Let $s=s^\star$ and $g$ be as in
\cref{lem:capped-kl-projection-bounds}.
An optimizer $r^\star$ in the definition of
$\rho=\rho_{p,q}^\star(\varepsilon/m)$ satisfies
$r_i^\star\le\rho p_i$ and
$D_{\mathrm{KL}}(r^\star\|q)\le\varepsilon/m$.
Thus, \eqref{eq:ta-rdc-projection-convexity} gives
\begin{equation}\label{eq:ta-rdc-projection-kl-slack}
    mD_{\mathrm{KL}}(s\|q)
    \le\varepsilon-m(R-\rho)g
    =\varepsilon-\frac{\sqrt m}{p_{\min}}g.
\end{equation}

Fix a source state at time $t$ such that
$\beta=s^{\otimes m}(\Omega_t)\ge1/2$.
For a sequence drawn from $s^{\otimes m}$, let
$Z=\log(ds^{\otimes m}/d\nu)$, where $\nu=q^{\otimes m}$.
Both product laws are positive on $[k]^m$.
Conditional Jensen gives
\[
    \alpha_t
    =\beta\,\mathbb E_{s^{\otimes m}}[e^{-Z}\mid\Omega_t]
    \ge\beta\exp\{
       -\mathbb E_{s^{\otimes m}}[Z\mid\Omega_t]\}.
\]
By Cauchy--Schwarz,
\begin{align*}
    \mathbb E_{s^{\otimes m}}[Z\mid\Omega_t]
       -\mathbb E_{s^{\otimes m}}[Z]
    &=\frac{
       \operatorname{Cov}_{s^{\otimes m}}(Z,1_{\Omega_t})
       }{\beta}\\
    &\le\sqrt{\operatorname{Var}_{s^{\otimes m}}(Z)}
          \sqrt{\frac{1-\beta}{\beta}}
     \le\sqrt{\operatorname{Var}_{s^{\otimes m}}(Z)}.
\end{align*}
Since
$\mathbb E_{s^{\otimes m}}[Z]=mD_{\mathrm{KL}}(s\|q)$
and
$\operatorname{Var}_{s^{\otimes m}}(Z)
=m\operatorname{Var}_{A\sim s}(\log(s_A/q_A))$,
we conclude that
\begin{align*}
    -\log\alpha_t
    &\le mD_{\mathrm{KL}}(s\|q)
       +\sqrt{m\operatorname{Var}_{A\sim s}
          \left(\log\frac{s_A}{q_A}\right)}
       +\log2\\
    &\le\varepsilon-\frac{\sqrt m}{2p_{\min}}g+\log2,
\end{align*}
where the last line uses
\eqref{eq:ta-rdc-projection-varentropy}
in \cref{lem:capped-kl-projection-bounds}
and \eqref{eq:ta-rdc-projection-kl-slack}.
Our case assumption gives
$\kappa-R>1/(2p_{\min}\sqrt m)$, so
\eqref{eq:ta-rdc-projection-varentropy} implies
\[
    \frac{\sqrt m}{2p_{\min}}g>1>\log2.
\]
Thus every state with $s^{\otimes m}(\Omega_t)\ge1/2$
satisfies the TA-RDC certificate. In particular,
\[
    N_{\mathrm{KL},\varepsilon}^{\mathrm{TA\text{-}RDC}}
    \le\tau_{\mathrm{KL},\varepsilon}\le\sigma_s.
\]
Because $R<\kappa$, at least one coordinate of $s$ attains
its cap, so $\max_i(s_i/p_i)=R$.
Applying \eqref{eq:ta-rdc-auxiliary-median-time} therefore gives
\begin{equation}\label{eq:ta-rdc-projection-branch}
    \mathcal C_{m,\varepsilon}^{\mathrm{TA\text{-}RDC}}(p,q)
    \le mR+\frac{2\sqrt m}{p_{\min}}
    =m\rho+\frac{3\sqrt m}{p_{\min}}.
\end{equation}
Together, \eqref{eq:ta-rdc-universal-marginal-kl-converse}, \eqref{eq:ta-rdc-near-exact-branch}, and
\eqref{eq:ta-rdc-projection-branch} yield,
for $\varepsilon_m/m\to\delta$,
\[
\begin{split}
    \rho_{p,q}^\star(\varepsilon_m/m)
    &\le
    \frac{\mathcal C_{m,\varepsilon_m}^{\mathrm{univ}}(p,q)}{m}
    \le
    \frac{
       \mathcal C_{m,\varepsilon_m}^{\mathrm{TA\text{-}RDC}}(p,q)
    }{m}\\
    &\le\rho_{p,q}^\star(\varepsilon_m/m)
       +\frac{3}{p_{\min}\sqrt m} \\
    & \le \frac{
       \mathcal C_{m,\varepsilon_m}^{\mathrm{univ}}(p,q)
    }{m}+\frac{3}{p_{\min}\sqrt m}.
\end{split}
\]
The last inequality
is the finite-sample bound on the M-cost of TA-RDC in \cref{eq:ta-rdc-finite-m-additive-bound}.
The sandwich bound of the first and second inequality show the first-order asymptotic limits in \cref{eq:first-order-opt-ta-rdc}.
Finally, 
since $ \mathcal C_{m,\varepsilon}^{\mathrm{univ}}(p,q) \ge m$ for any post-processing algorithm $\textrm{Alg}_m\in \mathfrak{A}_{m}$,
\[
    1\le
    \frac{
       \mathcal C_{m,\varepsilon}^{\mathrm{TA\text{-}RDC}}(p,q)
    }{
       \mathcal C_{m,\varepsilon}^{\mathrm{univ}}(p,q)
    }
    \le1+\frac{3}{p_{\min}\sqrt m}
\]
holds uniformly in $\varepsilon\ge0$.
Taking the supremum over $\varepsilon$ proves the uniform relative
optimality statement of \cref{eq:ta-rdc-uniform-first-order-optimality}.
\end{proof}

\subsubsection{Proof of \texorpdfstring{\cref{thm:no-capped-budget-uni-exactness}}{Theorem \getrefnumber{thm:no-capped-budget-uni-exactness}}}\label{prf:no-capped-budget-uni-exactness}
\begin{proof}
    Denote the set in the statement by $\mathcal E_n$. Fix an interior source rate $p_0 \in \operatorname{int}(\Delta^{k-1})$. The assumed bound gives $\Pr_{p_0}(N\leq n)=1$. 
    Collect all E randomization in a random seed $U$ whose law does not depend on $p$. In the fixed attribute conditional mixture family, the likelihood ratio of the first $n$ M-samples and $U$ under an interior $p$ relative to $p_0$ is
\[
\frac{d\Pr_p}{d\Pr_{p_0}}(Y_{1:n},U)
=
\prod_{t=1}^n\frac{p_{A_t}}{p_{0,A_t}}
=
\prod_{i=1}^k\left(\frac{p_i}{p_{0,i}}\right)^{C_i(n)}.
\]
This likelihood ratio is strictly positive and finite. Since $N$ is a stopping time, $\{N>n\}$ is determined by the first $n$ M-samples and the E randomization used by then. 
It follows that $\Pr_p(N\leq n)=1$ for every interior $p$. Since $N\geq m$, necessarily $n\geq m$. 
Continue drawing unused M-samples after the algorithm stops until time $n$, without changing its output, and let $\vec C \equiv \vec C(n)$ be the count vector of these $n$ samples. 
For $\vec w\in[k]^n$ and $\vec a\in[k]^m$, put
\[
R_{\vec w}(\vec a)
=
\Pr(\widetilde{A}=\vec a\mid A_{1:n}=\vec w).
\]
This probability integrates over the fixed attribute conditional output laws and the algorithm's E randomization, so it does not depend on $p$. Conditional on $\vec C=\vec c$, all attribute sequences with count vector $\vec c$ are equally likely. Therefore
\[
Q_{\vec c}(\vec a)
=
\Pr_p(\widetilde{A}=\vec a\mid\vec C=\vec c)
=
\frac{1}{M_{\vec c}}
\sum_{\substack{\vec w\in[k]^n\\ C(\vec w)=\vec c}}
R_{\vec w}(\vec a),
\qquad
M_{\vec c}=\frac{n!}{\prod_{i=1}^k c_i!},
\]
and $Q_{\vec c}$ does not depend on $p$. 

Write $p^{\vec c}=\prod_i p_i^{c_i}$.
For every $\vec a\in[k]^m$,
\[
F_{\vec a}(p)
=
\Pr_p(\widetilde{A}=\vec a)-\nu(\vec a)
=
\sum_{|\vec c|=n}
M_{\vec c}p^{\vec c}
\left(Q_{\vec c}(\vec a)-\nu(\vec a)\right),
\]
where the last equality uses $\sum_{|\vec c|=n}M_{\vec c}p^{\vec c}=(\sum_i p_i)^n=1$. 
After substituting $p_k=1-\sum_{i=1}^{k-1}p_i$, each $F_{\vec a}$ is a polynomial in $p_1,\ldots,p_{k-1}$. Suppose $\mathcal E_n$ had positive $(k-1)$ dimensional Lebesgue measure. Each $F_{\vec a}$ would then vanish on a set of positive measure. Since the zero set of a nonzero real polynomial has Lebesgue measure zero, every $F_{\vec a}$ must vanish identically on the open simplex. Hence
\[
\mathbb E_p\left[
Q_{\vec C}(\vec a)-\nu(\vec a)
\right]
=0
\]
for every interior $p$ and every $\vec a\in[k]^m$. The family $\vec C\sim\Mult(n,p)$ is complete, so
\[
Q_{\vec c}(\vec a)=\nu(\vec a)
\]
for every $\vec c$ with $|\vec c|=n$ and every $\vec a\in[k]^m$. 
Here, every such $\vec c$ has positive probability under an interior $p$. But when $\vec c=n\vec e_i$, every available attribute is $i$, and feasibility forces $Q_{n\vec e_i}=\delta_{(i,\ldots,i)}$. 
Taking two different attributes would make the fixed law $\nu$ equal to two different point masses. This is impossible because $k\geq2$. 
Therefore, $\mathcal E_n$ has measure zero. 
\end{proof}

\end{document}